\documentclass[12pt, a4paper]{article}
\usepackage[utf8]{inputenc}
\usepackage[T1]{fontenc}
\usepackage[english]{babel}
\usepackage{textcomp}
\usepackage{amsmath, amssymb, amsthm, mathrsfs}
\usepackage{stmaryrd}
\usepackage[unicode, bookmarksnumbered]{hyperref}
\usepackage{bookmark}
\hypersetup{bookmarksopen=true, bookmarksopenlevel=2}

\newcommand{\FinTrace}{\mathsf{FinTrace}}
\newcommand{\InfTrace}{\mathsf{InfTrace}}
\newcommand{\Ty}{\mathsf{Ty}}
\newcommand{\Prop}{\mathsf{Prop}}
\newcommand{\restrict}{\mathsf{restrict}}
\newcommand{\Type}{\mathsf{Type}}
\newcommand{\TypeL}{\mathsf{Type}_\ell}
\newcommand{\TypeLi}[1]{\mathsf{Type}_{\ell,#1}}
\newcommand{\Sub}{\mathsf{Sub}}
\newcommand{\Ext}{\mathsf{Ext}}
\newcommand{\Extf}{\mathsf{Ext}_f}
\newcommand{\Tm}{\mathsf{Tm}}
\newcommand{\Uc}{\mathsf{Uc}}
\newcommand{\State}{\mathsf{State}}
\newcommand{\Event}{\mathsf{Event}}
\newcommand{\Nat}{\mathsf{Nat}}
\newcommand{\nil}{\mathsf{nil}}
\newcommand{\step}{\mathsf{step}}
\newcommand{\head}{\mathsf{head}}
\newcommand{\tail}{\mathsf{tail}}
\newcommand{\refl}{\mathsf{refl}}
\newcommand{\prefix}{\mathsf{prefix}}
\newcommand{\Paths}{\mathsf{Paths}}
\newcommand{\Step}{\mathsf{Step}}
\newcommand{\Ux}{\mathsf{U}}
\newcommand{\Id}{\mathsf{Id}}
\newcommand{\Bool}{\mathsf{Bool}}
\newcommand{\subsubsubsection}[1]{\paragraph{#1}\mbox{}\par\noindent\ignorespaces}

\newtheorem{theorem}{Theorem}[section]
\newtheorem{lemma}[theorem]{Lemma}
\newtheorem{corollary}[theorem]{Corollary}
\newtheorem{definition}[theorem]{Definition}

\title{NM-DEKL$^3_\infty$: A Three-Layer Non-Monotone Evolving Dependent Type Logic}
\author{Peng Chen}
\date{\today}

\begin{document}

\maketitle

\begin{abstract}
We present a new dependent type system, NM-DEKL$^3_\infty$ (Non-Monotone Dependent Knowledge-Enhanced Logic), for formalising evolving knowledge in dynamic environments. The system uses a three-layer architecture separating a computational layer, a constructive knowledge layer, and a propositional knowledge layer. We define its syntax and semantics and establish Soundness and Equational Completeness; we construct a syntactic model and prove that it is initial in the category of models, from which equational completeness follows. We also give an embedding into the $\mu$-calculus and a strict expressiveness inclusion (including the expressibility of non-bisimulation-invariant properties).
\end{abstract}

\clearpage
\tableofcontents
\clearpage

\section{Introduction}
NM-DEKL$^3_\infty$ models evolving knowledge in dynamic environments around \textbf{three layers} ($\Uc$/$\TypeL$/$\Prop$), \textbf{dual traces} (finite/infinite), \textbf{non-monotonicity} (presheaf contravariance), and a \textbf{computational kernel} ($\Uc$ independent of $\TypeL$/$\Prop$). Derivation is construction: $\Gamma\vdash k:K_f(\tau)$ means there is a knowledge witness $k$; non-monotonicity is captured by the functor laws of restriction and presheaf contravariance (the definition of failure and its relation to non-monotonicity are in Definition~\ref{def:failure}, Theorem~\ref{thm:nonmono-iff-failure}, and Section~\ref{sec:semantics-nonmono}). The paper gives syntax and categorical semantics, Soundness/equational completeness/initiality, and embedding and expressiveness results relative to the $\mu$-calculus; the system is built on dependent type theory, categorical semantics (CwF, presheaf), and the modal $\mu$-calculus\cite{martinlof84,dybjer96,awodey09,kozen83}.

In the landscape of type theory, NM-DEKL$^3_\infty$ differs from static MLTT/HoTT: it combines dependent types with \emph{knowledge evolution} and \emph{non-monotonic reasoning}, formalising the dynamic process in which knowledge can be restricted or can fail as the trace extends. Applications include formalising dynamic knowledge reasoning, causal and counterfactual reasoning, and modelling under uncertainty or knowledge update (e.g. medical or legal reasoning).

\subsection{Relation to Existing Systems}
For dependent type theory and categorical semantics we rely on Martin-L\"of type theory\cite{martinlof84}, CwF and internal type theory semantics\cite{dybjer96,pitts00}, category theory\cite{awodey09}, and type-theoretic semantics\cite{streicher91}; for propositional fixed points and modality we refer to the $\mu$-calculus\cite{kozen83,bradfield07}. Compared with \textbf{MLTT/HoTT}: the latter focus on static types and (homotopical) equality, with models as static categories or $\infty$-groupoids\cite{hott13}, whereas NM-DEKL$^3_\infty$ uses presheaf and $\restrict$ to model a dynamic, non-monotonic constructive layer. Compared with the \textbf{$\mu$-calculus}: we give an embedding of the Prop layer into the $\mu$-calculus and a strict expressiveness inclusion (including non-bisimulation-invariant properties); see the embedding and expressiveness sections. Compared with \textbf{LTL/CTL}: temporal and path quantification embed into the Prop layer with semantics preserved\cite{pnueli77,emerson81,clarke86}; see the LTL/CTL embedding theorem.

\section{Syntax of NM-DEKL$^3_\infty$}

\paragraph{Notational Conventions}
We use the following symbols consistently throughout (avoid mixing with other notations):
\begin{itemize}
  \item \textbf{Constructive layer universe} $\TypeL$ (with levels $\TypeLi{i}$); we do not use Type$_\ell$, Type\_L, etc.
  \item \textbf{Constructive knowledge fibre} $K_f$: syntactically a type family $\FinTrace\to\TypeL$, semantically a presheaf $K_f:\mathcal{T}_f^{\mathrm{op}}\to\mathcal{C}$.
  \item \textbf{Trajectory extension type} $\Extf(\tau,\tau')$; \textbf{restriction term constructor} $\restrict(\epsilon,k)$: restricts a term $k$ of $K_f(\tau')$ along extension $\epsilon:\Extf(\tau,\tau')$ to a term of $K_f(\tau)$. $\Extf$ and $\restrict$ together express non-monotonicity (see Section~\ref{sec:semantics-nonmono}).
\end{itemize}

\medskip
\noindent\textbf{Symbol table (frequently used)}:
$\Uc$/$\Uc_i$ computational universes;
$\TypeL$/$\TypeLi{i}$ constructive universes;
$\Prop$ propositional layer;
$\FinTrace$/$\InfTrace$ finite/infinite trace;
$\State$, $\Event$, $\Nat$ state, event, natural numbers;
$\Step$ state transition;
$K_f$, $K_\infty$ knowledge fibres;
$\Extf$, $\restrict$ trace extension and restriction;
$\llbracket -\rrbracket$ semantic interpretation;
$\equiv$ definitional equality.
\medskip

Formal definitions follow; stratification is in Section~\ref{sec:stratification}.

\subsection{Three Universes and Stratification}

The system has three layers:
\[
\Uc_0 : \Uc_1 : \Uc_2 : \cdots
\quad \text{and} \quad
\TypeLi{0} : \TypeLi{1} : \TypeLi{2} : \cdots
\quad \text{and} \quad
\Prop : \TypeLi{0}
\]
The computational layer $\Uc$ does not depend on $\TypeL$ or $\Prop$, ensuring an independent computational kernel (see \emph{stratification} in Section~\ref{sec:stratification}).

\subsection{Computational Layer $\Uc$ and Traces}

Basic types of the computational layer:
\[
State : \Uc_0, \quad Event : \Uc_0, \quad Nat : \Uc_0
\]
denoting state, event, and natural numbers.

The state transition function $\mathsf{Step}$ is
\[
\Step : \State \to \Event \to \State \to \Uc_0
\]
(transition triggered by an event).

\paragraph{Basic Types}

\begin{align*}
\State : \Uc_0 \\
\Event : \Uc_0 \\
\Nat   : \Uc_0
\end{align*}

\subsubsection{Finite Traces (Inductive)}

Finite traces are given by
\[
\FinTrace : \Uc_0
\]
with constructors
\[
\begin{aligned}
nil  &: State \to \FinTrace \\
step &: \FinTrace \to Event \to State \to \FinTrace
\end{aligned}
\]
$\nil$ is the initial state; $\step$ extends the trace by one transition.

\subsubsection{Infinite Traces (Coinductive)}

Infinite traces are given by
\[
\InfTrace : \Uc_0
\]
with destructors
\[
\begin{aligned}
head &:\; \InfTrace \to State \\
tail &:\; \InfTrace \to Event \times \InfTrace
\end{aligned}
\]
$\head$ returns the current state; $\tail$ returns the next event and the remainder of the trace.

\subsection{Constructive Knowledge Layer $\TypeL$ and Non-Monotonicity}

The constructive layer is given by a presheaf category:
\[
\Ty_\ell(\Gamma) = [\mathcal{T}_f^{op}, \mathcal{C}/\Gamma]
\]
$\TypeL$ types are interpreted as type families indexed by the finite-trace category $\mathcal{T}_f$; each $\tau$ corresponds to a type $B(\tau)$.

\paragraph{Knowledge Families}

\begin{align*}
K_f : \FinTrace \to \TypeLi{0} \\
K_\infty : \InfTrace \to \TypeLi{0}
\end{align*}

\paragraph{Trajectory Extension}

\begin{align*}
\Extf : \FinTrace \to \FinTrace \to \Uc_0
\end{align*}

\paragraph{Non-Monotonic Restriction Rule}

\[
\frac{
\Gamma \vdash \epsilon : \Extf(\tau,\tau')
\quad
\Gamma \vdash k : K_f(\tau')
}{
\Gamma \vdash \restrict(\epsilon,k) : K_f(\tau)
}
\]

The system does not provide $K_f(\tau) \to K_f(\tau')$.

\subsection{Identity Types}

(Identity types are used to formalise trace and causal equivalence; see the semantics and completeness discussion.)

\[
\frac{
\Gamma \vdash A : \Ux_i \quad
\Gamma \vdash a:A \quad
\Gamma \vdash b:A
}{
\Gamma \vdash \Id_A(a,b) : \Ux_i
}
\]

\[
\Gamma \vdash \refl_a : \Id_A(a,a)
\]

\subsubsection*{Identity Elimination (J)}

Given a dependent type family (motive):

\[
\Gamma \vdash
C :
\Pi(x:A).\,
\Pi(y:A).\,
\Pi(p:\Id_A(x,y)).\,
\Ux_j
\]

and a proof term for the reflexive case:

\[
\Gamma \vdash
d :
\Pi(x:A).\,
C(x,x,\refl_x)
\]

the elimination rule is:

\[
\frac{
\begin{aligned}
&\Gamma \vdash A : \Ux_i \\
&\Gamma \vdash C :
\Pi(x:A).\Pi(y:A).\Pi(p:\Id_A(x,y)).\Ux_j \\
&\Gamma \vdash d :
\Pi(x:A).C(x,x,\refl_x) \\
&\Gamma \vdash a : A \\
&\Gamma \vdash b : A \\
&\Gamma \vdash p : \Id_A(a,b)
\end{aligned}
}{
\Gamma \vdash
J_{A,C}(d,a,b,p)
:
C(a,b,p)
}
\]

\subsubsection*{Computation Rule ($\beta$-rule)}

\[
\Gamma \vdash
J_{A,C}(d,a,a,\refl_a)
\;\equiv\;
d(a)
:
C(a,a,\refl_a)
\]

\subsection{Constructive Knowledge and Causal Reasoning}

The constructive layer is used not only for types but also for \emph{causal reasoning}: combining traces and causal chains to strengthen reasoning.

\begin{definition}[Causal Reasoning]
\label{def:causal-reasoning}
In NM-DEKL$^3_\infty$, causal reasoning is expressed via \emph{causal chains}: if state $s$ reaches $s'$ via event $e$ (i.e.\ $\Step(s,e,s')$), one can form a one-step cause; given a finite trace $\tau$, if there is a sequence $e_1,e_2,\dots,e_n$ such that the state evolves from $s_0$ to $s_n$ via $\Step$, one can derive a causal proof chain along $\tau$. Formally, a causal proof along $\tau$ is:
\[
\mathrm{CausalProof}(\tau)
\;\Longleftrightarrow\;
\exists e_1,\dots,e_n:\Event.\;
s_0 \xrightarrow{e_1} s_1 \xrightarrow{e_2} \cdots \xrightarrow{e_n} s_n,
\]
where each step $s_i \xrightarrow{e_{i+1}} s_{i+1}$ corresponds to $\Step(s_i,e_{i+1},s_{i+1})$.
\end{definition}

\begin{theorem}[Constructive Knowledge and Logical Reasoning]
\label{thm:constructive-logic}
In NM-DEKL$^3_\infty$, the constructive layer can represent not only types but also \emph{richer logical reasoning} via causal chains and trace structure, including dynamic, causal, and history-dependent reasoning.
\end{theorem}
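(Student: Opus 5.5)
The statement is informal, so the plan is to make it precise by exhibiting explicit constructions inside the system. Each kind of reasoning named in the theorem (dynamic, causal, history-dependent) will be shown to be representable by a $\TypeL$-type family over $\FinTrace$ whose inhabitants are exactly the corresponding witnesses. ``Represent'' will mean: for each informal judgement there is a type $P(\tau):\TypeLi{0}$, built only from the rules of the syntax section ($\Step$, $\Extf$, $\restrict$, $\Id$, $J$, together with the standard $\Pi/\Sigma$ formers), such that a term $\Gamma\vdash p:P(\tau)$ exists exactly when the informal property holds along $\tau$.

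\textbf{Causal reasoning.} I will define $\mathrm{CausalProof}:\FinTrace\to\TypeLi{0}$ by structural recursion on the inductive type $\FinTrace$:
\[
\mathrm{CausalProof}(\nil(s)) :\equiv \mathbf{1},\qquad
\mathrm{CausalProof}(\step(\tau,e,s)) :\equiv \mathrm{CausalProof}(\tau)\times \Step(\mathsf{last}(\tau),e,s).
\]
Here $\mathsf{last}:\FinTrace\to\State$ is itself defined by recursion, and $\Step$-witnesses are lifted from $\Uc_0$ into $\TypeL$; stratification permits this direction, $\TypeL$ depending on $\Uc$. An induction on $\tau$ then shows that inhabitants of $\mathrm{CausalProof}(\tau)$ correspond bijectively to the chains $s_0\xrightarrow{e_1}\cdots\xrightarrow{e_n}s_n$ of Definition~\ref{def:causal-reasoning}. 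This discharges the causal component.

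\textbf{History-dependent and dynamic reasoning.} For history dependence, I will use $\Extf$ together with a knowledge family $K_f$. A judgement about a past prefix $\tau$ is obtained from knowledge at a later trace $\tau'$ via $\restrict(\epsilon,k)$ for $\epsilon:\Extf(\tau,\tau')$. A type such as $\Pi(\tau':\FinTrace).\,\Extf(\tau,\tau')\to K_f(\tau')$ expresses knowledge that persists along every future. For the dynamic aspect, I will point out that the type system supplies no map $K_f(\tau)\to K_f(\tau')$. Knowledge may therefore fail to transport forward, so derivability genuinely varies as the trace evolves. Finally, I will use $\Id$ and $J$ to show that causal proofs respect trace equality: given $p:\Id_{\FinTrace}(\tau_1,\tau_2)$, the term $J$ transports $\mathrm{CausalProof}(\tau_1)$ to $\mathrm{CausalProof}(\tau_2)$. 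The $\beta$-rule makes this transport compute correctly on $\refl$.

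\textbf{Main obstacle.} I expect the hardest step to be justifying large elimination, i.e.\ defining a $\TypeL$-valued family by recursion on $\FinTrace:\Uc_0$. The syntax as given states only the constructors of $\FinTrace$, not its eliminator. I would first try to add the standard inductive eliminator into $\TypeL$ as an implicit rule, and check that it is compatible with the stratification of Section~\ref{sec:stratification}: the elimination goes from $\Uc$ into $\TypeL$, never the reverse, so the computational kernel stays independent. If that is not acceptable, the fallback is to define $\mathrm{CausalProof}(\tau)$ non-recursively. It would be a $\Sigma$-type over $\Nat$-indexed sequences of events and states, together with an $\Id$-proof that these sequences reassemble to $\tau$. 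That encoding is clumsier but uses only $\Sigma$, $\Pi$ and $\Id$.
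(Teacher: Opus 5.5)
Your ingredients are the paper's. Its proof is only a sketch: it cites the causal chains of Definition~\ref{def:causal-reasoning}, $K_f$ with $\restrict$, and the trace/$\Step$ structure, and asserts that together they express knowledge dependency, without building any type family. The gap opens exactly where you go further and construct explicit $\TypeLi{0}$-families, because the calculus as specified lacks the formers you use.

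Besides the $\FinTrace$ eliminator you flag (adding it changes the system rather than proving something about it), there are four further problems.
\begin{itemize}
\item Nothing turns a $\Uc_0$-type such as $\Step(\mathsf{last}(\tau),e,s)$ into a $\TypeLi{0}$-type. The stratification clause only \emph{allows} $\TypeL$-formation to mention $\Uc$-variables; it is not an inclusion rule. The only $\TypeLi{0}$-formers are $K_f$, $K_\infty$, and $\Prop$ with its propositions.
\item $\Pi$ is formed only with domain and codomain in $\Uc_i$, and contexts extend only by $\Uc_i$-types.
\item There are no $\Sigma$, product or unit formation rules at the $\TypeL$ level. Consequently $\mathbf{1}$, $\mathrm{CausalProof}(\tau)\times\Step(\dots)$, $\Pi(\tau':\FinTrace).\,\Extf(\tau,\tau')\to K_f(\tau')$ and the $J$-motive $\mathrm{CausalProof}(x)\to\mathrm{CausalProof}(y)$ are all unformable.
\item Your fallback does not escape this. $\Nat$ has neither constructors nor a recursor in the syntax. ``Reassembling a sequence into $\tau$'' is itself a recursion on $\Nat$ (or on $\FinTrace$), and the $\Sigma$-type needs the same missing formers. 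Even if all of it were placed in $\Uc_0$, it would represent causal chains in the computational layer, not in the constructive layer the theorem is about.
\end{itemize}

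The dynamic part has a related flaw. Noting that no primitive rule maps $K_f(\tau)$ to $K_f(\tau')$ does not show that no such map is derivable. Worse, with the rules as listed your conclusion that ``derivability genuinely varies as the trace evolves'' fails. The only introduction form for $K_f(\tau)$ is $\restrict$, which already needs an inhabitant of some $K_f(\tau')$, and no $K_f$-typed variable can enter a context. So no $K_f(\tau)$ is ever syntactically inhabited, at any trace.

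The variation must be exhibited semantically, in the style of the paper's presheaf examples. Once $\TypeL$-typed context extension exists, take a $\mathbf{Set}$-presheaf with $K_f(\tau)=\{*\}$ and $K_f(\tau')=\varnothing$, with restriction the empty function. Soundness then rules out any derivable $k:K_f(\tau)\vdash t:K_f(\tau')$.

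To repair the proposal, take one of two routes.
\begin{itemize}
\item State the extension explicitly: large elimination for $\FinTrace$ and $\Nat$ into $\TypeL$, an inclusion of $\Uc_i$ into $\TypeLi{i}$, and $\Pi$/$\Sigma$/unit types and context extension at the $\TypeL$ level. Check that it respects stratification and is interpreted in the presheaf models. Your constructions then go through.
\item Fall back to the paper's semantic reading: causal chains are $\Step$-witnesses in the computational layer, knowledge dependency is the presheaf action $\llbracket K_f\rrbracket(m_\epsilon)$, and non-monotonicity is witnessed by countermodels like the one above.
\end{itemize}
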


\begin{proof}[Proof idea]
Causal chains (Definition~\ref{def:causal-reasoning}) together with $K_f$ and $\restrict$ on traces express knowledge dependency along traces and causal chains between events. This is realised by the constructive layer and the trace/$\Step$ structure in the derivation system, so NM-DEKL$^3_\infty$ can express more complex reasoning than purely truth-valued propositions; see Section~\ref{sec:mu-embedding} for the non-bisimulation-invariant expressibility theorem relative to the $\mu$-calculus.
\end{proof}

\subsection{Propositional Knowledge Layer $\Prop$}

The propositional layer is interpreted as the subobject classifier:
\[
\Prop(\Gamma) = \Sub(1_\Gamma)
\]
Terms of type $\Prop$ are proof-irrelevant propositions.

\begin{align*}
\top,\bot : \Prop \\
\wedge,\vee,\to : \Prop \to \Prop \to \Prop \\
\forall_{x:A} P(x) : \Prop,\quad \exists_{x:A} P(x) : \Prop
\quad\text{(where }\Gamma\vdash A:\Uc_i,\;\Gamma,x:A\vdash P:\Prop\text{)}
\end{align*}

\subsection{$\mu/\nu$ Fixed Points (Prop Layer)}

If $F:\Prop \to \Prop$ is monotone:
\[
\mu X.F(X) : \Prop
\qquad
\nu X.F(X) : \Prop
\]
with fold/unfold: $F(\mu X.F) \leftrightarrow \mu X.F$. The \emph{layer separation theorem} (Theorem~\ref{thm:layer-separation}) ensures that $\mu/\nu$ in Prop does not break non-monotonicity of $\TypeL$.

\subsection{Syntactic Equality (Definitional Equality)}

\begin{definition}[Definitional equality $\equiv$]\label{def:definitional-equality}
In NM-DEKL$^3_\infty$, the \emph{definitional equality} $t \equiv t' : A$ holds when $t$ and $t'$ are interconvertible in the derivation system by a sequence of reduction/expansion steps; we write $t \equiv t'$.

Definitional equality is a transitive equivalence. Its \emph{generators} include (see the ``Definitional equality rules'' subsection):
\begin{itemize}
  \item $\beta$-rule (\emph{required}):
    \[
    (\lambda x.\,t)\,b \equiv t[b/x].
    \]
  \item $\eta$-expansion (if adopted): $\lambda x.(F\,x) \equiv F$ for $F : A \to B$ with $x\notin\mathrm{FV}(F)$.
  \item Computation rules: e.g.\ the $\beta$-rule for $J$ on $\refl$ for Identity types, functor laws for $\restrict$, as part of \emph{judgmental equality}.
\end{itemize}
\end{definition}

\subsection{Non-Monotonicity and Restriction}

Non-monotonicity is formalised by the syntactic rules and functor laws of restriction; semantically it corresponds to presheaf contravariance (see Section~\ref{sec:semantics-nonmono}).
\[
\restrict : \Extf(\tau, \tau') \to K_f(\tau') \to K_f(\tau)
\]

\paragraph{Formal definition of ``failure''}
In syntax and models, \textbf{failure} is defined as constructive unavailability:
\begin{itemize}
  \item \textbf{Syntax}: ``Knowledge fails'' on trace $\tau$ means there is no $k$ with $\Gamma \vdash k : K_f(\tau)$, i.e.\ $K_f(\tau)$ has no inhabitant (no evidence) in that context. Failure here is failure of a specific type at a given trace and context.
  \item \textbf{Model}: In a model $\mathcal{M}$, the trajectory $\tau$ \emph{fails} at $\Gamma$ when the fibre of $K_f(\tau)$ over $\llbracket\Gamma\rrbracket$ has no inhabitants; i.e.\ $\Tm_\ell(\llbracket\Gamma\rrbracket,\llbracket K_f(\tau)\rrbracket)$ is empty (or the subobject is initial).
\end{itemize}
So ``failure = non-constructibility'' formally means \emph{no evidence}; $\restrict$ yields a term of $K_f(\tau)$ from $K_f(\tau')$ only when there is $\epsilon : \Extf(\tau, \tau')$. If $\tau'$ is unreachable or there is no extension proof, one cannot construct evidence for $K_f(\tau)$, matching empty fibres in the model.

\begin{definition}[Failure]
\label{def:failure}
In NM-DEKL$^3_\infty$, given type $A$ and context $\Gamma$, we say $A$ \textbf{fails} at $\Gamma$ in model $\mathcal{M}$ if $A$ has no inhabitants at $\Gamma$ in $\mathcal{M}$, i.e.\ $\Tm(\llbracket\Gamma\rrbracket_{\mathcal{M}},\llbracket A\rrbracket_{\mathcal{M}})$ is empty.
If no term $t$ with $t : A$ can be constructed at $\Gamma$, we say the derivation $\Gamma \vdash t : A$ \textbf{fails}. So failure is ``non-constructibility'', not ``deriving negation''.
\end{definition}

\begin{theorem}[Relation between non-monotonicity and failure]
\label{thm:nonmono-iff-failure}
In NM-DEKL$^3_\infty$, non-monotonicity is a global property of the system, while failure refers to a type being non-constructible in a derivation. If a type becomes non-constructible (fails), the system exhibits non-monotonicity there; conversely, non-monotonicity (only restricting from long to short traces, no reverse construction) can make a type have no inhabitants on short traces, i.e.\ failure.
\end{theorem}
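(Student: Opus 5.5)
The statement is informal, so the plan starts by fixing precise readings of both directions and then proving each from the restriction rule and Definition~\ref{def:failure}. \emph{Non-monotonicity} will mean the structural fact that the system has a restriction map $\restrict:\Extf(\tau,\tau')\to K_f(\tau')\to K_f(\tau)$ but no rule producing $K_f(\tau)\to K_f(\tau')$. Semantically, $K_f$ is a presheaf on $\mathcal{T}_f$, contravariant in the trace. I will call this property \emph{witnessed at} a pair $(\tau,\tau')$ with $\epsilon:\Extf(\tau,\tau')$ when $K_f(\tau)$ is inhabited at $\Gamma$ but $K_f(\tau')$ fails at $\Gamma$. This is a situation forbidden in a monotone (covariant) system, where every extension would transport evidence forward.

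\textbf{Failure implies non-monotonicity.} Suppose $K_f(\tau')$ fails at $\Gamma$ in a model $\mathcal{M}$, i.e.\ $\Tm(\llbracket\Gamma\rrbracket,\llbracket K_f(\tau')\rrbracket)=\emptyset$, while some shorter $\tau$ with $\epsilon:\Extf(\tau,\tau')$ has $K_f(\tau)$ inhabited. If the system were monotone, it would have a map $K_f(\tau)\to K_f(\tau')$ along $\epsilon$, and by soundness of the interpretation this map would yield an element of the empty set. So no such map exists and the failure is a witness of non-monotonicity at $(\tau,\tau')$. Syntactically the same argument applies: there is no derivable rule producing an inhabitant of $K_f(\tau')$ from one of $K_f(\tau)$.

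\textbf{Non-monotonicity can produce failure.} Here I will build an explicit presheaf model rather than a general argument. Take $\mathcal{C}=\mathbf{Set}$ and the syntactic trace category. Define $K_f(\tau)=\emptyset$ for every trace $\tau$ that does not extend to any trace in a chosen set $S$ of ``long'' traces, and $K_f(\tau)=1$ otherwise; let restriction be the unique maps. Functoriality requires a map $K_f(\tau')\to K_f(\tau)$ whenever $\tau\le\tau'$. This holds because if $\tau'$ extends into $S$, then so does $\tau$, so we never need a map from $1$ into $\emptyset$. Short traces outside the downward closure of $S$ therefore fail, and since the contravariant structure provides no reverse construction, nothing inhabits them. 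Equivalently, syntactically, evidence for $K_f(\tau)$ is obtainable only via $\restrict(\epsilon,k)$, so if no $\tau'$ above $\tau$ carries a witness, $K_f(\tau)$ is not constructible. The point is that emptiness of a fibre is compatible with the presheaf laws only because the maps go in the contravariant direction.

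\textbf{Main obstacle.} The hard step is the first direction. Failure of a single fibre does not by itself force non-monotonicity: one needs the auxiliary hypothesis that some restriction of that fibre is inhabited, which a trivial all-empty $K_f$ would violate. I will handle this by stating the theorem relative to this hypothesis, calling it ``failure after availability''. I will also use soundness, assumed from the later soundness theorem, to move between syntactic non-derivability and empty $\Tm$-sets.
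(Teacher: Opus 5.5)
Your proposal is correct under the precise readings you fix, but it argues quite differently from the paper, and says more.

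The paper's proof stays at the level of definitions. It unfolds Definition~\ref{def:failure} (empty $\Tm(\llbracket\Gamma\rrbracket,\llbracket A\rrbracket)$ means failure). It then asserts that such failure ``reflects'' non-monotonicity because short traces may have no constructible terms, and concludes that ``the two are related''. Read this way, with non-monotonicity a global feature of the rule set, the first direction is automatic, and no model is built for the second.

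You instead localise non-monotonicity to a witness pair $(\tau,\tau')$. You prove the first direction by composing a section of $K_f(\tau)$ with a putative forward map, which works over any base category, not only $\mathbf{Set}$. You prove the second with an explicit $\mathbf{Set}$-valued presheaf supported on ${\downarrow}S$. Its functor laws are automatic because all hom-sets among $\varnothing$ and $1$ are subsingletons, which also makes the example indifferent to the paper's ambiguity over whether objects of $\mathcal{T}_f$ are states or traces.

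The paper's route costs nothing but proves nothing beyond the definitions. Yours gives checkable content. It also exposes that a pointwise ``failure implies non-monotonicity'' needs your availability hypothesis: an everywhere-empty $K_f$ is a counterexample. That hypothesis is a fair reading of ``becomes non-constructible''.

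One correction to the closing remark of your second part. Emptiness of fibres is not compatible with the presheaf laws \emph{only because} the maps are contravariant. A covariant $K$ also admits empty fibres, but it forces the failing set to be closed under \emph{prefixes}. Contravariance forces it to be closed under \emph{extensions}, since a section over any extension of $\tau$ restricts to one over $\tau$. So in every presheaf model a trace fails only if all its extensions fail. The failing traces in your example are therefore those with no extension into $S$, not ``short'' ones as such. The pattern that genuinely requires contravariance is your witness: inhabited at $\tau$, failing at an extension $\tau'$. Stating the point that way keeps your converse consistent with the semantics of Section~\ref{sec:semantics-nonmono}.
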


\begin{proof}
\textbf{Intuition}: Non-monotonicity is a local property of the system: \textbf{one cannot derive from long trace to short trace}. Failure is a type being ``non-constructible'' at a given trace. The link: if a type cannot be constructed at a trace (failure), then non-monotonicity shows up as derivations only along shortening traces, not from long to short.

\textbf{Formal}: By Definition~\ref{def:failure}, if type $A$ has no constructible term at some context or trace (i.e.\ $\Tm(\llbracket\Gamma\rrbracket,\llbracket A\rrbracket)$ is empty in that model), then $A$ fails there. This reflects non-monotonicity: one cannot extend from long to short trace in reasoning, because the short trace may have no constructible terms. So non-monotonicity can cause failure on some traces; the two are related.
\end{proof}

\paragraph{Sources of non-monotonicity (brief)}
Non-monotonicity in the system comes from: \textbf{(i)} when the trace shortens, the fibre of a type on the short trace may have no section (\emph{disappearance of the type}); \textbf{(ii)} when extension $\epsilon : \Extf(\tau, \tau')$ is unreachable or unprovable, one cannot get a term of $K_f(\tau)$ from $K_f(\tau')$ via $\restrict$ (\emph{path/extension cut}). In the model both appear as ``only contravariant restriction, no covariant construction'', corresponding to irreversibility of information under event-driven or state-backtracking semantics; see Section~\ref{sec:semantics-nonmono} for failure and reasoning paths.

\subsection{Judgement Forms and Calculus Rules}

This subsection collects judgement forms, $\Uc$ rules, trace rules, $\TypeL$ rules (including $\restrict$ and functor laws), Prop rules (including $\mu/\nu$ fold/unfold), and definitional equality for reference and meta-theoretic proofs; the preceding subsections are conceptual.

\subsubsection{Judgement Forms}

The basic judgement forms are:

\[
\begin{array}{ll}
\vdash \Gamma\;\mathsf{ctx} & \text{context well-formed} \\
\Gamma \vdash A : \Uc_i & \text{computational type} \\
\Gamma \vdash B : \TypeLi{i} & \text{constructive type} \\
\Gamma \vdash P : \Prop & \text{propositional type} \\
\Gamma \vdash t : A & \text{term} \\
\Gamma \vdash t \equiv t' : A & \text{definitional equality}
\end{array}
\]

with the three layers:
\[
\Uc_i \subseteq \text{computational layer},\quad
\TypeLi{i} \subseteq \text{constructive layer},\quad
\Prop \subseteq \TypeLi{0}.
\]

\paragraph{Stratification}\label{sec:stratification}
\begin{itemize}
  \item \textbf{Allowed}: Formation of $\TypeL$ and $\Prop$ types/propositions may depend on variables in $\Uc$ (e.g.\ $\Gamma,x:A\vdash B:\TypeL$ with $A:\Uc$).
  \item \textbf{Forbidden}: Formation of $\Uc$ types must not depend on $\TypeL$ or $\Prop$ (no rule from ``$\Gamma\vdash P:\Prop$'' to ``$\Gamma\vdash A:\Uc$'').
  \item \textbf{Forbidden}: Elimination from $\Prop$ to $\Uc$ (no ``if $\Gamma\vdash p:P$ then $\Gamma\vdash t:\Uc$''), so the computational kernel stays pure.
\end{itemize}

\subsubsection{Substitution}

Substitution is the basis for the initiality proof and CwF structure. Below is the minimal setup.

\paragraph{Substitution judgement}
\[
\Delta \vdash \sigma : \Gamma
\quad\text{($\sigma$ is a substitution from $\Delta$ to $\Gamma$)}
\]
Types and terms under substitution: $A[\sigma]$, $t[\sigma]$ (when $\Gamma\vdash A:\Uc_i$ and $\Delta\vdash\sigma:\Gamma$, $\Delta\vdash A[\sigma]:\Uc_i$; same for terms).

\paragraph{Composition and identity}
\[
(\sigma \circ \delta) : \Delta \to \Gamma
\quad\text{when }\Delta\vdash\sigma:\Psi,\;\Psi\vdash\delta:\Gamma\text{;}
\quad
\mathrm{id}_\Gamma : \Gamma \to \Gamma.
\]

\paragraph{Substitution laws in definitional equality}
\[
A[\mathrm{id}] \equiv A,\quad
A[\sigma\circ\delta] \equiv (A[\sigma])[\delta];
\quad\text{same for terms.}
\]

\begin{lemma}[Substitution Lemma]\label{lem:substitution}
If $\Delta\vdash\sigma:\Gamma$, then derivability is preserved for types and terms: if $\Gamma\vdash A:\Uc_i$ then $\Delta\vdash A[\sigma]:\Uc_i$; if $\Gamma\vdash t:A$ then $\Delta\vdash t[\sigma]:A[\sigma]$. Same for $\TypeL$ and $\Prop$. Identity and composition satisfy $A[\mathrm{id}]\equiv A$, $A[\sigma\circ\delta]\equiv (A[\sigma])[\delta]$ (and for terms).
\end{lemma}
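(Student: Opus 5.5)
We argue by simultaneous induction on derivations, proving all the statements of Lemma~\ref{lem:substitution} together. The claims for $\Uc_i$-types, $\TypeLi{i}$-types, $\Prop$-propositions, terms, and definitional equalities $\Gamma\vdash t\equiv t':A$ are mutually dependent, so they must be proved jointly. Substitution $A[\sigma]$, $t[\sigma]$ is defined by structural recursion on raw syntax, commuting with every constructor: $\Pi$, $\Id$, $\refl$, $J$, $\nil$, $\step$, $\head$, $\tail$, $K_f$, $K_\infty$, $\Extf$, $\restrict$, the $\Prop$ connectives and $\mu/\nu$. Under binders it uses the weakened substitution $\sigma^{+}=(\sigma\circ\mathrm{p}, \mathrm{q})$, with $\mathrm{p}$ the weakening projection and $\mathrm{q}$ the last variable.

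Before the main induction we prove two auxiliary lemmas by the same kind of induction.

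\textbf{Weakening.} If $\Gamma\vdash J$ and $\Gamma\vdash B$ is a well-formed type, then $\Gamma,x:B\vdash J$.

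\textbf{Lifting.} If $\Delta\vdash\sigma:\Gamma$ and $\Gamma\vdash A$, then $\Delta,x:A[\sigma]\vdash\sigma^{+}:\Gamma,x:A$.

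With these, the main induction goes case by case on the last rule.

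\begin{itemize}
  \item \textbf{Variable rule.} This is exactly the defining clause of the substitution judgement: $\sigma$ assigns to each $x:A$ in $\Gamma$ a term of type $A[\sigma]$ in $\Delta$.
  \item \textbf{Closed constants.} $\State$, $\Event$, $\Nat$, $\FinTrace$, $\InfTrace$, $\Prop$, $\top$, $\bot$ and the universes are left fixed by $\sigma$, so these cases are immediate.
  \item \textbf{Rules with binders.} These are $\Pi$, $\lambda$, $\forall$/$\exists$, $\mu X.F$, $\nu X.F$, and the motive of $J$. Apply the induction hypothesis with $\sigma^{+}$ via the lifting lemma, then re-apply the rule.
  \item \textbf{Rules without binders.} These are $\restrict(\epsilon,k)$, $\Id_A(a,b)$, $\step$, application, and the others of this shape. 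The induction hypothesis gives the premises under $\sigma$, and the same rule reassembles the conclusion. For instance, $\Delta\vdash\epsilon[\sigma]:\Extf(\tau[\sigma],\tau'[\sigma])$ and $\Delta\vdash k[\sigma]:K_f(\tau'[\sigma])$ yield $\Delta\vdash\restrict(\epsilon,k)[\sigma]:K_f(\tau)[\sigma]$.
\end{itemize}

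Stratification is preserved automatically. Substitution never changes the universe level or layer of a judgement, and it introduces no new rule instances. In particular it cannot create a $\Prop\to\Uc$ elimination.

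The key technical step, and the one I expect to be the main obstacle, is the \textbf{substitution-commutation lemma}:
\[
B[b/x][\sigma]\equiv B[\sigma^{+}][b[\sigma]/x].
\]
It is needed in every case whose conclusion type is itself a substitution instance. These are application, whose type is $B[b/x]$; $J$, whose type is $C(a,b,p)$; the fold/unfold of $\mu/\nu$, which involves $F(\mu X.F)$; and the generating definitional equalities. Each such case must show that the substituted conclusion type matches the conclusion type of the re-applied rule.

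I would prove this lemma by a separate structural induction on raw terms, before the main induction, in the standard de Bruijn/explicit-substitution style. The difficulty there is bookkeeping of variable capture under binders, which I would sidestep by working with de Bruijn indices. With the lemma in hand, the definitional-equality cases follow directly.

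\begin{itemize}
  \item \textbf{$\beta$-rule.} $((\lambda x.t)\,b)[\sigma]=(\lambda x.t[\sigma^{+}])\,b[\sigma]\equiv t[\sigma^{+}][b[\sigma]/x]=t[b/x][\sigma]$.
  \item \textbf{$J$-computation and functor laws.} The $J$-computation rule and the $\restrict$ functor laws are preserved because both sides substitute homomorphically.
  \item \textbf{Congruence.} Reflexivity, symmetry, transitivity and congruence rules transfer immediately.
\end{itemize}

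Finally, the identity and composition laws $A[\mathrm{id}]\equiv A$ and $A[\sigma\circ\delta]\equiv(A[\sigma])[\delta]$, together with their term analogues, are proved by structural induction on raw syntax. In fact they hold as syntactic equalities.

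\begin{itemize}
  \item \textbf{Variables.} These hold by the definitions of $\mathrm{id}$ and of composition.
  \item \textbf{Constructors.} Substitution commutes with every constructor, so these cases follow by the induction hypothesis.
  \item \textbf{Binders.} The only nontrivial point is $(\sigma\circ\delta)^{+}=\sigma^{+}\circ\delta^{+}$ and $\mathrm{id}^{+}=\mathrm{id}$. This reduces to the equations $\mathrm{p}\circ(\delta,t)=\delta$ and $\mathrm{q}[(\delta,t)]=t$, which are the defining equations of extended substitutions.
\end{itemize}

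Since syntactic equality implies $\equiv$, this completes the lemma. These are exactly the CwF equations needed later for the syntactic model.
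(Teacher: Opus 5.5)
The paper gives no proof of this lemma. It is stated bare in the Substitution subsection and afterwards only cited: in the proof of the substitution-coherence lemma, for the syntactic CwF axioms, and in steps (I.1), (I.2), (I.4), (I.5) of the initiality appendix. So it is treated as the standard syntactic fact. Your simultaneous induction on derivations is exactly that standard argument, and it is sound, so it fills a real hole in the paper. Its ingredients are weakening, lifting along $\sigma^{+}=(\sigma\circ\mathrm{p},\mathrm{q})$, the commutation lemma $B[b/x][\sigma]=B[\sigma^{+}][b[\sigma]/x]$ for the application and fold/unfold cases, and the identity/composition laws as raw-syntax equalities.

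A few points need tightening when you write it out.

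(i) Lifting cannot be proved \emph{before} the main induction as you state it. The judgement $\Delta,x:A[\sigma]\vdash\sigma^{+}:\Gamma,x:A$ presupposes $\vdash\Delta,x:A[\sigma]\;\mathsf{ctx}$, hence $\Delta\vdash A[\sigma]:\Uc_i$, which is an instance of the lemma itself. Make $\Delta\vdash A[\sigma]$ an explicit hypothesis of lifting; the induction hypothesis on the binder rule's first premise supplies it. Typing the last component $\mathrm{q}$ also uses $A[\sigma][\mathrm{p}]=A[\sigma\circ\mathrm{p}]$, so prove the raw-syntax composition law first rather than last. This is not circular, since that law is purely syntactic.

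(ii) Weakening must be stated in the general form $\Gamma,\Gamma'\vdash J\Rightarrow\Gamma,x:B,\Gamma'\vdash J$. The end-of-context form does not survive its own binder cases.

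(iii) The $\mu/\nu$-formation rule and $\mathsf{cofix}$ carry non-derivational side conditions: positivity of $X$ and guardedness. You must check these are stable under $\sigma^{+}$. They are, because $\sigma^{+}$ fixes the bound variable and the image of $\sigma$ does not mention it. Note also that $\mu$-formation extends the context by $X:\Prop$. The paper's context-extension rule only admits $\Uc_i$-types, so your substitution and lifting judgements must explicitly allow such extensions.

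(iv) In the paper's formulation the $J$ motive $C$ is a $\Gamma$-term of $\Pi$-type, not a binder. So $J$ falls under your non-binder case, with $C(a,b,p)$ handled through iterated application.

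(v) The $\eta$-generator uses $f[\sigma^{+}]=f[\sigma]$ for $x\notin\mathrm{FV}(f)$, not the commutation lemma.
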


\subsubsection{Context Rules}

\paragraph{Empty context}

\[
\frac{}{ \vdash \cdot\;\mathsf{ctx} }
\]

The empty context is well-formed. $\mathsf{ctx}$ is the meta-level judgement label.

\paragraph{Extension}

\[
\frac{
  \vdash \Gamma\;\mathsf{ctx}
  \qquad
  \Gamma \vdash A : \Uc_i
}{
  \vdash \Gamma, x:A\;\mathsf{ctx}
}
\]

If $\Gamma$ is well-formed and $\Gamma \vdash A : \Uc_i$, then $\Gamma, x:A$ is well-formed.
\subsubsection{Computational Layer $\Uc$ Rules}

\paragraph{Universe formation}

\[
\frac{}{
  \Gamma \vdash \Uc_i : \Uc_{i+1}
}
\]

\paragraph{Variable}

\[
\frac{
  \vdash \Gamma, x:A, \Delta\;\mathsf{ctx}
}{
  \Gamma, x:A, \Delta \vdash x : A
}
\]

\paragraph{$\Pi$-types}

\[
\frac{
  \Gamma \vdash A : \Uc_i
  \qquad
  \Gamma,x:A \vdash B : \Uc_i
}{
  \Gamma \vdash \Pi x:A.\,B : \Uc_i
}
\]

\paragraph{$\lambda$-introduction}

\[
\frac{
  \Gamma,x:A \vdash t : B
}{
  \Gamma \vdash \lambda x.\,t : \Pi x:A.\,B
}
\]

\paragraph{Application}

\[
\frac{
  \Gamma \vdash f : \Pi x:A.\,B
  \qquad
  \Gamma \vdash a : A
}{
  \Gamma \vdash f\,a : B[a/x]
}
\]

\subsubsection{Trace Rules}

\paragraph{Finite trace}

\[
\frac{
  \Gamma \vdash s : \State
}{
  \Gamma \vdash \mathsf{nil}(s) : \FinTrace
}
\]

\[
\frac{
  \Gamma \vdash \tau : \FinTrace
  \qquad
  \Gamma \vdash e : \Event
  \qquad
  \Gamma \vdash s : \State
}{
  \Gamma \vdash \mathsf{step}(\tau,e,s) : \FinTrace
}
\]

\paragraph{Extension relation}

\[
\frac{
  \Gamma \vdash \tau : \FinTrace
}{
  \Gamma \vdash \mathrm{id}_\tau :
  \Extf(\tau,\tau)
}
\]

\[
\frac{
  \Gamma \vdash \epsilon_1 : \Extf(\tau,\tau')
  \qquad
  \Gamma \vdash \epsilon_2 : \Extf(\tau',\tau'')
}{
  \Gamma \vdash
  \epsilon_2\circ\epsilon_1 :
  \Extf(\tau,\tau'')
}
\]

\subsubsection{Constructive Layer $\TypeL$ Rules}

\paragraph{$\TypeL$ Universe}

\[
\frac{}{
  \Gamma \vdash \TypeLi{i} : \TypeLi{i+1}
}
\]

\paragraph{Trajectory-indexed family}

\[
\frac{
  \Gamma \vdash \tau : \FinTrace
}{
  \Gamma \vdash K_f(\tau) : \TypeLi{0}
}
\]

\paragraph{restriction}

\[
\frac{
  \Gamma \vdash \epsilon : \Extf(\tau,\tau')
  \qquad
  \Gamma \vdash k : K_f(\tau')
}{
  \Gamma \vdash
  \restrict(\epsilon,k)
  :
  K_f(\tau)
}
\]

\paragraph{Restriction functor laws (judgmental equality)}

The following are \emph{definitional equality} (judgmental equality) rules ensuring $\TypeL$ forms a presheaf; functoriality holds when the syntactic model is quotiented by $\equiv$.
\[
\restrict(\mathrm{id},k)
\equiv
k
,\qquad
\restrict(\epsilon_1,
  \restrict(\epsilon_2,k))
\equiv
\restrict(\epsilon_2\circ\epsilon_1,k).
\]
\textbf{Example}: $\restrict(\mathrm{id},k)\equiv k$ means restriction along the identity extension does not change the evidence; $\restrict(\epsilon_1,\restrict(\epsilon_2,k))\equiv \restrict(\epsilon_2\circ\epsilon_1,k)$ means two restrictions along $\tau\to\tau''\to\tau'$ equal one restriction along the composite extension, matching the presheaf functor law.

\subsubsection{Prop Layer Rules}

\paragraph{Prop universe}

\[
\frac{}{
  \Gamma \vdash \Prop : \TypeLi{0}
}
\]

\paragraph{$\land$ introduction}

\[
\frac{
  \Gamma \vdash P : \Prop
  \qquad
  \Gamma \vdash Q : \Prop
}{
  \Gamma \vdash P \land Q : \Prop
}
\]

\paragraph{$\mu$-formation (monotonicity required)}

\[
\frac{
  \Gamma, X:\Prop \vdash \varphi(X) : \Prop
  \qquad
  \text{($\varphi$ monotone/positive in $X$)}
}{
  \Gamma \vdash \mu X.\varphi(X) : \Prop
}
\]

\paragraph{$\mu$ fold/unfold (proof terms)}

Introduction and elimination are given by \emph{proof terms}, not the logical symbol $\Rightarrow$:
\[
\frac{}{\Gamma \vdash \mathsf{fold} : \varphi(\mu X.\varphi) \to \mu X.\varphi}
\qquad
\frac{}{\Gamma \vdash \mathsf{unfold} : \mu X.\varphi \to \varphi(\mu X.\varphi)}
\]
They are inverses under definitional equality (e.g.\ $\mathsf{fold}(\mathsf{unfold}(w)) \equiv w$). $\nu$ is defined similarly ($\mathsf{out} : \nu X.\psi \to \psi(\nu X.\psi)$, $\mathsf{in}$, etc.).

\subsubsection{Identity Types}

\[
\frac{
  \Gamma \vdash A : \Uc_i
  \qquad
  \Gamma \vdash a : A
  \qquad
  \Gamma \vdash b : A
}{
  \Gamma \vdash \Id_A(a,b) : \Uc_i
}
\]

\[
\frac{
  \Gamma \vdash a : A
}{
  \Gamma \vdash \refl_a : \Id_A(a,a)
}
\]

Elimination rule $J$ omitted.

\subsubsection{Definitional Equality Rules}

\paragraph{$\beta$}

\[
(\lambda x.\,t)\,a
\equiv
t[a/x]
\]

\paragraph{$\eta$}

\[
\lambda x.\,(f\,x)
\equiv
f
\]

\paragraph{Equality closure}

\[
\frac{
  \Gamma \vdash t \equiv u : A
  \qquad
  \Gamma \vdash u \equiv v : A
}{
  \Gamma \vdash t \equiv v : A
}
\]

\[
\frac{
  \Gamma \vdash t \equiv u : A
}{
  \Gamma \vdash u \equiv t : A
}
\]

\section{Metatheory (I): Reduction and Normalisation}

\subsection{Subject Reduction}

If $\Gamma\vdash t:A$ and $t\to t'$, then $\Gamma\vdash t':A$ (types are preserved under reduction). The reduction and cofix strategy below are needed for the normalisation statements.

\subsection{Productivity}

For coinductive objects (e.g.\ $\InfTrace$),
$\mathsf{cofix}$ may be used to build infinite objects,
subject to a \emph{guardedness} condition.

\paragraph{Guardedness condition}

Let $\InfTrace$ be the following coinductive type:

\[
\InfTrace \;\cong\; State \times (Event \times \InfTrace)
\]

We allow
\[
\mathsf{cofix}\ f.\,t
\]
to build infinite traces,
provided the \emph{guardedness} condition holds:

\medskip

\noindent
In $t$, every use of the recursive variable $f$
must occur after one layer of observable structure,
i.e.\ inside the outer constructor of
\[
State \times (Event \times \InfTrace).
\]

\medskip

Formally, a valid definition has the form
\[
\mathsf{cofix}\ f.\;
\langle s,\,(e,\,f')\rangle
\]
where $f'$ may contain recursive calls to $f$,
but such calls must appear in the second component of $(e,\,\cdot)$.

\medskip

For example:

\[
\mathsf{cofix}\ f.\;
\langle s,\,(e,\,f)\rangle
\]
is valid,

whereas

\[
\mathsf{cofix}\ f.\; f
\]
is not.

\medskip

This ensures that for any object built with $\mathsf{cofix}$,
each application of a destructor
\[
head : \InfTrace \to State,
\qquad
tail : \InfTrace \to Event \times \InfTrace
\]
produces one layer of observable structure immediately,
giving \emph{productivity} of the coinductive object.

\subsection{Reduction Relation and Cofix Strategy}

To state normalisation precisely we fix the reduction relation and cofix unfold strategy.

\paragraph{Reduction $t \to t'$}
Generated by (closed under context and $\equiv$):
\begin{itemize}
  \item \textbf{$\beta$}: $(\lambda x.\,t)\,v \to t[v/x]$;
  \item \textbf{Identity $J$-$\beta$}: compute by definition when eliminating on $\refl$;
  \item \textbf{$\restrict$-$\beta$}: $\restrict(\mathrm{id},k) \to k$, $\restrict(\epsilon_1,\restrict(\epsilon_2,k)) \to \restrict(\epsilon_2\circ\epsilon_1,k)$.
\end{itemize}

\paragraph{Cofix unfold strategy}
Coinductive \emph{unfold} is ``expand on observation'': expand one step only when the term is observed by destructors \texttt{head}/\texttt{tail}, avoiding divergence from arbitrary unfold. \emph{Guardedness} is enforced syntactically: recursive calls must occur in strictly positive positions of constructors (e.g.\ $\mathsf{cons}$, $\step$).

\subsection{Normalisation}

With the three-layer structure,
normalisation can be proved layer by layer.

\subsubsection{Strong normalisation for the computational layer}

\begin{theorem}[Strong normalisation for $\Uc$]
\label{thm:Uc-normalization}
If
\[
\Gamma \vdash t : A
\quad\text{and}\quad
\Gamma \vdash A : \Uc_i,
\]
then $t$ is strongly normalising in the computational layer $\Uc$.
\end{theorem}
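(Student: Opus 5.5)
The plan is a Tait--Girard style reducibility (logical relations) argument, restricted to the computational fragment. This restriction is justified by the stratification of Section~\ref{sec:stratification}. No rule lets a $\Uc$-type depend on $\TypeL$ or $\Prop$, and there is no elimination from $\Prop$ into $\Uc$. So a derivation of $\Gamma\vdash t:A$ with $\Gamma\vdash A:\Uc_i$ only mentions $\Uc$-formers: universes, $\Pi$, $\State,\Event,\Nat$, $\FinTrace$, $\InfTrace$, $\Id$ and $J$, and $\Extf$ with $\mathrm{id}$ and composition. The reductions that fire inside $t$ are therefore $\beta$, $J$-$\beta$, and cofix unfolding under the observation strategy. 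The $\restrict$-$\beta$ rules cannot occur, because $\restrict(\epsilon,k)$ has type $K_f(\tau)$ in $\TypeL$.

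\textbf{Step 1: the candidates.} For each $\Uc$-type $A$ I will define a set $\mathrm{Red}_A$ of strongly normalising terms by induction on the universe level $i$ and on the structure of $A$.
\begin{itemize}
\item For base and inductive types, including $\FinTrace$ and $\Extf(\tau,\tau')$, $\mathrm{Red}_A$ is the set of all SN terms of type $A$.
\item For $\Pi x{:}A.B$, a term $f$ is in $\mathrm{Red}$ if $f\,a\in\mathrm{Red}_{B[a/x]}$ for every $a\in\mathrm{Red}_A$.
\item For $\Id_A(a,b)$, a term is in $\mathrm{Red}$ if it is SN and every reduct of the form $\refl$ is reducible.
\item For $\InfTrace$, since unfolding happens only under $\head$ or $\tail$, $\mathrm{Red}$ is the set of terms all of whose finite observation contexts are SN. Guardedness makes every such observation expand exactly one constructor layer and then stop.
\end{itemize}
Universe elements $\Uc_j$ with $j<i$ are interpreted by the candidate already defined at the lower level. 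This is the usual impredicativity-free stratification by universe levels.

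\textbf{Step 2: the candidate properties.} I will prove the standard CR1--CR3 properties:
\begin{itemize}
\item reducible terms are SN;
\item $\mathrm{Red}$ is closed under reduction;
\item neutral terms all of whose one-step reducts are reducible are themselves reducible.
\end{itemize}
This yields that variables are reducible.

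\textbf{Step 3: the fundamental lemma.} If $\Gamma\vdash t:A$ and $\sigma$ is a substitution sending each variable to a reducible term, then $t[\sigma]\in\mathrm{Red}_{A[\sigma]}$. The proof is by induction on the derivation, using Lemma~\ref{lem:substitution} to keep the substituted judgements well typed. Taking $\sigma=\mathrm{id}$ and using CR1 then gives the theorem.

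The cases of the fundamental lemma are as follows.
\begin{itemize}
\item $\lambda$ uses CR3 together with the substitution $t[a/x]$.
\item $J$ is handled by an inner induction on $\nu(a)+\nu(b)+\nu(p)+\nu(d)$, where $\nu$ is the maximal reduction length. The only head redex, $J(d,a,a,\refl_a)\to d(a)$, lands in a reducible term.
\item $\mathrm{id}_\tau$ and $\epsilon_2\circ\epsilon_1$ are immediate, since these types carry no computation rules in $\Uc$.
\item $\nil$ and $\step$ preserve SN componentwise.
\end{itemize}
The conversion rule is needed because types are only identified up to $\equiv$. For it I must show that $\mathrm{Red}_A=\mathrm{Red}_{A'}$ whenever $A\equiv A'$. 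I will first establish confluence of $\beta$ and $J$-$\beta$ by parallel reduction, and then define candidates on normal forms of types.

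\textbf{Main obstacle.} I expect the cofix case to be the hardest. Reduction is closed under context, so one must check that unrestricted unfolding inside arbitrary subterms is prevented. I will make this precise by restricting the reduction relation so that $\mathsf{cofix}\,f.t$ reduces only in the position $\head(\cdot)$ or $\tail(\cdot)$. Then each observation step strictly decreases a measure: the nesting depth of destructors applied to the term, paired lexicographically with the SN measure of the guarded body. Guardedness, with recursive calls appearing only in the second component of $(e,\cdot)$, guarantees that this measure decreases. If closure under context turns out to permit stray unfoldings, the fallback is to prove SN for the observation-restricted reduction and note that this is the reduction fixed by the cofix strategy above.
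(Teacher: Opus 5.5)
Your route is the one the paper's proof idea names: Girard--Tait reducibility on the $\Uc$-fragment, with stratification excluding $\restrict$-$\beta$, $\mu/\nu$ and any $\Prop\to\Uc$ elimination. The paper only asserts that $\Uc$ is a standard MLTT fragment with ``no non-structural cofix''. However, the step you single out as the main obstacle fails as you state it.

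\textbf{The cofix case.} You appeal to guardedness in the form ``recursive calls appear only in the second component of $(e,\cdot)$''. That condition does not give termination. The term $c:=\mathsf{cofix}\ f.\,\langle s,(e,\pi_2(\tail\,f))\rangle$ satisfies it. Yet $\pi_2(\tail\,c)$ is a well-typed term of type $\InfTrace$ with the reduction cycle $\pi_2(\tail\,c)\to\pi_2(e,\pi_2(\tail\,c))\to\pi_2(\tail\,c)$. Every unfolding in this cycle is triggered by an observation, so your fallback (unfolding only under $\head/\tail$) does not help, and no measure can decrease around a cycle.

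What you need is the structural guard that the paper's ``no non-structural cofix'' tacitly assumes. Recursive calls may occur only directly beneath constructors, never beneath a destructor, a projection, or an application of an arbitrary function. Then $f'[c/f]$ is a constructor context around copies of $c$, so reaching $c$ again consumes part of the outer observation. Induction on the length of the observation context, lexicographically with the SN heights of $s$, $e$ and the non-recursive parts of $f'$, then places $\mathsf{cofix}$ terms in your observational candidate for $\InfTrace$.

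\textbf{The conversion step.} Defining candidates ``on normal forms of types'' is circular here. Types are $\Uc$-terms that may contain arbitrary computation, e.g.\ $(\lambda x.\,B)\,a$ with $B:\Uc_j$, or a $J$-term whose motive is constantly a universe. For $A:\Uc_j$, existence of a normal form is an instance of the theorem at level $j+1$, so the definition presupposes what is being proved.

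The standard remedy is to define ``$A$ is a reducible type of level $j$'' inductive-recursively along weak-head reduction. $A$ must weak-head reduce to a $\Pi$ of reducible types, to a base or $\Id$-type, to $\Uc_k$ with $k<j$, or to a neutral term. Alternatively, interpret $\Uc_j$ semantically as the set of level-$j$ candidates. Only after that do you prove that $\mathrm{Red}$ is invariant under $\equiv$. For this, note that the paper's $\equiv$ contains $\eta$, which is not among its reductions. So confluence of $\beta$ and $J$-$\beta$ alone does not give convertible types a common reduct. Either include $\eta$ in the confluence argument, or check directly that your candidates are insensitive to $\eta$-differences.
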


\begin{proof}[Proof idea]
The computational layer has:

\begin{itemize}
  \item $\Uc$ as a pure MLTT subsystem;
  \item no dependency on $\TypeL$ or $\Prop$;
  \item no $\mu/\nu$;
  \item no general recursion;
  \item no non-structural cofix.
\end{itemize}

So $\Uc$ is a standard strongly normalising dependent type system.

One can use Girard--Tait reducibility candidates or logical relations to show that all well-typed terms are strongly normalising. Hence the theorem.
\end{proof}

\subsubsection{Relative normalisation for $\TypeL$}

\begin{theorem}[Relative normalisation for $\TypeL$]
\label{thm:TypeL-normalization}
If
\[
\Gamma \vdash t : B
\quad\text{and}\quad
\Gamma \vdash B : \TypeLi{i},
\]
then $t$ is weakly normalising provided $\mathsf{cofix}$ is not unfolded.
\end{theorem}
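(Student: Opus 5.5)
We prove weak normalisation for the $\TypeL$ layer by reducing it to Theorem~\ref{thm:Uc-normalization} plus a termination argument for the $\restrict$-rewrite rules. With $\mathsf{cofix}$ kept as an opaque normal form, the only reductions available on a term $t$ with $\Gamma\vdash t:B$ and $\Gamma\vdash B:\TypeLi{i}$ are of three kinds: $\beta$ and $J$-$\beta$ redexes (inherited from the $\Pi$/$\Id$ fragment), the two $\restrict$-$\beta$ rules, and redexes inside subterms living in $\Uc$ (traces $\tau$, extension proofs $\epsilon$, states and events). By the stratification of Section~\ref{sec:stratification}, no $\Uc$-subterm depends on a $\TypeL$ or $\Prop$ subterm. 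The $\Uc$-subterms can therefore be normalised first, independently, using Theorem~\ref{thm:Uc-normalization}; subject reduction keeps everything well typed. The $\mu/\nu$ fold/unfold terms of the Prop layer are treated as constants with only the $\mathsf{fold}\circ\mathsf{unfold}$ cancellation, which strictly shrinks the term.

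The strategy is innermost-first:
\begin{enumerate}
\item Normalise every maximal $\Uc$-subterm. This terminates by Theorem~\ref{thm:Uc-normalization}.
\item Normalise the $\Pi$/$\Id$ skeleton of $t$. Since $\TypeL$ adds no new type formers that create $\beta$-redexes beyond those of MLTT, we extend the Girard--Tait reducibility interpretation used for $\Uc$ to $\TypeLi{i}$. We interpret $K_f(\tau)$ as the set of strongly normalising terms of that type. We then check that $\restrict(\epsilon,k)$ is reducible whenever $\epsilon$ and $k$ are, i.e.\ that it is closed under the restrict-redexes.
\item Eliminate the remaining $\restrict$-redexes.
\end{enumerate}

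For step 3 we use a measure: the multiset of nesting depths of $\restrict$-occurrences, plus the number of $\mathrm{id}$ arguments. The rule $\restrict(\mathrm{id},k)\to k$ deletes one occurrence. The rule $\restrict(\epsilon_1,\restrict(\epsilon_2,k))\to\restrict(\epsilon_2\circ\epsilon_1,k)$ merges two nested occurrences into one. Both rules strictly decrease the number of $\restrict$ symbols. They do not create new $\beta$-redexes, since $\restrict$ is not a binder and its result is not a $\lambda$. So a normal form is reached.

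The main obstacle is step 2's interaction between $\beta$ and $\restrict$. Substituting a term $k$ that is itself a $\restrict$-term for a variable under $\restrict(\epsilon,x)$ can create a new nested-restrict redex. It can also substitute $\mathrm{id}$ for an extension variable, creating $\restrict(\mathrm{id},\cdot)$. This is exactly why we only claim weak normalisation. We will not try to show that all reduction orders terminate. Instead we exhibit one terminating order: full $\beta$-normalisation of the MLTT skeleton, treating $\restrict$ as an uninterpreted constant of type $\Extf(\tau,\tau')\to K_f(\tau')\to K_f(\tau)$. This terminates because adding a constant preserves strong normalisation. Only afterwards do we normalise the restrict-redexes with the counting argument; these never create $\beta$-redexes, so the combined result stays normal.

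The hypothesis that $\mathsf{cofix}$ is not unfolded is used throughout to block the only non-terminating source. Guarded $\mathsf{cofix}$ subterms are treated as normal values, consistent with the ``expand on observation'' strategy.
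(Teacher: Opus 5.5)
Your proposal is correct and rests on the same idea as the paper's proof sketch. That idea is that the $\TypeL$ layer adds only structural $\restrict$-steps, which open no new reduction paths and do not increase reduction depth, on top of a normalising MLTT/$\Uc$ core, with guarded $\mathsf{cofix}$ kept opaque. Your version is more explicit than the paper's: you $\beta$/$J$-normalise first with $\restrict$ as an uninterpreted constant, then remove $\restrict$-redexes, each of which strictly decreases the number of $\restrict$ symbols and creates no $\beta$-redex because $K_f(\tau)$ is never a $\Pi$-type.
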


\begin{proof}[Proof idea]

$\TypeL$ has:

\begin{itemize}
  \item Non-monotonic restriction only as structural maps, no new reduction paths;
  \item no $\mu$ in $\TypeL$;
  \item cofix only at $\InfTrace$;
  \item all cofix guarded.
\end{itemize}

So: no divergent constructive evidence; restriction does not increase reduction depth; evidence terms do not generate infinitely many new redexes. Thus weak normalisation holds under finite observation.
\end{proof}

\paragraph{Remark}

This is ``relative normalisation'': if cofix is not unfolded, there is no infinite reduction chain.

\subsection{Consistency}

\subsubsection{Empty proposition has no proof}

\begin{theorem}[Consistency]
\label{thm:consistency}
If
\[
\Gamma \vdash t : \bot,
\]
then the system is inconsistent.
\end{theorem}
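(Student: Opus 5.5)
The statement is essentially a reformulation of what ``inconsistent'' means, so the plan is to fix that definition precisely and then reduce the general case to it. I would define the system to be \emph{inconsistent} when there is a closed derivation $\cdot\vdash t_0:\bot$. A raw derivation $\Gamma\vdash t:\bot$ cannot by itself count as inconsistency, since the context $\Gamma = x:\bot$ trivially yields $\Gamma\vdash x:\bot$. So the statement has to be read for contexts $\Gamma$ that are \emph{realisable}: there is a closed substitution $\cdot\vdash\sigma:\Gamma$, and in particular every $\Uc$-context built from inhabited base types such as $\State$, $\Event$, $\FinTrace$ qualifies. Given such a $\sigma$, the Substitution Lemma~\ref{lem:substitution} yields $\cdot\vdash t[\sigma]:\bot[\sigma]$. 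Since $\bot$ is a closed constant of $\Prop$, $\bot[\sigma]\equiv\bot$, so $t[\sigma]$ is a closed proof of $\bot$, which is inconsistency by definition. This direction is routine.

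The substantive content behind the name ``Consistency'' is the contrapositive: no closed term inhabits $\bot$. I would obtain it syntactically in three steps. First, given $\cdot\vdash t:\bot$, note that $\bot:\Prop$ and $\Prop:\TypeLi{0}$. By Theorem~\ref{thm:TypeL-normalization}, $t$ then reduces, without unfolding $\mathsf{cofix}$, to a normal form $t'$, and by Subject Reduction $\cdot\vdash t':\bot$. Second, I would prove a canonical-forms lemma by induction on the typing derivation. A closed normal term contains no free variable, so it cannot be neutral with a variable at its head. It must therefore be an introduction form, a $\restrict$ or $J$ on a canonical argument, or a $\mathsf{fold}$/$\mathsf{unfold}$ application. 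Third, I would inspect the rules: no introduction rule targets $\bot$, so none of these forms has type $\bot$, which is a contradiction.

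The main obstacle is the canonical-forms lemma in the presence of the constants $\mathsf{fold}$ and $\mathsf{unfold}$ for $\mu/\nu$, and of $J$ and $\restrict$. For instance, $\mathsf{unfold}$ applied to a closed normal inhabitant of $\mu X.\varphi$ must be shown to be a redex. This requires that the closed inhabitant is itself of the form $\mathsf{fold}(w)$, with $\mathsf{unfold}(\mathsf{fold}(w))\to w$ oriented as a reduction. I would try first to add this orientation to the reduction relation and show that closed normal inhabitants of $\mu$-types are $\mathsf{fold}$s. If that fails, the fallback is semantic. I would build the trivial set-valued presheaf model: interpret $\Prop$ as $\Sub(1)$ and $\bot$ as the empty subobject, take $\mu/\nu$ as least and greatest fixed points via Knaster--Tarski, which is legitimate by the monotonicity side condition, and let $K_f$ be any presheaf. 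Soundness of the rules in this model then shows that $\llbracket\bot\rrbracket$ has no global element, so no closed $t:\bot$ exists.
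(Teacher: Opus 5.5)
Your proposal does reach a correct proof, but only through its fallback, which is essentially the paper's own argument. Your primary syntactic route is genuinely different, and it cannot be completed from what the paper supplies.

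\textbf{Comparison with the paper.} The paper proves only the substantive claim that there is no closed $\vdash t:\bot$. Strictly, that claim is consistency itself, not the contrapositive of the displayed implication. The argument runs as follows: $\bot$ is interpreted as the initial subobject; in a non-degenerate base such as the $\mathbf{Set}$-presheaf model it has no global section; and Soundness would turn a closed proof into one. The paper's proof explicitly treats only the $\mu$-free case and merely remarks that monotone fixed points are harmless. Your Knaster--Tarski interpretation of $\mu/\nu$ on the complete lattices $\Sub(1_{\llbracket\Gamma\rrbracket})$ therefore fills in something the paper leaves implicit.

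\textbf{The primary syntactic route.} It would give a model-independent proof, but it rests on Theorem~\ref{thm:TypeL-normalization}. That theorem is only sketched, and it is stated for a reduction relation with no $\mathsf{fold}/\mathsf{unfold}$ or $\mathsf{in}/\mathsf{out}$ rules. Once you orient $\mathsf{unfold}(\mathsf{fold}(w))\to w$, you must re-prove normalisation for fixed points that are merely positive (e.g.\ $X$ inside $(X\to\bot)\to\bot$), which needs its own Mendler-style reducibility argument. Your canonical-forms lemma also presupposes a complete list of proof-term introductions and eliminations for $\wedge,\vee,\to,\forall,\exists$, and the paper never fixes one. The semantic argument should therefore be your main proof rather than the fallback.

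\textbf{Open contexts.} Your reduction of open contexts to closed ones goes beyond the paper, but your example is wrong. $\State$ and $\Event$ have no introduction rules, so there are no closed terms of $\State$, $\Event$, or (via $\nil$) $\FinTrace$. Hence contexts like $s:\State$ or $\tau:\FinTrace$ admit no closed substitution. The semantic route handles such contexts directly. Take the $\mathbf{Set}$ model with $\llbracket\State\rrbracket$ and $\llbracket\Event\rrbracket$ nonempty. A derivation $\Gamma\vdash t:\bot$ would give a section of the empty subobject over the nonempty set $\llbracket\Gamma\rrbracket$, which is impossible. So no such derivation exists, and the displayed implication holds vacuously for every context whose interpretation has a point, with no substitution needed. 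Also, $x:\bot$ is not a legal context under the paper's extension rule, which only adjoins $\Uc_i$-types; your caution is understandable given how loosely the rules are stated.
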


\begin{proof}[Proof idea]

To show consistency we prove there is no closed term
\[
\vdash t : \bot.
\]
Without $\mu$ extension:

\begin{itemize}
  \item Prop is interpreted as the internal logic of the subobject classifier $\Omega$;
  \item $\bot$ as the initial subobject;
  \item if the base category is consistent (has a non-initial object), $\bot$ has no global section.
\end{itemize}

So in any consistent model, $\llbracket \bot \rrbracket = \emptyset$. By soundness, if $\vdash t:\bot$ then $\llbracket t \rrbracket$ would be an element of that empty object, a contradiction. So the system is consistent.
\end{proof}

\paragraph{Set model (consistency concretely)}

Take the base category $\mathcal{C}=\mathbf{Set}$ and $\mathcal{T}_f$ the finite-trace category (free category on $\State,\Event$). Then $\mathrm{PSh}(\mathcal{T}_f^{\mathrm{op}},\mathbf{Set})$ gives the presheaf semantics for $\TypeL$; $\Prop$ is the subobject classifier $\Omega$, $\bot$ the initial subobject (empty set/empty section). By Soundness there is no closed $\vdash t:\bot$, so $\llbracket\bot\rrbracket=\emptyset$ in this model; consistency follows.

\paragraph{Important note}

If $\mu/\nu$ is extended to $\Prop$, monotonicity conditions are needed to preserve consistency. With only monotone fixed points, consistency is still given by standard $\mu$-calculus semantics.

\paragraph{Summary}
Subject Reduction, strong normalisation ($\Uc$), relative normalisation ($\TypeL$), and consistency can all be established by layer.

\section{Categorical Semantics of NM-DEKL$^3_\infty$}

This section gives the model structure only, without repeating motivation. The computational layer is interpreted in a CwF; the constructive layer in the presheaf category $[\mathcal{T}_f^{\mathrm{op}},\mathcal{C}/\Gamma]$; the propositional layer as the subobject classifier $\Sub(1_\Gamma)$. Non-monotonicity in the model is given by the natural transformation for restriction: $\llbracket \restrict(\epsilon, k) \rrbracket = \llbracket K_f \rrbracket(m_\epsilon)(\llbracket k \rrbracket)$.

\medskip

We give the precise categorical semantics. First we describe the category of traces, then the three-layer model, and finally non-monotonicity and limit semantics for infinite knowledge.

\subsection{Basic categorical structure}

\subsubsection{Trajectory category $\mathcal{T}_f$}

\begin{definition}[Trajectory category $\mathcal{T}_f$]
Let $\State$ be the set of states and $\Event$ the set of events. Finite traces form a free category:
\[
\mathcal{T}_f = \mathrm{FreeCat}(\State,\Event).
\]
Structure:
\begin{itemize}
  \item \textbf{Objects}: states $s \in \State$;
  \item \textbf{Morphisms}: from $s$ to $s'$, all finite event sequences from $s$ to $s'$;
  \item \textbf{Composition}: concatenation of sequences;
  \item \textbf{Identity}: the empty sequence.
\end{itemize}
\end{definition}

$\mathcal{T}_f$ encodes the composition of finite evolutions: objects are states $\mathrm{Ob}(\mathcal{T}_f)=\State$, morphisms are finite event sequences (finite traces), composition is concatenation, identity is the empty sequence $\mathrm{id}_s:s\to s$. Event sequences as morphisms capture only the compositional structure; whether an event can occur at a state and the transition function are extra semantics not in the free category.

\subsubsection{Infinite trace category}

The free category $\mathcal{T}_f$ describes \emph{finite} composable structure: morphisms are finite event sequences and composition is concatenation. So it encodes ``finite evolution''. Infinite traces are better modelled as ``never-ending streams'', typically by coalgebra (observe/unfold) rather than algebra (generate).

\begin{definition}[Coalgebra structure for infinite traces]
Let the functor
\[
F(X) = \State \times (\Event \times X)
\]
be defined on a suitable category $C$. If a final coalgebra $\nu F$ exists, set
\[
\InfTrace := \nu F
\]
with structure map
\[
\mathrm{out} : \InfTrace \to \State \times (\Event \times \InfTrace).
\]
This captures infinite evolution.
\end{definition}

In $\mathbf{Set}$, if $\State$ and $\Event$ are sets, the final coalgebra exists (by standard coalgebra theory).

\paragraph{Meaning of $F(X)=\State \times (\Event \times X)$}

An element of an $F$-structure is: current state + next event + remainder (again an $F$-structure). So for an ``infinite trace object'' we can observe: the current state $s \in \State$; the next event $e \in \Event$; and the rest as an infinite trace. The structure map $\mathrm{out} : \InfTrace \to \State \times (\Event \times \InfTrace)$ decomposes an infinite trace into head (state + event) and tail (remainder).

\paragraph{Final coalgebra}

An $F$-coalgebra is a pair $(X,\gamma)$ with $\gamma : X \to F(X)$. The final coalgebra $(\nu F,\mathrm{out})$ is such that for any $(X,\gamma)$ there is a unique coalgebra morphism $\llbracket - \rrbracket : X \to \nu F$ with
\[
\mathrm{out} \circ \llbracket - \rrbracket = F(\llbracket - \rrbracket) \circ \gamma.
\]
So $\nu F$ is the canonical container of ``all possible infinite behaviours''; any system with structure $X \to \State \times (\Event \times X)$ maps uniquely into $\nu F$, giving its behaviour semantics.

\paragraph{Concrete form in $\mathbf{Set}$}

In $\mathbf{Set}$, $\nu F$ is (isomorphic to) infinite sequences $(s_0,e_0,s_1,e_1,s_2,e_2,\dots)$. The map $\mathrm{out}$ is head/tail: $\mathrm{out}(\text{full trace}) = (s_0, (e_0, \text{rest}))$.

\subsection{Three-layer semantic structure}

We use a fibred locally cartesian closed category (LCCC):

\begin{itemize}
  \item \textbf{Computational layer}: $\Uc \mapsto C$ with $C$ an LCCC.
  \item \textbf{Constructive layer}: $\TypeL \mapsto \mathrm{PSh}(\mathcal{T}_f, C) = [\mathcal{T}_f^{op}, C]$.
  \item \textbf{Propositional layer}: $\Prop \mapsto \Omega$, where $\Omega$ is the subobject classifier in $C$.
\end{itemize}

So $\Uc \mapsto C$, $\TypeL \mapsto \mathrm{PSh}(\mathcal{T}_f, C)$, $\Prop \mapsto \Omega$. This yields a fibred LCCC with the constructive layer as presheaf fibre over the computational layer.

\subsubsection{LCCC and dependent types}

Let $C$ be a locally cartesian closed category (LCCC). This is a key structural assumption.

\begin{itemize}
  \item A cartesian closed category (CCC) has $A \times B$ and $A \to B$.
  \item An LCCC additionally requires that for every object $X$, the slice $C/X$ is a CCC.
\end{itemize}

This corresponds to dependent type theory: $\Sigma_{x:X} A(x)$ and $\Pi_{x:X} A(x)$. So $C$ is LCCC iff the semantics supports dependent types, context substitution, and dependent function formation.

\subsubsection{Computational layer: $\Uc \mapsto C$}

The computational layer is interpreted in an LCCC: $\Uc \mapsto C$. It carries data (e.g.\ $\Nat$, $\Bool$), programs and functions, and basic execution structure. $C$ may be $\mathbf{Set}$, a domain-theoretic category, or a category of execution objects with resources. This layer is essentially the semantics of a standard programming language.

\subsubsection{Constructive layer: $\TypeL \mapsto [\mathcal{T}_f^{op},C]$}

The constructive layer is the presheaf category $\TypeL \mapsto \mathrm{PSh}(\mathcal{T}_f,C) = [\mathcal{T}_f^{op},C]$.

\paragraph{Meaning of presheaves}

A presheaf $F:\mathcal{T}_f^{op}\to C$ assigns to each state $s$ an object $F(s)\in C$ and to each trace $t:s\to s'$ a reindexing map $F(t):F(s')\to F(s)$. By contravariance, this pulls back structure from ``later'' to ``earlier'' and reinterprets evidence and types along traces. $[\mathcal{T}_f^{op},C]$ describes types or knowledge objects that vary with state and trace; $F(s)$ is what is constructible or verifiable at state $s$. Reindexing ensures evidence and objects are transported and audited consistently as the system evolves.

\subsubsection{Propositional layer: $\Prop \mapsto \Omega$}

The propositional layer is the subobject classifier: $\Prop \mapsto \Omega$. In $\mathbf{Set}$, $\Omega = \{\mathsf{true},\mathsf{false}\}$; in a general topos, $\Omega$ is the truth-value object. For every mono $m:U\hookrightarrow X$ there is a unique characteristic map $\chi_m:X\to\Omega$. So propositions = subobjects, properties = characteristic maps; proof theory is explained via subobjects and characteristic maps.

\subsubsection{Fibration structure}

The three layers are organised as a fibration:

\[
p:\TypeL\to\Uc.
\]

Meaning: when the base (computational layer) changes, the types above are reindexed. In dependent type semantics, substitution $\sigma$ corresponds to pullback. Since $C$ is an LCCC, slices support $\Pi/\Sigma$, so dependent types are preserved under reindexing. The three-layer structure is a fibred LCCC with the constructive layer as presheaf fibre over the computational layer. Summary: computational layer $C$ (base program semantics); constructive layer $[\mathcal{T}_f^{op},C]$ (state- and trace-indexed types/knowledge); propositional layer $\Omega$ (subobject classifier).

\subsection{Semantic theorem for non-monotonicity}\label{sec:semantics-nonmono}

\begin{theorem}[Restriction from contravariant structure]
Let $K_f : \mathcal{T}_f^{op} \to C$ be a presheaf. Then for every morphism
\[
m : \tau \to \tau'
\]
in $\mathcal{T}_f$ there is a contravariant map:
\[
\mathrm{res}_{m} :
K_f(\tau') \to K_f(\tau).
\]

This map is uniquely determined by the presheaf structure.
\end{theorem}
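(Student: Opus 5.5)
The plan is to read the statement directly off the definition of a presheaf as a functor $K_f:\mathcal{T}_f^{\mathrm{op}}\to C$, and to split the claim into an existence part and a uniqueness part.

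\textbf{Existence.} A morphism $m:\tau\to\tau'$ in $\mathcal{T}_f$ is, by definition of the opposite category, a morphism $m^{\mathrm{op}}:\tau'\to\tau$ in $\mathcal{T}_f^{\mathrm{op}}$. Applying the functor $K_f$ to it gives a morphism $K_f(m^{\mathrm{op}}):K_f(\tau')\to K_f(\tau)$ in $C$. I will simply define
\[
\mathrm{res}_m := K_f(m^{\mathrm{op}}).
\]
In addition, I will record that the functor laws $K_f(\mathrm{id})=\mathrm{id}$ and $K_f((m'\circ m)^{\mathrm{op}})=K_f(m^{\mathrm{op}})\circ K_f(m'^{\mathrm{op}})$ give
\[
\mathrm{res}_{\mathrm{id}_\tau}=\mathrm{id}_{K_f(\tau)}, \qquad \mathrm{res}_{m'\circ m}=\mathrm{res}_m\circ\mathrm{res}_{m'}.
\]
These are the semantic counterparts of the syntactic rules $\restrict(\mathrm{id},k)\equiv k$ and $\restrict(\epsilon_1,\restrict(\epsilon_2,k))\equiv\restrict(\epsilon_2\circ\epsilon_1,k)$. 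This also justifies the clause $\llbracket\restrict(\epsilon,k)\rrbracket=\llbracket K_f\rrbracket(m_\epsilon)(\llbracket k\rrbracket)$ stated at the start of the section. The only bookkeeping to watch is that objects of $\mathcal{T}_f$ as defined are states. I will therefore read $\tau,\tau'$ as the objects (endpoints) and $m$ as an event sequence, with the trace extension $\epsilon:\Extf(\tau,\tau')$ corresponding to such an $m_\epsilon$.

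\textbf{Uniqueness.} This is the delicate point, because the statement must be interpreted correctly before it can be proved. It is false that $K_f(\tau')\to K_f(\tau)$ has only one arrow in $C$. The intended meaning is that the presheaf data \emph{determines} the map: a presheaf is the pair of its object function and its morphism function. So for a fixed $K_f$ there is exactly one arrow assigned to $m$, namely the one above, since a functor is a function on hom-sets. I will make this precise as follows. Suppose $r_m:K_f(\tau')\to K_f(\tau)$ is any family of maps indexed by the morphisms of $\mathcal{T}_f$ such that $(K_f|_{\mathrm{Ob}},r)$ is the same presheaf $K_f$. Then $r_m=K_f(m^{\mathrm{op}})$ by equality of functors.

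For a nontrivial strengthening, I would additionally note the following. Because $\mathcal{T}_f$ is free on the event graph, a presheaf is uniquely determined by its values on objects together with the maps assigned to single-event generators. Hence $\mathrm{res}_m$ for a sequence $m=e_1\cdots e_n$ is forced to be the composite $\mathrm{res}_{e_1}\circ\cdots\circ\mathrm{res}_{e_n}$, by induction on $n$ using the composition law above. Finally, I will remark that no covariant map $K_f(\tau)\to K_f(\tau')$ is supplied by this structure, which is the semantic form of non-monotonicity.
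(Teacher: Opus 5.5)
Your proposal is correct and follows the paper's proof: both obtain $\mathrm{res}_m$ by applying the contravariant functor $K_f$ to $m$ (i.e.\ to $m^{\mathrm{op}}$) and read off the identity and composition laws from functoriality. Your reading of the uniqueness clause as ``the functor's action on morphisms fixes the arrow'' makes explicit a point the paper's proof leaves implicit; so does the optional reduction to single-event generators via freeness of $\mathcal{T}_f$.
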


\begin{proof}
Since $K_f$ is contravariant,
\[
K_f : \mathcal{T}_f^{op} \to C,
\]
each $m:\tau\to\tau'$ in $\mathcal{T}_f$ yields a morphism
\[
K_f(m) : K_f(\tau') \to K_f(\tau)
\]
in $C$. The structure satisfies
\[
K_f(\mathrm{id}) = \mathrm{id},
\qquad
K_f(m\circ n) = K_f(n)\circ K_f(m).
\]

So the restriction map comes from the presheaf contravariance.
\end{proof}

\paragraph{(1) Model semantics of restrict}

Let $K_f : \mathcal{T}_f^{op} \to C$ be the constructive-layer presheaf and $\epsilon : \Ext_f(\tau,\tau')$ correspond to the morphism $m_\epsilon : \tau \to \tau'$ in $\mathcal{T}_f$. Then in any model $\mathcal{M}$, the semantics of restriction is:

\[
\llbracket \restrict(\epsilon,k) \rrbracket
=
\llbracket K_f \rrbracket(m_\epsilon)
\bigl(
  \llbracket k \rrbracket
\bigr),
\]

where
\[
\llbracket K_f \rrbracket(m_\epsilon)
:
\llbracket K_f(\tau') \rrbracket
\to
\llbracket K_f(\tau) \rrbracket.
\]

So:
\begin{itemize}
  \item A restriction map from $K_f(\tau')$ to $K_f(\tau)$ exists only when there is an extension $\epsilon : \tau \to \tau'$;
  \item restriction is uniquely determined by the presheaf contravariance;
  \item it expresses ``reindexing/backward interpretation of evidence'', not ``forward construction''.
\end{itemize}
Restriction is substitution = pullback in the trace dimension.

\paragraph{(2) Failure in the model: empty object and no global elements}

In the model, failure is characterised as follows. Let $\Gamma$ be a context and $A$ a type with interpretation $\llbracket A \rrbracket \to \llbracket \Gamma \rrbracket$. We say $A$ \emph{fails} at $\Gamma$ in the model if any of the following equivalent conditions hold:

\begin{enumerate}
  \item (No global elements) $\mathrm{Hom}(1_{\llbracket\Gamma\rrbracket}, \llbracket A \rrbracket) = \varnothing$;

  \item (Set semantics) $\Tm(\llbracket\Gamma\rrbracket, \llbracket A\rrbracket) = \varnothing$;

  \item (If $C$ has initial object $0$) $\llbracket A\rrbracket$ is initial in $C/\llbracket\Gamma\rrbracket$.
\end{enumerate}

Note: ``no global elements'' $\not\equiv$ ``object is initial''

In a general LCCC the two are not equivalent.

\medskip

Thus
\[
\text{failure}
=
\text{no inhabitants (no constructive term)}.
\]

This matches the syntactic notion of ``non-constructible''.

\paragraph{(3) Model reasoning for non-monotonicity}

Non-monotonicity comes from presheaf contravariance:
\[
m:\tau\to\tau'
\quad\Longrightarrow\quad
\mathrm{res}_m = K_f(m) : K_f(\tau') \to K_f(\tau),
\]
but in general there is no map $K_f(\tau) \to K_f(\tau')$. So: evidence can be restricted from long to short trace, but cannot be ``forward-constructed'' from short to long. That is the categorical characterisation of non-monotonicity.

\paragraph{(4) Failure propagates along restriction}

Let $\mathrm{res}_m : K_f(\tau') \to K_f(\tau)$. If $\mathrm{Hom}(1,K_f(\tau')) = \varnothing$, then there is no semantic value for $k$, hence none for $\restrict(\epsilon,k)$. So: if the long trace has no inhabitants, one cannot get short-trace evidence via restriction. This is called

\[
\textbf{Failure propagates along restriction}.
\]

Note this does not imply $K_f(\tau)=\varnothing \Rightarrow K_f(\tau')=\varnothing$, except in $C=\mathbf{Set}$ with ``failure = empty set''.

\paragraph{(5) Failure and non-monotonicity}

If the fibre at $\tau$ fails at $\Gamma$ ($\mathrm{Hom}(1,K_f(\tau))=\varnothing$), then: no direct evidence at $\tau$; the only possible source is restriction from some $\tau' \supseteq \tau$; but if all extended traces also have no inhabitants, $\tau$ fails permanently. So:

\[
\boxed{
\text{failure}
=
\text{no global elements}
=
\text{no forward construction}
}
\]

\[
\boxed{
\text{non-monotonicity}
=
\text{only inverse restriction exists}
}
\]

The two are logically equivalent in the model. Non-monotonicity is essentially the contravariant functoriality of the knowledge fibre.

\begin{theorem}[Unified semantics: monotone time and contravariant knowledge]
\label{thm:monotone-time-contravariant-knowledge}
Let $\mathcal{T}_f$ be the trajectory category (objects traces, morphisms extensions), $C$ an LCCC, and $K : \mathcal{T}_f^{op} \to C$ a presheaf. Then: (1) Morphisms $\tau \to \tau'$ in $\mathcal{T}_f$ represent monotone extension of time/state. (2) For each $m:\tau\to\tau'$ there is a restriction map $K(m):K(\tau')\to K(\tau)$, giving reindexing of knowledge along the trace. (3) Presheaf axioms do not imply $K(\tau) \preceq K(\tau')$ or $K(\tau') \preceq K(\tau)$, so knowledge need not be monotone. (4) The presheaf can have $m:\tau\to\tau'$ with $K(\tau')$ structurally richer than $K(\tau)$. (5) If $\mathrm{Hom}(1,K(\tau))=\varnothing$ at a given context, knowledge is non-constructible at that trace; this need not mean $K(\tau)$ is initial. So presheaf semantics unifies monotone time, contravariant knowledge, and possible non-monotone failure in one categorical structure.
\end{theorem}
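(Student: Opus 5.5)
The plan is to prove the five clauses separately. Clauses (1), (2) and (5) are essentially unpackings of definitions. Clauses (3) and (4) are independence statements, so I would establish them by exhibiting explicit presheaves in $C=\mathbf{Set}$, which is an LCCC.

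For (1), recall that a morphism $m:\tau\to\tau'$ in $\mathcal{T}_f$ is an extension. Its composition is concatenation and its identity is the empty extension. It therefore respects the preorder ``$\tau$ is a prefix of $\tau'$'', which is monotone time.

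For (2), I would invoke the preceding theorem on restriction from contravariant structure. It gives $K(m)=\mathrm{res}_m:K(\tau')\to K(\tau)$, satisfying $K(\mathrm{id})=\mathrm{id}$ and $K(m\circ n)=K(n)\circ K(m)$. By the model semantics of $\restrict$, this is exactly $\llbracket\restrict(\epsilon,k)\rrbracket$.

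For (5), I would appeal to the failure characterisation in paragraph (2) of Section~\ref{sec:semantics-nonmono}. Failure means $\mathrm{Hom}(1,K(\tau))=\varnothing$. To see that this need not force initiality, take $C$ to be a presheaf topos such as $\mathbf{Set}^{\to}$ and an object $X$ with no global section but $X\neq 0$. For example, $X=(\varnothing\to 1)$ has no map from $1=(1\to1)$, yet $X$ is not initial. Then define $K$ as the constant presheaf at $X$. Since $C$ is then not $\mathbf{Set}$, I would note that any LCCC $C$ is allowed by the hypothesis.

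For (3) and (4), I would read $\preceq$ as ``there exists a morphism in $C$'', or equivalently an inhabitation-preserving comparison. I would take a concrete trajectory category containing a morphism $m:\tau\to\tau'$ with $\tau\neq\tau'$.

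\emph{Counterexample to $K(\tau)\preceq K(\tau')$.} Set $K(\tau)=1$ and $K(\tau')=\varnothing$, with $K(m)$ the empty map $\varnothing\to 1$. Then there is no map $1\to\varnothing$.

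\emph{Counterexample to $K(\tau')\preceq K(\tau)$.} Define a presheaf whose restriction maps are forced to exist, so any such presheaf has a map $K(\tau')\to K(\tau)$ along $m$. Hence the order in (3) fails only in the first direction, and I would phrase (3) with that caveat. Alternatively, I would use two incomparable traces to get failure of both comparisons in general.

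For (4), take $K(\tau)=1$ and $K(\tau')=2$, with $K(m)$ the unique map. Here $K(\tau')$ is structurally richer, since it has more global elements.

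The remaining functoriality checks for these presheaves are routine, because the examples live on a free category and functors out of it are determined by generators. I would define $K$ on the generating events only, which automatically respects composition.

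The main obstacle is (3). The existence of $\mathrm{res}_m$ always gives a map $K(\tau')\to K(\tau)$, so ``$K(\tau')\preceq K(\tau)$'' cannot fail under the naive reading. I would therefore interpret $\preceq$ as a comparison of inhabitation, or of subobject inclusion compatible with $m$, and show non-monotonicity with respect to that reading. For example, use $K(\tau)=\varnothing$ and $K(\tau')=\varnothing$ on one branch and an inhabited fibre on another, comparing structure rather than mere existence of maps. Fixing this reading precisely is the step I expect to require the most care.
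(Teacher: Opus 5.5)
Your treatment of (1), (2) and (4) matches the paper. The paper gives no separate proof of this theorem: (2) is discharged by the ``restriction from contravariant structure'' theorem, and (4) by exactly your $1\leftarrow 2$ presheaf in Theorem~\ref{thm:presheaf-no-monotone-but-enrich-detailed}. For (5), your object $(\varnothing\to 1)$ in $\mathbf{Set}^{\to}$ is a correct concrete witness. The paper only asserts that ``no global elements'' and ``initial'' differ in a general LCCC.

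The gap is clause (3), which your proposal leaves open. You suggest three options: proving only the first half ``with a caveat'', switching to incomparable traces, or switching to an inhabitation order. None of these yields (3) as stated, and your own diagnosis shows why.
\begin{itemize}
\item Take any preorder on objects of $C$ in which a morphism $A\to B$ forces $A\preceq B$. Then the map $K(m)$ makes $K(\tau')\preceq K(\tau)$ automatic.
\item The inhabitation reading is such a preorder: a global element $x$ of $K(\tau')$ yields the global element $K(m)\circ x$ of $K(\tau)$.
\item Incomparable traces change the claim, which is about knowledge along an extension $m:\tau\to\tau'$.
\item Your branch example with empty fibres never compares $K(\tau)$ with $K(\tau')$ along $m$.
\end{itemize}

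The missing idea is that $\preceq$ has to be an order on $C$ that is extrinsic to its morphisms. The proof of Theorem~\ref{thm:presheaf-no-monotone-but-enrich-detailed}(1) argues precisely that the functor laws supply no order on $C$. In $\mathbf{Set}$ it takes $\preceq$ to be comparison of cardinalities. With that reading, neither direction is forced. Both are refuted by presheaves you have already built on the one-arrow category $\tau\to\tau'$:
\begin{itemize}
\item $K(\tau)=1$, $K(\tau')=\varnothing$ gives $|K(\tau')|<|K(\tau)|$, refuting $K(\tau)\preceq K(\tau')$;
\item $K(\tau)=1$, $K(\tau')=2$ gives $|K(\tau')|>|K(\tau)|$, refuting $K(\tau')\preceq K(\tau)$.
\end{itemize}
You are right that under the ``there exists a morphism'' reading the second half of (3) would actually be false. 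That is exactly why the reading must be fixed this way rather than the statement weakened.
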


\begin{theorem}[Presheaf does not force monotonicity but allows enrichment]
\label{thm:presheaf-no-monotone-but-enrich-detailed}
Let $C$ be a category, $\mathcal{T}_f$ a small category (trajectory category), and $K:\mathcal{T}_f^{op}\to C$ a presheaf. Then: (1) \textbf{(Does not force monotonicity)} From presheaf axioms alone one cannot deduce that along every $m:\tau\to\tau'$ knowledge increases monotonically; unless one adds an order $\preceq$ on $C$ and assumes $K(m)$ is compatible, one cannot deduce from $K(m):K(\tau')\to K(\tau)$ that $K(\tau)\preceq K(\tau')$ or $K(\tau')\preceq K(\tau)$. (2) \textbf{(Allows enrichment)} The presheaf axioms do not forbid $m:\tau\to\tau'$ with $K(\tau')$ ``richer'' than $K(\tau)$; for $C=\mathbf{Set}$ one can construct a presheaf $K$ and some $m:\tau\to\tau'$ with $|K(\tau')|>|K(\tau)|$.
\end{theorem}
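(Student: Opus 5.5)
Both parts are settled by explicit counterexamples in $C=\mathbf{Set}$ over a very small trajectory category. The presheaf axioms constrain only the functoriality of the maps $K(m)$, not the sizes or structure of the fibres. For part (1) it is enough to exhibit, for a fixed order $\preceq$ on objects (take $X\preceq Y$ iff there is an injection $X\hookrightarrow Y$, or iff $|X|\le|Y|$), presheaves that violate each of the candidate monotonicity conclusions. The statement ``one cannot deduce'' is then a non-derivability claim witnessed by models. For part (2) we need a single such witness with $|K(\tau')|>|K(\tau)|$.

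Concretely, take $\mathcal{T}_f$ to contain two objects $\tau,\tau'$ and one non-identity morphism $m:\tau\to\tau'$. This is the free category on one event between two states, so it is an instance of $\mathrm{FreeCat}(\State,\Event)$. Composition is then trivial, and the only functor laws to check are $K(\mathrm{id})=\mathrm{id}$ and composition with identities, both of which are automatic. We use two presheaves.

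\begin{itemize}
  \item $K_1(\tau)=\{0\}$ and $K_1(\tau')=\{0,1\}$, with $K_1(m)$ the constant map to $0$. Here $|K_1(\tau')|>|K_1(\tau)|$, which proves (2). It also violates $K(\tau')\preceq K(\tau)$.
  \item $K_2(\tau)=\{0,1\}$ and $K_2(\tau')=\{0\}$, with $K_2(m)$ the inclusion of $0$. This violates $K(\tau)\preceq K(\tau')$.
\end{itemize}

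Since both are genuine presheaves, neither direction of monotonicity follows from the presheaf axioms alone. This gives (1).

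To address the qualifier ``unless one adds an order and assumes $K(m)$ is compatible'', we remark that compatibility must be an extra hypothesis. For instance, requiring every $K(m)$ to be surjective forces $K(\tau)\preceq K(\tau')$ in the cardinality order, and $K_2$ shows this is not part of the presheaf definition. If one insists on a general $\mathcal{T}_f$ rather than this two-object category, the same data extend by defining $K$ as the constant singleton presheaf on all other objects. Alternatively, one can left Kan extend along the inclusion of the two-object subcategory and check that the relevant fibres are unchanged.

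The main obstacle is not technical but definitional. ``Richer'' and ``monotone'' are not defined in the paper, so the proof must fix a specific order $\preceq$ and state that the non-derivability holds for that order and for any order under which the cardinality comparison is reflected. After that, the remaining verifications are immediate.
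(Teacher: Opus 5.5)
Your proof is correct and essentially the paper's: the same two-object category with one non-identity $m:\tau\to\tau'$ and the same presheaf $K_1$ (the constant map $\{0,1\}\to\{0\}$) give (2), and your second presheaf $K_2$ makes explicit the case $|K(\tau')|<|K(\tau)|$, which the paper's part (1) asserts but never constructs. Drop the aside about extending to a general $\mathcal{T}_f$: it is false in general (for instance, if $m$ is invertible in $\mathcal{T}_f$ then $K(m)$ must be a bijection), and it is unnecessary, since the theorem, like the paper's proof, lets you choose the minimal $\mathcal{T}_f$.
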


\begin{proof}
The proof is in two parts.

\paragraph{(1) Presheaf does not force monotonicity.}

A presheaf is a functor $K:\mathcal{T}_f^{op}\to C$ satisfying the functor laws \eqref{eq:functor-id}--\eqref{eq:functor-comp} for $m:\tau\to\tau'$, $n:\tau'\to\tau''$ in $\mathcal{T}_f$. These only involve: assignment on objects and morphisms, and compatibility with identity and composition. ``Monotone growth'' is an order statement (e.g.\ $K(\tau)\subseteq K(\tau')$ in $\mathbf{Set}$ or $K(\tau)\preceq K(\tau')$ in general $C$). The presheaf definition does not give any such order on $C$ nor extra conditions on $K(m)$. So from \eqref{eq:functor-id}--\eqref{eq:functor-comp} alone one cannot deduce any order between $K(\tau)$ and $K(\tau')$. In $\mathbf{Set}$, presheaves can have $|K(\tau')|<|K(\tau)|$ or $|K(\tau')|>|K(\tau)|$ (see part (2)); so presheaf structure does not imply monotone growth. This shows the presheaf does \emph{not} force monotonicity.

\paragraph{(2) Presheaf allows enrichment: explicit construction in $\mathbf{Set}$.}

Let $C=\mathbf{Set}$. We build a minimal trajectory category $\mathcal{T}_f$: its objects are two trace points $\tau,\tau'$, and morphisms include the identity and one non-identity morphism $m:\tau\to\tau'$ (think of the free category on the directed graph $\tau \to \tau'$). Define the presheaf $K:\mathcal{T}_f^{op}\to\mathbf{Set}$ by:
\begin{itemize}
  \item On objects:
  \[
  K(\tau)=\{0\},
  \qquad
  K(\tau')=\{0,1\}.
  \]
  Then
  \[
  |K(\tau')|=2>|K(\tau)|=1,
  \]
  so the extended trace has ``richer'' knowledge.

  \item On morphisms:
  \begin{itemize}
    \item Identity morphisms map to identity functions:
    \[
    K(\mathrm{id}_\tau)=\mathrm{id}_{\{0\}},
    \qquad
    K(\mathrm{id}_{\tau'})=\mathrm{id}_{\{0,1\}}.
    \]
    \item Since $K$ is contravariant, $m:\tau\to\tau'$ in $\mathcal{T}_f$ becomes $m^{op}:\tau'\to\tau$ in $\mathcal{T}_f^{op}$, so we need a function $K(m):K(\tau')\to K(\tau)$, i.e.\ $K(m):\{0,1\}\to\{0\}$. Take the unique possibility (constant map):
    \[
    K(m)(0)=0,\qquad K(m)(1)=0.
    \]
  \end{itemize}
\end{itemize}

Next we check that $K$ satisfies the functor laws.

\subparagraph{Identity \eqref{eq:functor-id}.}
By definition, $K(\mathrm{id}_\tau)$ and $K(\mathrm{id}_{\tau'})$ are the identity functions on the corresponding sets, so \eqref{eq:functor-id} holds.

\subparagraph{Composition \eqref{eq:functor-comp}.}
In this minimal category the only non-identity morphism is $m$. The only possible compositions are:

\begin{itemize}
  \item $\mathrm{id}_\tau\circ \mathrm{id}_\tau=\mathrm{id}_\tau$,
        $\mathrm{id}_{\tau'}\circ \mathrm{id}_{\tau'}=\mathrm{id}_{\tau'}$;
  \item $m\circ \mathrm{id}_\tau=m$;
  \item $\mathrm{id}_{\tau'}\circ m=m$;
\end{itemize}
There is no $m\circ m$ because the codomain of $m$ is $\tau'$ and its domain is $\tau$. For the three cases: (i) identity composition is preserved by $K$; (ii) for $m\circ \mathrm{id}_\tau=m$, contravariance requires
    \[
    K(m\circ \mathrm{id}_\tau)=K(\mathrm{id}_\tau)\circ K(m),
    \]
    the left-hand side is $K(m)$, the right-hand side is $\mathrm{id}_{\{0\}}\circ K(m)=K(m)$, so it holds; (iii) for $\mathrm{id}_{\tau'}\circ m=m$, contravariance requires
    \[
    K(\mathrm{id}_{\tau'}\circ m)=K(m)\circ K(\mathrm{id}_{\tau'}),
    \]
    the left-hand side is $K(m)$, the right-hand side is $K(m)\circ \mathrm{id}_{\{0,1\}}=K(m)$, so it holds. So $K$ is a presheaf. We have constructed a $K$ satisfying the presheaf axioms and a morphism $m:\tau\to\tau'$ with $|K(\tau')|>|K(\tau)|$. So the presheaf \emph{allows} richer fibre objects (enrichment) as the trace extends.

\paragraph{Conclusion.}
(1) From the presheaf definition one only gets contravariant reindexing and functoriality, not any monotonicity; (2) in $\mathbf{Set}$ there is an explicit presheaf example showing ``extension gives enrichment'' without violating the presheaf axioms. The theorem is proved.
\end{proof}

\subsection{Limit semantics for $K_\infty$}

Let $\pi \in \InfTrace$ and define its finite prefixes:
\[
\prefix(n,\pi) \in \mathcal{T}_f.
\]

\begin{definition}[Limit interpretation of $K_\infty$]
Define
\[
\llbracket K_\infty \rrbracket(\pi)
=
\varprojlim_{n}
\llbracket K_f \rrbracket
\big(
\prefix(n,\pi)
\big).
\]

i.e.
\[
\llbracket K_\infty \rrbracket(\pi)
=
\lim_{\longleftarrow\, n}
K_f(\prefix(n,\pi)).
\]
\end{definition}

This defines knowledge on an infinite trace as the inverse limit of knowledge on all finite prefixes.

\begin{theorem}[Limit existence]
If the base category $C$ is complete, then the above inverse limit exists in $C$.
\end{theorem}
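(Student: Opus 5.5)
The plan is to exhibit the inverse limit as a limit of a small diagram in $C$ and then invoke completeness. Fix $\pi\in\InfTrace$. The first step is to make the indexing precise. The prefixes $\prefix(n,\pi)$ form a chain in $\mathcal{T}_f$: for each $n$ there is a canonical extension $m_n:\prefix(n,\pi)\to\prefix(n+1,\pi)$, given by appending the next event $e_n$ and state $s_{n+1}$ read off via $\mathrm{out}$. Viewing $(\mathbb{N},\le)$ as a small category $\omega$, these maps define a functor
\[
P_\pi:\omega\to\mathcal{T}_f,\qquad n\mapsto\prefix(n,\pi),\quad (n\le n')\mapsto m_{n'-1}\circ\cdots\circ m_n .
\]
Functoriality holds by construction: the $n\le n$ arrow goes to the identity (the empty sequence), and composites go to concatenations.

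The second step is to compose with the constructive-layer presheaf. This gives
\[
D_\pi=\llbracket K_f\rrbracket\circ P_\pi^{\mathrm{op}}:\omega^{\mathrm{op}}\to C .
\]
This is a tower $\cdots\to K_f(\prefix(n+1,\pi))\to K_f(\prefix(n,\pi))\to\cdots$ whose transition maps are exactly the restriction maps $\mathrm{res}_{m_n}=\llbracket K_f\rrbracket(m_n)$ from the theorem on restriction from contravariant structure. The presheaf functor laws, $K_f(\mathrm{id})=\mathrm{id}$ and $K_f(m\circ n)=K_f(n)\circ K_f(m)$, guarantee that $D_\pi$ is a genuine functor and not merely a sequence of objects and maps. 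The limit written $\varprojlim_n$ in the definition of $\llbracket K_\infty\rrbracket(\pi)$ is by definition $\lim D_\pi$.

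The final step is to observe that $\omega^{\mathrm{op}}$ is small, having countably many objects and at most one arrow between any two. Completeness of $C$ then gives a limiting cone $\bigl(\lim D_\pi,(p_n)_n\bigr)$ with $p_n=\mathrm{res}_{m_n}\circ p_{n+1}$, unique up to unique isomorphism. Hence $\llbracket K_\infty\rrbracket(\pi)$ exists for every $\pi$. Only sequential limits are needed, so countable completeness (products plus equalisers over $\mathbb{N}$) would already suffice. I would also remark that in $\mathbf{Set}$ the limit is the set of compatible families $(k_n)_n$ with $k_n=\mathrm{res}_{m_n}(k_{n+1})$, but this is only illustrative.

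The main obstacle is the first step, not the limit itself. The paper defines $\mathcal{T}_f$ with objects being states, whereas the limit semantics treats prefixes as objects of $\mathcal{T}_f$. I would resolve this by working in the category whose objects are finite traces and whose morphisms are extensions, as in Theorem~\ref{thm:monotone-time-contravariant-knowledge}, where $\prefix(n,\pi)\to\prefix(n+1,\pi)$ is a one-step extension. Alternatively, in the free-category reading, I would interpret $\prefix(n,\pi)$ as the endpoint state $s_n$ with $m_n$ the single event $e_n$. Either choice yields a functor $\omega\to\mathcal{T}_f$, and the rest of the argument is identical.
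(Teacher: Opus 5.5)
Your proposal is correct and follows the same route as the paper: the inverse limit is the limit of an $\omega^{\mathrm{op}}$-indexed (hence small) diagram, so completeness of $C$ provides it. Yours is the more explicit version. You build the diagram $D_\pi=\llbracket K_f\rrbracket\circ P_\pi^{\mathrm{op}}$ with the restriction maps pointing the right way (the paper's $f_{n,m}:A_n\to A_m$ for $n\le m$ is oriented like a direct system), and you address the states-versus-traces ambiguity in $\mathcal{T}_f$ that the paper leaves implicit.
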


\begin{proof}
An inverse limit is the limit of a diagram indexed by natural numbers (a small diagram). In a complete category $C$, all small limits exist; in particular, the inverse limit of a $\mathbb{N}$-indexed sequence of objects and morphisms $f_{n,m}:A_n\to A_m$ ($n\leq m$) exists in $C$. So $\varprojlim_n A_n$ exists. This proves the theorem.
\end{proof}

\paragraph{Semantic interpretation}

$K_\infty(\pi)$ denotes knowledge that exists stably along the infinite evolution $\pi$, i.e.\
\[
K_\infty(\pi)
=
\{ (k_0,k_1,k_2,\dots) \mid
\mathrm{res}(k_{n+1}) = k_n \}.
\]
So: at the finite level knowledge may fail (non-monotonicity); at the infinite level it is defined as a stable consistent limit structure.

\subsection{Categorical characterisation theorems}

This subsection lifts the model-theoretic semantics of NM-DEKL$^3_\infty$ from ``giving a model'' to ``characterising a class of structures with equivalence theorems''. The main goal is to show that the syntax of NM-DEKL$^3_\infty$ is the \emph{internal language} of a class of categorical structures and that its syntactic model plays the role of a \emph{classifying object}.

\subsubsection{Three-layer presheaf-CwF model class}

\paragraph{Motivation}
We describe models in a way standard in the type-theory community: computational layer as CwF/comprehension category, constructive layer as presheaf fibre, propositional layer as subobject classifier $\Omega$ (or equivalent Heyting structure), with stratification so the computational kernel is not polluted by $\Prop$.

\begin{definition}[Three-layer presheaf-CwF model ($\mathbf{PShCwF}^{3}$-model)]
\label{def:pshcwf3-model}
A three-layer categorical model $\mathcal{M}$ of NM-DEKL$^3_\infty$ consists of: (1) \textbf{Computational CwF}: a CwF $\mathcal{C}$ with $\Ty_c,\Tm_c$, substitution, comprehension, supporting $\Pi/\Sigma$ (and optional Id); (2) \textbf{Trajectory category}: a small category $\mathcal{T}_f$; (3) \textbf{Constructive presheaf fibre}: $\Ty_\ell(\Gamma)=[\mathcal{T}_f^{op},\mathcal{C}/\Gamma]$ for each $\Gamma$, with reindexing by pullback along substitution; (4) \textbf{Propositional layer}: $\Prop(\Gamma)=\Sub_{\mathcal{C}/\Gamma}(1_\Gamma)$ on each slice, compatible with reindexing; (5) \textbf{Stratification}: $\Type_\ell/\Prop$ may depend only on computational terms; no $\Prop\to\Uc$ elimination; (6) \textbf{(Optional) fixed points}: $\mu/\nu$ on $\Prop$ with monotonicity and fold/unfold. These objects and structure-preserving homomorphisms form a category, written
\[
\mathbf{PShCwF}^{3}_{\mathrm{NMDEKL}}.
\]
\end{definition}

\paragraph{Remark}
Definition~\ref{def:pshcwf3-model} is not ``a single model'' but a class of structured objects; it is the semantic stage for the characterisation theorems that follow.

\subsubsection{Main theorems: internal language and classifying object}

\begin{definition}[Syntactic model (syntactic $\mathbf{PShCwF}^{3}$)]
\label{def:syntactic-model}
Let $\mathcal{S}$ be the syntactic model of NM-DEKL$^3_\infty$: its context category is given by syntactic contexts quotiented by substitution; $\Ty_c,\Tm_c$ by derivable types/terms; $\Ty_\ell$ by the trajectory-indexed type families in the syntax (including restriction functor laws); $\Prop$ by syntactic propositions with proof-irrelevance (or syntactic $\Omega$).
\end{definition}

\begin{theorem}[Classifying / initiality]
\label{thm:classifying-initiality}
The syntactic model $\mathcal{S}$ is initial in the category $\mathbf{PShCwF}^{3}_{\mathrm{NMDEKL}}$: for every model $\mathcal{M}$ there is a unique morphism
\[
F_\mathcal{M}:\mathcal{S}\to\mathcal{M}.
\]
\end{theorem}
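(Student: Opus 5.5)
The plan is the standard initiality argument for syntactic models of dependent type theory, in the style of Streicher and Dybjer. We first define $F_\mathcal{M}$ on raw syntax by structural recursion as a \emph{partial} interpretation. Then we show that it is defined on every derivable judgement and respects $\equiv$. Finally we show that any morphism $\mathcal{S}\to\mathcal{M}$ must agree with it.

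\textbf{Step 1: partial interpretation.} By recursion on raw contexts, types and terms we define:
\begin{itemize}
  \item $\llbracket\Gamma\rrbracket\in\mathrm{Ob}(\mathcal{C}_\mathcal{M})$;
  \item $\llbracket A\rrbracket\in\Ty_c(\llbracket\Gamma\rrbracket)$ for $\Uc$-types;
  \item $\llbracket B\rrbracket\in[\mathcal{T}_f^{op},\mathcal{C}/\llbracket\Gamma\rrbracket]$ for $\TypeL$-types;
  \item $\llbracket P\rrbracket\in\Sub(1_{\llbracket\Gamma\rrbracket})$ for propositions;
  \item $\llbracket t\rrbracket$ in the appropriate $\Tm$.
\end{itemize}
Each clause uses the corresponding structure of $\mathcal{M}$:
\begin{itemize}
  \item context extension is comprehension;
  \item $\Pi$ and $\lambda$ use the CwF $\Pi$-structure;
  \item $\nil$, $\step$ and $\mathrm{id}_\tau$ map to objects and identity morphisms of $\mathcal{T}_f$, and composition of extensions maps to composition there;
  \item $K_f(\tau)$ is the designated presheaf of $\mathcal{M}$ evaluated at the interpreted trace;
  \item $\restrict(\epsilon,k)$ is given by $\llbracket K_f\rrbracket(m_\epsilon)(\llbracket k\rrbracket)$, as in Section~\ref{sec:semantics-nonmono};
  \item $\land$ and the $\mu/\nu$ fixed points use the Heyting and fixed-point structure on $\Sub(1_\Gamma)$.
\end{itemize}
Stratification (Section~\ref{sec:stratification}) guarantees that the $\Uc$ clauses never call the $\TypeL$ or $\Prop$ clauses, so the recursion is well-founded layer by layer.

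\textbf{Step 2: soundness of the partial interpretation.} We prove, by simultaneous induction on derivations, that every derivable judgement has a defined interpretation of the right sort. The key auxiliary fact is a semantic substitution lemma mirroring Lemma~\ref{lem:substitution}: for $\Delta\vdash\sigma:\Gamma$ we need
\[
\llbracket A[\sigma]\rrbracket=\llbracket A\rrbracket[\llbracket\sigma\rrbracket],
\]
and similarly for terms. It is proved first for weakenings and single substitutions, then for all substitutions. In the same induction we check that each generator of $\equiv$ holds on the nose in $\mathcal{M}$:
\begin{itemize}
  \item $\beta$ and $\eta$ follow from the CwF $\Pi$-equations;
  \item $J$-$\beta$ follows from the Id-structure;
  \item the restriction laws $\restrict(\mathrm{id},k)\equiv k$ and $\restrict(\epsilon_1,\restrict(\epsilon_2,k))\equiv\restrict(\epsilon_2\circ\epsilon_1,k)$ follow from the presheaf functor laws, exactly as in the theorem on restriction from contravariant structure;
  \item fold and unfold being mutually inverse follows from the fixed-point isomorphism.
\end{itemize}
Consequently $F_\mathcal{M}$ descends to the quotient by $\equiv$, which is how $\mathcal{S}$ is built in Definition~\ref{def:syntactic-model}. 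Preservation of structure by $F_\mathcal{M}$ then holds by construction, so it is a morphism in $\mathbf{PShCwF}^{3}_{\mathrm{NMDEKL}}$.

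\textbf{Step 3: uniqueness.} Let $G:\mathcal{S}\to\mathcal{M}$ be any morphism. Every object and element of $\mathcal{S}$ is an equivalence class generated by the syntactic constructors. Since $G$ preserves all the designated structure (comprehension, $\Pi$, the trace constructors, $K_f$, restriction, the connectives), induction on derivations shows $G=F_\mathcal{M}$ on every class.

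\textbf{Main obstacle.} I expect the hardest part to be Step 2, specifically coherence. Syntactic substitution holds only up to $\equiv$, while $\mathcal{M}$ requires strict equalities, and reindexing of the presheaf fibre $[\mathcal{T}_f^{op},\mathcal{C}/\Gamma]$ is by pullback, which is only defined up to isomorphism. The first remedy I would try is to require, as part of the model notion, a split, strictly functorial choice of pullbacks on the fibres. Failing that, I would replace $\mathcal{M}$ by its Hofmann-style strictification (the local universes construction) before interpreting, and check that $\Extf$ and $\restrict$ are stable under substitution along $\sigma$. A second, subtler point is that the morphisms of $\mathbf{PShCwF}^{3}_{\mathrm{NMDEKL}}$ must be specified to preserve $K_f$, $\Extf$ and $\restrict$ strictly; without this, uniqueness fails. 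I would make that choice explicit when fixing the notion of homomorphism.
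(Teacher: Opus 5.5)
Your route is the paper's: interpret by induction on derivations, use the substitution and restriction coherence lemmas, descend to the $\equiv$-quotient, and get uniqueness by generation. Your attention to strictness goes beyond the paper's sketch. But Step 1 has a genuine gap, which the paper shares and which your proposed remedies do not touch. $\FinTrace$ and $\Extf(\tau,\tau')$ are $\Uc_0$-types, so $\llbracket\tau\rrbracket$ and $\llbracket\epsilon\rrbracket$ are elements of $\Tm_c(\llbracket\Gamma\rrbracket,-)$, i.e.\ sections in $\mathcal{C}$. Your clauses for $K_f(\tau)$ and $\restrict(\epsilon,k)$ instead need an object of the external small category $\mathcal{T}_f$ and a morphism $m_\epsilon$ of it, and nothing in the model notion connects the two. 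Concretely, $x:\FinTrace\vdash K_f(x):\TypeLi{0}$ is derivable (context extension, variable rule, trajectory-indexed family rule). Yet the generic section $\llbracket x\rrbracket$ is not an object of $\mathcal{T}_f$, so the Step 2 claim that every derivable judgement is interpreted fails. The same happens for $m_\epsilon$ whenever $\epsilon$ is a variable of type $\Extf(\tau,\tau')$. Even for closed $\tau$, evaluating a presheaf at $\llbracket\tau\rrbracket$ gives an object of $\mathcal{C}/\llbracket\Gamma\rrbracket$, not an element of $[\mathcal{T}_f^{op},\mathcal{C}/\llbracket\Gamma\rrbracket]$, which is the sort you assigned to $\TypeL$-types.

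The same mismatch appears on the syntactic side, so $\mathcal{S}$ is not yet an object of $\mathbf{PShCwF}^{3}_{\mathrm{NMDEKL}}$ for $F_\mathcal{M}$ to start from. Its trace ``category'' (trace and extension terms in context $\Gamma$, as in the syntactic presheaf definition) varies with $\Gamma$. The context-extension rule admits only $\Uc$-types, so a syntactic $\TypeL$-type is not a functor into $\mathcal{C}_{\mathrm{syn}}/[\Gamma]$. And propositions modulo $\equiv$ are not $\Sub(1_{[\Gamma]})$; for example, $P\wedge Q$ and $Q\wedge P$ are distinct classes. Split pullbacks or local universes repair coherence of reindexing, not this.

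The missing idea is to make the model notion algebraic and internal:
\begin{itemize}
\item chosen, substitution-stable sorts $\Ty_\ell$, $\Tm_\ell$, $\Prop$ over the context category;
\item a designated type $T=\llbracket\FinTrace\rrbracket$ with $\nil$ and $\step$, and a family $E$ over $T\times T$ with identity and composition making $(T,E)$ an internal category in $\mathcal{C}$;
\item $\llbracket K_f\rrbracket$ a $\TypeL$-type in context $x:T$, with a restriction operation on $\Tm_\ell$ indexed by sections of $E$, stable under substitution and obeying the functor laws;
\item universe structure, needed already for $\Uc_i:\Uc_{i+1}$ and for the $\Uc_0$-valued constants $\Step$ and $\Extf$;
\item morphisms preserving all of this strictly.
\end{itemize}
Then $\llbracket K_f(\tau)\rrbracket:=\llbracket K_f\rrbracket[\llbracket\tau\rrbracket]$, $\restrict(\epsilon,k)$ is that operation applied at $\llbracket\epsilon\rrbracket$, and your three steps go through as planned.
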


\begin{proof}[Proof idea]
As in standard type-theoretic initiality: define the interpretation $\llbracket-\rrbracket_\mathcal{M}$ by structural induction on the derivation rules; substitution/restriction lemmas ensure it preserves reindexing and presheaf contravariance; uniqueness follows from syntactic generativity. Full proof in Theorem~\ref{thm:initiality}.
\end{proof}

\begin{theorem}[Internal language characterisation]
\label{thm:internal-language}
NM-DEKL$^3_\infty$ is the internal language of the category $\mathbf{PShCwF}^{3}_{\mathrm{NMDEKL}}$: for every model $\mathcal{M}$, a syntactic derivation
\[
\Gamma \vdash t : A
\]
holds if and only if after interpretation:
\[
\llbracket t \rrbracket_\mathcal{M} \in \Tm \big( \llbracket \Gamma \rrbracket_\mathcal{M}, \llbracket A \rrbracket_\mathcal{M} \big).
\]
Definitional equality is interpreted as morphism equality in the model.
\end{theorem}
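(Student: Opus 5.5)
The plan is to read the statement in its intended form: $\Gamma\vdash t:A$ is derivable \emph{if and only if} for every $\mathcal{M}$ in $\mathbf{PShCwF}^{3}_{\mathrm{NMDEKL}}$ the interpretation $\llbracket t\rrbracket_\mathcal{M}$ is defined and lies in $\Tm(\llbracket\Gamma\rrbracket_\mathcal{M},\llbracket A\rrbracket_\mathcal{M})$. Similarly, $\Gamma\vdash t\equiv t':A$ is derivable iff $\llbracket t\rrbracket_\mathcal{M}=\llbracket t'\rrbracket_\mathcal{M}$ in every model. For a single fixed model the converse is false in general: a degenerate model may equate everything. So the quantification over all models is essential, and the syntactic model $\mathcal{S}$ of Definition~\ref{def:syntactic-model} will serve as the witness for the converse.

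To make ``$\llbracket t\rrbracket_\mathcal{M}$ is defined'' meaningful for raw, possibly ill-typed terms, I follow Streicher's method~\cite{streicher91}. I define a \emph{partial} interpretation on raw contexts, types and terms by recursion on syntax, with the following clauses.
\begin{itemize}
\item $\Uc$-constructors go to CwF structure: $\Pi$ via the LCCC right adjoint, and $\FinTrace/\InfTrace$ via initial algebra and final coalgebra.
\item $K_f(\tau)$ goes to the presheaf fibre.
\item $\restrict(\epsilon,k)$ goes to $\llbracket K_f\rrbracket(m_\epsilon)(\llbracket k\rrbracket)$.
\item $\Prop$-formers go to $\Sub(1_\Gamma)$ operations, and $\mu/\nu$ go to least and greatest fixed points in the subobject lattice.
\end{itemize}

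For the direction ``derivable $\Rightarrow$ in $\Tm$'' (soundness), I induct on derivations. For each typing rule I show that the interpretation is defined and has the right type. For each generator of $\equiv$ I show that the two sides have equal interpretations:
\begin{itemize}
\item $\beta$ and $\eta$ follow from the $\Pi$-adjunction;
\item $J$-$\beta$ follows from the chosen Id structure;
\item the two restriction laws follow from $K_f(\mathrm{id})=\mathrm{id}$ and $K_f(m\circ n)=K_f(n)\circ K_f(m)$ (the Theorem on restriction from contravariant structure);
\item fold/unfold being mutually inverse follows from the Knaster--Tarski fixed-point equation.
\end{itemize}
The closure rules for $\equiv$ hold automatically, since equality of morphisms is an equivalence relation. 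The key auxiliary fact is a semantic substitution lemma: $\llbracket A[\sigma]\rrbracket=\llbracket A\rrbracket\{\llbracket\sigma\rrbracket\}$, and likewise for terms. This is the semantic counterpart of Lemma~\ref{lem:substitution}, and I prove it first by induction on raw syntax, including weakening. Stratification (no $\Prop\to\Uc$ elimination) ensures that $\Uc$-types are interpreted in $\mathcal{C}$ alone, so the recursion is well-founded layer by layer.

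For the converse I instantiate $\mathcal{M}:=\mathcal{S}$. In $\mathcal{S}$, the set $\Tm(\Gamma,A)$ consists of derivable terms modulo $\equiv$. I show by induction on raw syntax that $\llbracket t\rrbracket_\mathcal{S}$ is defined exactly when $t$ is derivably typed, and that it then equals the class $[t]_\equiv$. This is essentially the uniqueness half of Theorem~\ref{thm:classifying-initiality}, since $F_\mathcal{S}=\mathrm{id}$. Membership in $\Tm(\llbracket\Gamma\rrbracket_\mathcal{S},\llbracket A\rrbracket_\mathcal{S})$ then yields $\Gamma\vdash t:A$. Equality of interpretations in $\mathcal{S}$ is, by construction, equality of $\equiv$-classes, which yields $\Gamma\vdash t\equiv t':A$. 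Before using $\mathcal{S}$ I must check that it really is an object of $\mathbf{PShCwF}^{3}_{\mathrm{NMDEKL}}$. In particular, the restriction laws must make $\Ty_\ell$ a genuine presheaf on the quotient, and syntactic $\Prop$ must form a subobject-style Heyting structure stable under reindexing.

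I expect the main obstacle to be the coherence of the partial interpretation: showing it is independent of derivation and strictly stable under substitution, despite CwF reindexing being only pseudo-functorial in a generic LCCC. My first attempt would be to replace each model by an equivalent split one (the Hofmann/Curien coherence construction) before interpreting. A secondary difficulty is verifying that $\mathcal{S}$ satisfies the $\Prop=\Sub(1_\Gamma)$ clause. If that fails strictly, I would weaken clause (4) of Definition~\ref{def:pshcwf3-model} to ``an equivalent Heyting structure'', as its parenthetical already permits.
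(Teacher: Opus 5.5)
Your route is the paper's: soundness by induction on derivations for $(\Rightarrow)$, and for $(\Leftarrow)$ evaluation in the syntactic model $\mathcal{S}$, which the paper phrases as an appeal to initiality. Your extra care is warranted. The converse must quantify over all models, and ``$\llbracket t\rrbracket_\mathcal{M}$ is defined'' needs a partial interpretation of raw, suitably type-annotated terms.

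\textbf{The gap.} The step you call a secondary difficulty is where the converse actually breaks, and your fallback does not repair it. In every object of $\mathbf{PShCwF}^{3}_{\mathrm{NMDEKL}}$ that interprets the $\mu$-rules, three things hold. The propositional fibre is a poset. $\Tm(\Gamma,P)$ has at most one element. And $\mu X.\varphi$ and $\varphi(\mu X.\varphi)$ lie below each other (via fold/unfold, or Knaster--Tarski in your own clause), so they are equal. Take $\varphi(X):=\top$. Then $\llbracket\mu X.\top\rrbracket_\mathcal{M}=\llbracket\top\rrbracket_\mathcal{M}$ in every $\mathcal{M}$. By compositionality, $\llbracket\mathsf{unfold}\rrbracket_\mathcal{M}$ lies in $\Tm\bigl(\llbracket\Gamma\rrbracket_\mathcal{M},\llbracket\top\to\mu X.\top\rrbracket_\mathcal{M}\bigr)$ for all $\mathcal{M}$. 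Yet $\Gamma\vdash\mathsf{unfold}:\top\to\mu X.\top$, which is the type of $\mathsf{fold}$, is not derivable, since no generator of $\equiv$ turns $\mu X.\top$ into $\top$. The equational clause fails too: the terms $P\wedge Q$ and $Q\wedge P$ of type $\Prop$ are equal in every model but not definitionally equal.

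\textbf{Why your fallback does not help.} The obstruction is posetality, not $\Sub(1_\Gamma)$ specifically. Weakening clause (4) to ``an equivalent Heyting structure'' keeps exactly these identifications. Concretely, $\mathcal{S}$ gives you a dichotomy. If syntactic propositions are kept modulo $\equiv$ only, $\mathcal{S}$ is not an object of the model category. If they are quotiented by interderivability and proof terms by proof irrelevance, as the paper's syntactic Prop does, then $\mathcal{S}$ can be a model. But then your inductive claim fails for propositional $A$ and for $A=\Prop$: it is not true that $\llbracket t\rrbracket_\mathcal{S}$ is defined at $[A]$ exactly when $\Gamma\vdash t:A$, with value $[t]_\equiv$. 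Equality in $\mathcal{S}$ also becomes coarser than $\equiv$, contradicting your ``by construction'' step.

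\textbf{What can be salvaged.} Your argument can be completed at best away from the propositional layer. For $\Prop$ you need one of three things:
\begin{itemize}
\item a proof-relevant interpretation in the model class;
\item extra judgmental rules (proof irrelevance and identification of interderivable propositions, with a conversion rule);
\item a weaker, provability-level completeness statement, which the paper itself defers in its remark on propositional-layer completeness.
\end{itemize}
The paper's one-line initiality argument does not address this either.

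\textbf{A smaller mismatch of the same kind.} Clause (3) demands that $\Ty_\ell(\Gamma)$ be the whole functor category $[\mathcal{T}_f^{op},\mathcal{C}/\Gamma]$, which $\mathcal{S}$ does not satisfy. Relaxing it to a chosen class of presheaves closed under the type formers is harmless, but it must be done explicitly.
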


\begin{proof}
We prove both directions. ($\Rightarrow$) Soundness: if $\Gamma \vdash t : A$ then $\llbracket t \rrbracket_\mathcal{M} \in \Tm(\llbracket \Gamma \rrbracket_\mathcal{M}, \llbracket A \rrbracket_\mathcal{M})$ by the interpretation defined from the derivation. ($\Leftarrow$) Completeness: if $\llbracket t \rrbracket_\mathcal{M} \in \Tm(\llbracket \Gamma \rrbracket_\mathcal{M}, \llbracket A \rrbracket_\mathcal{M})$, then by initiality of the syntactic model (Theorem~\ref{thm:classifying-initiality}), semantic equality in all models implies syntactic derivability, so $\Gamma \vdash t : A$. Definitional equality is interpreted as equality of morphisms in the model. The theorem is proved.

\end{proof}

\paragraph{Remark (why this is higher-order)}
Theorems~\ref{thm:classifying-initiality}--\ref{thm:internal-language} lift the theory from ``we constructed a presheaf semantics'' to ``this syntax is the free/classifying object for this class of structures''. One may further ask whether the model category is equivalent to some presheaf comonadic CwF.

\begin{theorem}[Free object theorem for the syntactic model of NM-DEKL$^3_\infty$]
\label{thm:free-object}
The syntactic model $\mathcal{S}$ of NM-DEKL$^3_\infty$ is the \emph{free object} in $\mathbf{PShCwF}^{3}_{\mathrm{NMDEKL}}$ and is initial in $\mathbf{Mod}_{\mathrm{NMDEKL}}$.
\end{theorem}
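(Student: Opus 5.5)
The plan is to reduce the free object theorem to the initiality result already stated (Theorem~\ref{thm:classifying-initiality}). The reduction rests on a standard categorical fact: in any category, an initial object is exactly the free object on the empty generating data. That is, it is the value at $\varnothing$ of a left adjoint to the forgetful functor sending a model to its underlying signature data (states, events, the trace category, and the constants $K_f$, $\Extf$, $\restrict$). I would first make $\mathbf{Mod}_{\mathrm{NMDEKL}}$ precise as the full subcategory of $\mathbf{PShCwF}^{3}_{\mathrm{NMDEKL}}$ whose objects interpret the basic constants $\State,\Event,\Nat,\FinTrace,\InfTrace,K_f,K_\infty,\Extf,\restrict$. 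The morphisms would be CwF morphisms that strictly preserve comprehension, $\Pi$, $\Id$, the presheaf reindexing on each $\Ty_\ell(\Gamma)$, the subobject-classifier structure on $\Prop$, and the interpretation of these constants. The syntactic model $\mathcal{S}$ of Definition~\ref{def:syntactic-model} lies in this subcategory because every basic constant is a derivable type or term.

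The argument then has three steps. First, I check that $\mathcal{S}$ is actually an object of both categories. This means showing that the quotient by $\equiv$ makes the context category a category and that $\Ty_c,\Tm_c$ are presheaves on it; both follow from the substitution laws of Lemma~\ref{lem:substitution}. The restriction functor laws $\restrict(\mathrm{id},k)\equiv k$ and $\restrict(\epsilon_1,\restrict(\epsilon_2,k))\equiv\restrict(\epsilon_2\circ\epsilon_1,k)$ make $K_f$ a genuine presheaf on the syntactic trajectory category, with objects traces and morphisms the extension terms modulo $\equiv$. The fold/unfold rules give $\mu/\nu$ on $\Prop$.

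Second, given any $\mathcal{M}$, I obtain $F_\mathcal{M}:\mathcal{S}\to\mathcal{M}$ from Theorem~\ref{thm:classifying-initiality}. I then verify that $F_\mathcal{M}$ preserves the constants, which holds because the interpretation $\llbracket-\rrbracket_\mathcal{M}$ is defined by sending each constant to its designated interpretation. Hence $F_\mathcal{M}$ is a morphism of $\mathbf{Mod}_{\mathrm{NMDEKL}}$. Uniqueness carries over directly, since $\mathbf{Mod}_{\mathrm{NMDEKL}}$ has the same morphisms as the ambient category, restricted. This gives initiality in $\mathbf{Mod}_{\mathrm{NMDEKL}}$.

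Third, for freeness I define the forgetful functor $U$ to signatures. I show that for a signature $\Sigma$ the syntax extended by $\Sigma$ as extra constants, $\mathcal{S}[\Sigma]$, is initial among models under $\Sigma$, by the same argument relativised. This gives the universal arrow $\Sigma\to U\mathcal{S}[\Sigma]$. Taking $\Sigma$ to be the base signature recovers $\mathcal{S}$ as the free object.

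I expect the main obstacle to be the well-definedness underlying Theorem~\ref{thm:classifying-initiality}. The interpretation is defined on derivations but must descend to terms modulo $\equiv$, so I need a coherence lemma showing that different derivations of the same judgement have equal interpretations. This is delicate with $\Id$, the cofix for $\InfTrace$ (which I would interpret via the final-coalgebra map), and $\restrict$. I would first try the usual partial-interpretation approach (Streicher-style): define $\llbracket-\rrbracket$ on raw syntax as a partial function and prove by induction on derivations that it is defined and respects substitution and each $\equiv$-generator, checking the restriction laws against $K_f(\mathrm{id})=\mathrm{id}$ and $K_f(m\circ n)=K_f(n)\circ K_f(m)$. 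Under stratification, no $\Prop$-to-$\Uc$ clause arises, so the induction splits cleanly by layer.
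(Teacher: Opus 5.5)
Your proposal is correct relative to the paper, and its core is the same as the paper's proof. That proof cites Theorem~\ref{thm:classifying-initiality} to get a unique $F_{\mathcal{M}}:\mathcal{S}\to\mathcal{M}$, and then asserts freeness because ``$\mathcal{S}$ is generated by the derivation rules''. In effect the paper treats ``free'' as a rewording of ``initial''. Your remark that a universal arrow from the empty signature is exactly an initial object is the precise version of that identification, and it costs nothing.

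Your Step~3 goes further than the paper. Exhibiting $\mathcal{S}$ as $\mathcal{S}[\Sigma]$ for the base signature, with universal arrow $\Sigma\to U\mathcal{S}[\Sigma]$, says that $\mathcal{S}$ represents the functor sending an object of $\mathbf{PShCwF}^{3}_{\mathrm{NMDEKL}}$ to its set of interpretations of the constants. Initiality in $\mathbf{Mod}_{\mathrm{NMDEKL}}$ is then the same fact read in the comma category $\Sigma\downarrow U$. This gives the freeness half real content, at the price of running the initiality induction uniformly in $\Sigma$ (the same induction, with the constants as parameters).

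Three points need tightening or confirmation.
\begin{itemize}
\item \textbf{Steps~2 and~3 pull against each other.} Suppose objects of $\mathbf{PShCwF}^{3}_{\mathrm{NMDEKL}}$ need not interpret $\State$, $K_f$, $\Extf$, $\restrict,\dots$. That is what makes freeness over the base signature meaningful, and it is what your phrase ``subcategory whose objects interpret the basic constants'' presupposes. Then $\mathcal{S}$ is in general not initial there, because different choices of $\llbracket K_f\rrbracket$ give different structure-preserving maps out of $\mathcal{S}$. So you should obtain existence and uniqueness in $\mathbf{Mod}_{\mathrm{NMDEKL}}$ from Step~3, not from Theorem~\ref{thm:classifying-initiality}. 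You should also take $\mathbf{Mod}_{\mathrm{NMDEKL}}$ to be $\Sigma\downarrow U$. That is not a full subcategory, since its morphisms must preserve the chosen interpretation.
\item \textbf{The syntactic trajectory category in Step~1.} With traces as objects and extension terms modulo $\equiv$ as morphisms, it is a category only if $\circ$ on $\Extf$ is associative and unital up to $\equiv$. The paper lists no such rules. However, the two $\restrict$ laws already force $\restrict((\epsilon_3\circ\epsilon_2)\circ\epsilon_1,k)\equiv\restrict(\epsilon_1,\restrict(\epsilon_2,\restrict(\epsilon_3,k)))\equiv\restrict(\epsilon_3\circ(\epsilon_2\circ\epsilon_1),k)$, and similarly for the unit laws. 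So quotienting extension terms by the category laws is harmless.
\item \textbf{The main obstacle.} Your diagnosis is right. The paper's appendix declares $F_{\mathcal{M}}$ ``well-defined modulo $\equiv$ by soundness'' without ever defining $\llbracket-\rrbracket_{\mathcal{M}}$ independently of the chosen derivation. A Streicher-style partial interpretation on raw syntax is the standard way to supply that.
\end{itemize}
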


\begin{proof}[Proof idea]
By Theorem~\ref{thm:classifying-initiality}, for every model $\mathcal{M}$ there is a unique morphism $F_{\mathcal{M}}:\mathcal{S}\to\mathcal{M}$, so $\mathcal{S}$ is initial. Freeness: $\mathcal{S}$ is generated by the derivation rules; any model is uniquely determined from $\mathcal{S}$ via this morphism. Full proof in Theorem~\ref{thm:initiality}.
\end{proof}

\subsubsection{Enhancement I: non-monotonicity iff presheaf contravariance}

We have already shown that presheaf implies restriction; a stronger characterisation is that the two are equivalent.

\begin{definition}[Syntactic non-monotonic structure (restriction structure)]
\label{def:restriction-structure}
Given trajectory category $\mathcal{T}_f$ and object assignment $K:\mathrm{Ob}(\mathcal{T}_f)\to \mathrm{Ob}(C)$. We say $(K,\mathrm{res})$ is a \emph{restriction structure} if for each morphism $m:\tau\to\tau'$ we have $\mathrm{res}_m:K(\tau')\to K(\tau)$ and the identity and composition laws:
\[
\mathrm{res}_{\mathrm{id}_\tau}=\mathrm{id},
\qquad
\mathrm{res}_{m\circ n}=\mathrm{res}_n\circ \mathrm{res}_m.
\]
\end{definition}

\begin{theorem}[Non-monotonicity characterisation (Restriction $\Leftrightarrow$ Presheaf)]
\label{thm:restriction-iff-presheaf}
Let $C$ be a category and $\mathcal{T}_f$ a small category. The following data are equivalent: (1) a presheaf $K_f : \mathcal{T}_f^{op} \to C$; (2) an object assignment $K$ and restriction structure $(K, \mathrm{res})$ (Definition~\ref{def:restriction-structure}). Under this equivalence, non-monotonicity (``only restrict from long to short trace'') corresponds exactly to contravariant functoriality.
\end{theorem}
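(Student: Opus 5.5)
The plan is to exhibit explicit, mutually inverse translations between the two kinds of data and observe that the defining equations coincide. The equivalence is essentially definitional: a presheaf $K_f:\mathcal{T}_f^{op}\to C$ is a functor, i.e.\ an object map together with a morphism map satisfying the identity and composition laws, and the restriction laws of Definition~\ref{def:restriction-structure} are exactly these laws, written in $\mathcal{T}_f$ rather than $\mathcal{T}_f^{op}$.

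\textbf{Step 1 (presheaf $\Rightarrow$ restriction structure).} Given $K_f$, set $K(\tau):=K_f(\tau)$ and, for $m:\tau\to\tau'$, set $\mathrm{res}_m:=K_f(m^{op}):K_f(\tau')\to K_f(\tau)$, as in the theorem on restriction from contravariant structure in Section~\ref{sec:semantics-nonmono}. The functor law $K_f(\mathrm{id}_\tau)=\mathrm{id}$ gives $\mathrm{res}_{\mathrm{id}_\tau}=\mathrm{id}$. For composition, $(m\circ n)^{op}=n^{op}\circ m^{op}$ in $\mathcal{T}_f^{op}$, so functoriality gives
\[
\mathrm{res}_{m\circ n}=K_f(n^{op})\circ K_f(m^{op})=\mathrm{res}_n\circ\mathrm{res}_m .
\]

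\textbf{Step 2 (restriction structure $\Rightarrow$ presheaf).} Given $(K,\mathrm{res})$, define $K_f$ on objects by $K$. Every morphism of $\mathcal{T}_f^{op}$ is uniquely of the form $m^{op}$, so we may set $K_f(m^{op}):=\mathrm{res}_m$. The two restriction laws then read exactly as the identity and composition axioms for a functor out of $\mathcal{T}_f^{op}$. Smallness of $\mathcal{T}_f$ is only needed so that these data form a set-sized functor, and hence an object of $[\mathcal{T}_f^{op},C]$.

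\textbf{Step 3 (inverse bijection).} The two constructions are mutually inverse on the nose, since both preserve the object assignment and the morphism assignment $m\mapsto\mathrm{res}_m=K_f(m^{op})$. To upgrade this to an equivalence of categories, take morphisms of restriction structures to be families $\alpha_\tau:K(\tau)\to K'(\tau)$ with $\alpha_\tau\circ\mathrm{res}_m=\mathrm{res}'_m\circ\alpha_{\tau'}$. This is precisely naturality, so the correspondence is an isomorphism of categories with $[\mathcal{T}_f^{op},C]$.

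\textbf{Final clause.} The only data attached to $m:\tau\to\tau'$ is a map $K(\tau')\to K(\tau)$, and the direction is forced by contravariance. Neither side provides any map $K(\tau)\to K(\tau')$. This is the content of ``only restrict from long to short''. Theorem~\ref{thm:presheaf-no-monotone-but-enrich-detailed} already shows that no cardinality or order monotonicity follows from these laws.

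\textbf{Main obstacle.} The mathematics is routine; the delicate points are bookkeeping. First, the order reversal in the composition law must be handled carefully: the paper's convention that $m\circ n$ means ``first $n$, then $m$'' must be used consistently. Second, the informal phrase ``corresponds exactly'' needs to be pinned down. I would make it precise as the statement that the isomorphism of Step 3 commutes with the forgetful functors to object assignments $\mathrm{Ob}(\mathcal{T}_f)\to\mathrm{Ob}(C)$.
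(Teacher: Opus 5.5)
Your proposal is correct and takes the same route as the paper. In both directions you read the presheaf's identity and composition laws as the restriction laws, via $\mathrm{res}_m := K_f(m^{op})$ and conversely $K_f(m^{op}) := \mathrm{res}_m$; you handle the op-bookkeeping more carefully than the paper, which loosely takes $m:\tau\to\tau'$ ``in $\mathcal{T}_f^{op}$'', and your mutual-inverse check and upgrade to an isomorphism of categories via natural transformations are sound additions the paper does not make.
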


\begin{proof}
We prove both directions.

\paragraph{From (1) to (2)}
Given a presheaf $K_f : \mathcal{T}_f^{op} \to C$, for each morphism $m: \tau \to \tau'$ in $\mathcal{T}_f^{op}$ define the restriction map

  \[
  \mathrm{res}_m := K_f(m) : K(\tau') \to K(\tau)
  \]
So each morphism $m$ has a restriction map $\mathrm{res}_m$; the functor $K_f$ is contravariant, so identity and composition hold. Hence from $K_f$ we obtain an object assignment $K$ and restriction structure $(K,\mathrm{res})$ as in Definition~\ref{def:restriction-structure}. From (2) to (1): given $(K,\mathrm{res})$, define $K_f(\tau):=K(\tau)$ and $K_f(m):=\mathrm{res}_m$; the restriction laws imply $K_f$ is a contravariant functor.

Non-monotonicity (``only restrict from long to short trace'') in a restriction structure is exactly contravariant functoriality: the presheaf reverses the direction of morphisms. So the equivalence is proved.
\end{proof}

\paragraph{Meaning}
Theorem~\ref{thm:restriction-iff-presheaf} gives an \emph{equivalent categorical characterisation} of non-monotonicity: it is not an extra axiom but the same as the standard presheaf structure.

\subsubsection{Enhancement II: Prop-$\mu$ fragment and equivalence with $\mu$-calculus}

We state the characterisation: the Prop-$\mu$ (truth-value) fragment is the internal language of the $\mu$-calculus.

\begin{definition}[Prop-$\mu$ truth-value fragment]
\label{def:prop-mu-fragment}
Let $\Prop_\mu$ be the fragment of Prop built from Boolean connectives, modal operators (one-step reachability), and $\mu/\nu$ with monotonicity; quantification over $\TypeL$ evidence is disallowed (truth-value properties only).
\end{definition}

\begin{theorem}[Characterisation: $\Prop_\mu$ and $\mu$-calculus are equivalent]
\label{thm:prop-mu-characterization}
Under Kripke semantics, $\Prop_\mu$ and the modal $\mu$-calculus have the same expressive power: there are back-and-forth translations $\mathsf{Enc}:\mu\text{-calculus}\to\Prop_\mu$ and $\mathsf{Dec}:\Prop_\mu\to\mu\text{-calculus}$ such that $M,s\models\varphi\iff\vdash\mathsf{Enc}(\varphi)(s)$ and $\vdash P(s)\iff M,s\models\mathsf{Dec}(P)$, and they are inverses up to $\alpha$-equivalence.
\end{theorem}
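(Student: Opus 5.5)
The natural approach is structural induction on syntax in both directions, with Kripke semantics as the common ground. First I will fix the semantics of $\Prop_\mu$ (Definition~\ref{def:prop-mu-fragment}). Given a Kripke model $M=(\State,\Step,V)$, a closed $\Prop_\mu$ formula $P$ with a free state parameter is read as the predicate $\llbracket P\rrbracket\subseteq\State$, i.e.\ a subobject of $\State$ in $\mathbf{Set}$, in the spirit of $\Prop(\Gamma)=\Sub(1_\Gamma)$. Under this reading the Boolean connectives are intersection, union and implication. The one-step modalities are $\exists e,s'.\,\Step(s,e,s')\wedge P(s')$ and its dual. $\mu/\nu$ are least and greatest fixed points in the complete lattice $\mathcal{P}(\State)$, which exist by Knaster--Tarski because $\varphi$ is required to be positive in $X$.

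I will read ``$\vdash P(s)$'' semantically as $s\in\llbracket P\rrbracket_M$. This is the only reading under which the statement can hold for an arbitrary $M$, and it is justified by the Soundness/internal-language result, Theorem~\ref{thm:internal-language}.

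\textbf{Encoding.} I define $\mathsf{Enc}$ by recursion on $\mu$-calculus formulas:
\begin{itemize}
  \item propositional atoms $p$ go to atomic state predicates $V_p(s)$;
  \item $\neg,\wedge,\vee$ go to the Prop connectives;
  \item $\Diamond\varphi$ goes to $\exists_{e:\Event}\exists_{s':\State}\Step(s,e,s')\wedge\mathsf{Enc}(\varphi)(s')$, and $\Box$ goes to its dual;
  \item $\mu X.\varphi$ goes to the Prop-layer $\mu X.\mathsf{Enc}(\varphi)$, and likewise for $\nu$.
\end{itemize}
The $\mu$-formation rule requires positivity. That follows because the $\mu$-calculus already requires $X$ to occur under an even number of negations, and $\mathsf{Enc}$ preserves polarity. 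I then prove $M,s\models\varphi\iff s\in\llbracket\mathsf{Enc}(\varphi)\rrbracket$ by induction on $\varphi$, generalised to open formulas under a valuation of the free variables. The fixed-point case follows because both sides are least fixed points of the same monotone operator on $\mathcal{P}(\State)$, which the induction hypothesis makes pointwise equal for every valuation of $X$.

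\textbf{Decoding.} $\mathsf{Dec}$ reverses the clauses. The restriction in Definition~\ref{def:prop-mu-fragment} that quantification over $\TypeL$ evidence is disallowed is what keeps every $\Prop_\mu$ formula within Boolean, modal and fixed-point shapes. The same induction then gives $s\in\llbracket P\rrbracket\iff M,s\models\mathsf{Dec}(P)$.

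\textbf{Main obstacle.} The hardest step is the claim that $\mathsf{Enc}$ and $\mathsf{Dec}$ are inverse up to $\alpha$-equivalence. This is false if $\Prop_\mu$ admits several syntactic presentations of the same modality, for instance $\exists$ nested in either order, or $\to$ versus $\neg\vee$. I would first try to fix a \emph{canonical grammar} for $\Prop_\mu$ in which the modal operator is a primitive abbreviation and the only connectives are those in the image of $\mathsf{Enc}$. Inverse-ness then becomes a routine clause-by-clause check. If that is not acceptable, I would weaken the conclusion to ``inverse up to provable equivalence'', which the two semantic adequacy lemmas give immediately.
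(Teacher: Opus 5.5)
Your proposal follows the paper's own route: define $\mathsf{Enc}$ and $\mathsf{Dec}$ clause by clause, then prove both adequacy statements by structural induction, with $\mu/\nu$ read as Tarski least/greatest fixed points in Kripke ($\mathbf{Set}$) semantics and with $\vdash P(s)$ read relative to the chosen model, as you do. Your caveat about the inverse claim also matches what the paper actually establishes: $\mathsf{Dec}(\mathsf{Enc}(\varphi))$ is syntactically $\varphi$, but $\mathsf{Enc}(\mathsf{Dec}(P))$ is only equivalent to $P$ ``in the semantic sense'' (made explicit in Theorem~\ref{thm:bidirectional-translation}), so your weakened conclusion is the one the paper's argument supports.
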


\begin{proof}[Proof idea]
Define $\mathsf{Enc}$ and $\mathsf{Dec}$ by structural induction. Enc: atoms $p\mapsto P_p$; connectives $\neg,\land,\lor$ componentwise; $\Box,\Diamond$ as path quantification; $\mu X.\varphi(X)\mapsto\mu X.\mathsf{Enc}(\varphi(X))$. Dec: $P_p\mapsto p$; connectives and modalities componentwise; $\mu X.P(X)\mapsto\mu X.\mathsf{Dec}(P(X))$. Correctness: the two translations form a bijection
\[
M, s \models \varphi \iff \vdash \mathsf{Enc}(\varphi)(s),
\qquad
\vdash P(s) \iff M, s \models \mathsf{Dec}(P),
\]
and they are inverses in the semantic sense. The theorem follows.
\end{proof}

\paragraph{Strict extension (from characterisation to strict inclusion)}
By Definition~\ref{def:prop-mu-fragment}, once we allow quantification over $\TypeL$ evidence in NM-DEKL$^3_\infty$ (e.g.\ ``there exists a concrete causal proof chain''), that goes beyond $\mu$-calculus truth-value semantics, giving strict inclusion: $\mu\text{-calculus}\equiv\Prop_\mu\subsetneq\text{NM-DEKL}^3_\infty$.

\section{Metatheory (II): Soundness, Completeness, and Initiality}

\paragraph{Overview and logical order}
The three main results of this section: \textbf{Soundness} (syntactic derivations are interpretable in every model) is the basis for semantic correctness; we define the \textbf{model category} and \textbf{syntactic model} $\mathcal{S}$ and prove \textbf{initiality}; from initiality we get \textbf{equational completeness}. \textbf{Consistency} follows from the existence of a model and Soundness. Technical basis: Substitution Lemma, Restriction Lemma, and interpretation coherence.

\paragraph{Preliminary lemmas}
The following are used repeatedly in the Soundness and initiality proofs.
\begin{theorem}[Substitution]
If $\Gamma,x:A \vdash t:B$ and $\Gamma\vdash a:A$, then $\Gamma \vdash t[a/x] : B[a/x]$.
\end{theorem}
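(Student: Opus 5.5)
The plan is to prove the single-variable substitution theorem as a special case of the general Substitution Lemma (Lemma~\ref{lem:substitution}). Given $\Gamma\vdash a:A$, form the substitution
\[
\langle \mathrm{id}_\Gamma, a\rangle : \Gamma \to (\Gamma,x:A),
\]
that is, a derivation of $\Gamma\vdash \langle\mathrm{id}_\Gamma,a\rangle : \Gamma,x:A$. The judgement $\Gamma\vdash\mathrm{id}_\Gamma:\Gamma$ is available as the identity substitution. Extending it by a term of type $A[\mathrm{id}_\Gamma]\equiv A$ requires the standard extension rule for substitutions (comprehension), which the syntactic CwF structure of Definition~\ref{def:syntactic-model} presupposes. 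Applying Lemma~\ref{lem:substitution} to $\Gamma,x:A\vdash t:B$ then yields $\Gamma\vdash t[\langle\mathrm{id},a\rangle]:B[\langle\mathrm{id},a\rangle]$. It remains to identify $t[\langle\mathrm{id},a\rangle]$ with $t[a/x]$ and $B[\langle\mathrm{id},a\rangle]$ with $B[a/x]$, which holds by the definition of single substitution as this simultaneous one.

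If one prefers not to rely on the lemma as a black box, I would instead prove the statement directly, by induction on the derivation of $\Gamma,x:A\vdash t:B$. The induction must be generalised to contexts of the form $\Gamma,x:A,\Delta$, with the conclusion $\Gamma,\Delta[a/x]\vdash t[a/x]:B[a/x]$, since rules such as $\lambda$-introduction and $\Pi$-formation extend the context. The same induction has to be run simultaneously over the judgements $\vdash\mathsf{ctx}$, type formation in $\Uc_i$, $\TypeLi{i}$ and $\Prop$, and definitional equality.

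The cases divide as follows.
\begin{itemize}
\item The variable rule splits three ways. If the variable is $x$ itself, it is replaced by $a$, and $a$ must be weakened over $\Delta[a/x]$; this needs a weakening lemma proved beforehand by the same kind of induction. If the variable lies in $\Gamma$, it is unchanged. If it lies in $\Delta$, its type is substituted.
\item Application needs the commutation law $B[b/y][a/x]\equiv B[a/x][b[a/x]/y]$ for $y\neq x$. I would establish this first as a purely syntactic lemma about substitution, by induction on raw terms.
\item The trace, $\Extf$, $K_f$ and $\restrict$ rules, $\Prop$ connectives, $\mu$-formation, fold/unfold, and $\Id$/$J$ are compositional: one applies the induction hypothesis to each premise and pushes the substitution through the constructor.
\item For $\mu$-formation, one must also check that positivity of $\varphi$ in $X$ is preserved under substituting the computational variable $x$. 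This holds because $a$ lives in the $\Uc$ layer and, by stratification, cannot contain $X$.
\item Definitional-equality rules, including the functor laws for $\restrict$ and $\beta$/$\eta$, are handled by substituting both sides. For $\beta$ this again uses the commutation lemma.
\end{itemize}

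The main obstacle is the bookkeeping around dependency. Specifically, it lies in the variable case, which requires weakening, and in the application, $J$ and $\beta$ cases, which require the substitution-commutation lemma. I would therefore establish in order: weakening, then syntactic substitution commutation, then the generalised substitution theorem. The stated form is the case $\Delta=\cdot$.
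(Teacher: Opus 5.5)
Your proposal is correct. There is no proof in the paper to compare it with: the statement appears among the ``preliminary lemmas'' with no argument, and it is the one-variable instance of Lemma~\ref{lem:substitution}, which the paper also only asserts.

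Your first route is exactly that implicit reduction, with two caveats. It needs a substitution-extension rule $\langle\sigma,a\rangle$, which the paper's substitution judgement (identity and composition only) never explicitly states. And the paper justifies the syntactic CwF axioms, comprehension included, by appeal to the substitution lemma itself. So this route is only as good as that unproved black box.

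Your second route is the standard self-contained argument: generalise to $\Gamma,x:A,\Delta$, prove weakening and the raw-term commutation lemma first, then induct simultaneously over all judgement forms. It is what would actually establish Lemma~\ref{lem:substitution}, so it supplies what the paper omits.

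Two small points on the second route. First, the paper lists no type-conversion rule, so your law $B[b/y][a/x]\equiv B[a/x][b[a/x]/y]$ must be read as literal syntactic identity (up to $\alpha$, with $y\notin\mathrm{FV}(a)$), not as the paper's definitional equality $\equiv$. Your raw-term induction does deliver this, and without it the application case would not go through. Second, the $\mu$-formation premise extends the context by $X:\Prop$, which the paper's context rule (only $\Uc_i$-types) does not formally allow, so your $\Delta$ must be permitted to contain such declarations. Positivity is then preserved simply because $X\notin\mathrm{FV}(a)$ by scoping ($a$ is typed in $\Gamma$), not because of stratification.
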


\begin{theorem}[Subject Reduction]
If $\Gamma\vdash t:A$ and $t\to t'$, then $\Gamma\vdash t':A$.
\end{theorem}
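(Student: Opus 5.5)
The plan is the standard route: induction on the derivation of $t\to t'$, with an inner induction on the typing derivation of $t$. Reduction is generated by the $\beta$, $J$-$\beta$ and $\restrict$-$\beta$ rules and then closed under term contexts (and $\equiv$). So there are two kinds of cases: \emph{root cases}, where $t$ is itself a redex, and \emph{congruence cases}, where the step happens in an immediate subterm. Before either, I would prove a generation (inversion) lemma. It says that if $\Gamma\vdash t:A$ and $t$ has a given head constructor ($\lambda x.u$, $f\,a$, $\restrict(\epsilon,k)$, $J_{A,C}(d,a,b,p)$, $\refl_a$), then the premises of the corresponding rule are derivable and $A$ is definitionally equal to the type that rule assigns. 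For this to work, typing must be closed under $\equiv$ on types, i.e.\ a conversion rule; I would state explicitly that I use it, since the paper's $\equiv$ is closed under judgmental equality.

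\textbf{Root cases.} For $\beta$, take $t=(\lambda x.u)\,v$ with $\Gamma\vdash t:A$. Inversion gives $\Gamma\vdash\lambda x.u:\Pi x:A'.B$, $\Gamma\vdash v:A'$ and $A\equiv B[v/x]$. Inverting again gives $\Gamma,x:A'\vdash u:B$, where the domain may first need to be identified up to $\equiv$; this requires injectivity of $\Pi$ under $\equiv$. The Substitution theorem stated just above then gives $\Gamma\vdash u[v/x]:B[v/x]$, and conversion finishes the case. For $J$-$\beta$, take $t=J_{A,C}(d,a,a,\refl_a)$. Inversion gives $\Gamma\vdash d:\Pi(x:A).C(x,x,\refl_x)$ and $\Gamma\vdash a:A$. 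Application then types $d(a):C(a,a,\refl_a)$, which is exactly the type of $t$. For the first $\restrict$-rule, inverting $\restrict(\mathrm{id}_\tau,k):K_f(\tau)$ yields $k:K_f(\tau)$. For the composition rule, inverting $\restrict(\epsilon_1,\restrict(\epsilon_2,k)):K_f(\tau)$ gives $\epsilon_1:\Extf(\tau,\tau')$, $\epsilon_2:\Extf(\tau',\tau'')$ and $k:K_f(\tau'')$. The composition rule for $\Extf$ gives $\epsilon_2\circ\epsilon_1:\Extf(\tau,\tau'')$, and the restriction rule then yields the required $K_f(\tau)$.

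\textbf{Congruence cases.} Here I apply the induction hypothesis to the reducing premise and reapply the same typing rule. The one subtle case is application, $f\,a\to f\,a'$: the resulting type is $B[a'/x]$ rather than $B[a/x]$. I would handle it by showing $a\to a'$ implies $a\equiv a'$ (every reduction rule is also a definitional-equality generator). Congruence of $\equiv$ under substitution, from Lemma~\ref{lem:substitution}, then gives $B[a/x]\equiv B[a'/x]$, and conversion closes the case. Steps inside a context or a type annotation are handled the same way, via a context-conversion lemma. Steps inside $\mathsf{cofix}$ are excluded by the observation-only unfolding strategy, except when a destructor is applied; that case reduces to a $\beta$-like root case on the guarded form $\langle s,(e,f')\rangle$.

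The main obstacle is the inversion lemma, specifically injectivity of $\Pi$ (and of $K_f$, $\Extf$) under $\equiv$, which the application and restriction cases need. Because $\equiv$ is generated by $\beta$, $\eta$ and the $\restrict$ laws and closed under symmetry and transitivity, the standard route is confluence. I would orient the generators as the reduction $\to$ and prove Church--Rosser with Tait--Martin-Löf parallel reduction. The $\restrict$ rules form a terminating rewriting system, and their only critical pair is with themselves, which is joinable by associativity of $\circ$ on $\Extf$; if that associativity is not already judgmental, I would add it. Given confluence, $\Pi x:A.B\equiv\Pi x:A'.B'$ implies a common reduct, and therefore $A\equiv A'$ and $B\equiv B'$. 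Stratification helps here: $\Uc$-types cannot contain $\TypeL$ or $\Prop$ redexes, so the confluence argument for $\Uc$ is the standard MLTT one, and the $\TypeL$ part only adds the $\restrict$ rules.
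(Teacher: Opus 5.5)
The paper gives no proof of this statement to compare with. Subject Reduction is only asserted, once in the reduction section and once among the ``preliminary lemmas''. Your skeleton (generation lemma, the Substitution theorem, a conversion rule, injectivity of type formers) is the standard one.

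You are right that conversion must be added. With the rules as printed the statement is false. In context $y:\State$, the term $\refl_{(\lambda x.x)\,y}:\Id_{\State}((\lambda x.x)\,y,(\lambda x.x)\,y)$ reduces by a congruence step to $\refl_y$, which only has type $\Id_{\State}(y,y)$.

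You should also fix $\to$ explicitly as the compatible closure of the three root rules only. Your case split ignores the ``closed under $\equiv$'' clause you quote. With the paper's untyped $\equiv$, that clause would let a reduct $u$ be $\beta$-expanded to $(\lambda w.\,u)\,\Omega$ with $\Omega$ ill-typed.

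The genuine gap is in the step you yourself call the crux: getting injectivity from Church--Rosser on raw terms. Your list of generators of $\equiv$ leaves out $J$-$\beta$ (and fold/unfold). $J$-$\beta$ in the form your root case uses, $J_{A,C}(d,a,a,\refl_a)\to d(a)$, is non-left-linear, and Tait--Martin-L\"of parallel reduction does not go through for such rules. Worse, raw terms contain fixed-point combinators, so Klop's counterexample (untyped $\beta$ plus a rule $\delta\,M\,M\to E$) can be encoded via $\delta:=\lambda x.\lambda y.\,J_{T,D}(\lambda z.\,E,x,y,\refl_y)$, since then $\delta\,M\,M\twoheadrightarrow E$. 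I have not checked the encoding in full detail, but the standard argument should transfer. Raw-term confluence is then false, not merely unproved, and your common-reduct argument for $\Pi$-, $K_f$- and $\Extf$-injectivity is unavailable.

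The usual repair is the left-linear rule $J_{A,C}(d,a,b,\refl_c)\to d\,c$. But then your $J$ root case no longer lands ``exactly'' in the type of $t$. You need $\Id$-injectivity to extract $A'\equiv A$, $c\equiv a$ and $c\equiv b$ from $\Id_{A'}(c,c)\equiv\Id_A(a,b)$. You then need a conversion from $C(c,c,\refl_c)$ to $C(a,b,\refl_c)$. Alternatively, obtain injectivity by a typed normalisation or model argument instead of raw confluence.

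The $\restrict$ rules also have more critical pairs than the self-overlap you mention:
\begin{itemize}
\item $\restrict(\mathrm{id},\restrict(\epsilon,k))$ reduces both to $\restrict(\epsilon,k)$ and to $\restrict(\epsilon\circ\mathrm{id},k)$;
\item $\restrict(\epsilon,\restrict(\mathrm{id},k))$ reduces both to $\restrict(\epsilon,k)$ and to $\restrict(\mathrm{id}\circ\epsilon,k)$.
\end{itemize}
So besides associativity you need oriented unit laws for $\circ$ on $\Extf$, and the paper supplies none of these equations.
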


\begin{theorem}[Non-Monotonicity]
If $\Extf(\tau,\tau')$, there is $\restrict:K_f(\tau')\to K_f(\tau)$; in general there is no map in the opposite direction.
\end{theorem}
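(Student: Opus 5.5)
The statement has two parts, and I will treat them separately: the existence of $\restrict$, which is syntactic, and the absence of a converse map, which is semantic.

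\textbf{Existence.} This is read off the restriction rule of the $\TypeL$ layer. Given $\Gamma\vdash\epsilon:\Extf(\tau,\tau')$, I form the term $\lambda k.\,\restrict(\epsilon,k)$, using $\lambda$-introduction on the context $\Gamma,k:K_f(\tau')$. This gives a term of type $K_f(\tau')\to K_f(\tau)$. Strictly speaking, $\lambda$-abstraction was stated only for $\Uc$ types, so I will note that the same rule is assumed at $\TypeL$; alternatively I will phrase the claim as the admissible rule $k\mapsto\restrict(\epsilon,k)$. The functor laws for $\restrict$ then show that these maps compose correctly, so the construction is coherent.

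\textbf{No map in general in the other direction.} I read ``in general'' as: there is no uniform term $\Gamma,\epsilon:\Extf(\tau,\tau'),k:K_f(\tau)\vdash t:K_f(\tau')$ that is derivable for all $\tau,\tau'$. To show this I argue by Soundness, i.e.\ the ($\Rightarrow$) direction of Theorem~\ref{thm:internal-language}. Any such derivable $t$ would be interpreted in every model $\mathcal{M}$ as a map $\llbracket K_f\rrbracket(\tau)\to\llbracket K_f\rrbracket(\tau')$ in the slice over the relevant context. It therefore suffices to build one model in which such a map cannot exist.

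\textbf{Countermodel.} I take $C=\mathbf{Set}$ and the two-object trajectory category $\tau\xrightarrow{m}\tau'$ from the proof of Theorem~\ref{thm:presheaf-no-monotone-but-enrich-detailed}. I reverse that example's choice of fibres and set $K(\tau)=\{0\}$ and $K(\tau')=\varnothing$. The restriction $K(m):\varnothing\to\{0\}$ is the empty function, and the functor laws hold trivially, as they did in that proof. In this model there is no function $\{0\}\to\varnothing$, so no term $t$ of the required form can exist. Equivalently, $K_f(\tau')$ fails in the sense of Definition~\ref{def:failure} while $K_f(\tau)$ is inhabited, which is exactly the failure pattern described in Section~\ref{sec:semantics-nonmono}.

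\textbf{Main obstacle.} The delicate step is checking that this presheaf extends to a full $\mathbf{PShCwF}^3$-model (Definition~\ref{def:pshcwf3-model}). In particular, $\Extf(\tau,\tau')$ must be interpreted as inhabited by $m$, and $\llbracket K_f\rrbracket$ must be the given presheaf. I would do this by taking the standard Set model described in the consistency paragraph and changing only the interpretation of the constant $K_f$. This is legitimate because $K_f$ is a primitive type family with no rules forcing any particular inhabitants. The variable context $k:K_f(\tau)$ is nonempty, so the emptiness argument does not collapse.
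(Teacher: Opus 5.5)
Your argument is correct, and it is more rigorous than anything the paper offers. The paper states this result without proof. Its only justification elsewhere is the $\restrict$ rule itself (semantically the presheaf action $K_f(m_\epsilon)$), plus two remarks: that ``the system does not provide $K_f(\tau)\to K_f(\tau')$'', and that presheaf contravariance yields ``in general'' no covariant map (Section~\ref{sec:semantics-nonmono}).

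The paper's route shows only that no \emph{primitive} rule gives the converse, not that no \emph{derived} term does. Moreover, its one explicit presheaf (Theorem~\ref{thm:presheaf-no-monotone-but-enrich-detailed}, with $K(\tau)=\{0\}$ and $K(\tau')=\{0,1\}$) actually admits maps $K(\tau)\to K(\tau')$. Your reversal to an empty fibre at $\tau'$, carried back to syntax by Soundness, is what genuinely establishes non-derivability, and it also makes the paper's semantic ``no map'' claim precise.

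The price is that you work in a mild extension of the stated system, and you should say so explicitly. The context-extension rule admits only $\Uc$ types, and $K_f$ has no introduction rule. So under the rules as written, $\Gamma,k:K_f(\tau)$ is ill-formed (not just the $\lambda$-step), and no term of any $K_f(\tau)$ is derivable at all. The converse question therefore only becomes meaningful once $\TypeL$-typed variables are admitted and the Set model is given comprehension for $\TypeL$ types.

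You should also commit to one trajectory category. The two-object one works: interpret $\Extf(x,y)$ as $\mathrm{Hom}_{\mathcal{T}_f}(x,y)$, which validates $\mathrm{id}$ and composition and supplies $m$. If instead you keep the paper's $\mathrm{FreeCat}(\State,\Event)$, as your ``change only $K_f$'' plan suggests, you need $\tau$ to be unreachable from $\tau'$. For example, set $K(x)=\varnothing$ exactly when there is a morphism $\tau'\to x$, and $K(x)=\{0\}$ otherwise. This condition is necessary because any morphism $\tau'\to\tau$ would force $K(\tau)=\varnothing$.
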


\subsection{Soundness theorem}

The following theorem states consistency between the syntax and semantics of NM-DEKL$^3_\infty$.

\begin{theorem}[Soundness]
\label{thm:soundness}
For every NM-DEKL$^3_\infty$ model $\mathcal{M}$, $\llbracket - \rrbracket_\mathcal{M}$ satisfies:
\begin{enumerate}
  \item If $\Gamma \vdash A : \Uc$ then $\llbracket A \rrbracket_\mathcal{M} \in \Ty_c(\llbracket \Gamma \rrbracket_\mathcal{M})$.
  \item If $\Gamma \vdash t : A$ then $\llbracket t \rrbracket_\mathcal{M} \in \Tm_c(\llbracket \Gamma \rrbracket_\mathcal{M}, \llbracket A \rrbracket_\mathcal{M})$.
  \item If $\Gamma \vdash B : \TypeL$ then $\llbracket B \rrbracket_\mathcal{M} \in \Ty_\ell(\llbracket \Gamma \rrbracket_\mathcal{M})$.
  \item If $\Gamma \vdash u : B$ then $\llbracket u \rrbracket_\mathcal{M} \in \Tm_\ell(\llbracket \Gamma \rrbracket_\mathcal{M}, \llbracket B \rrbracket_\mathcal{M})$.
  \item If $\Gamma \vdash P : \Prop$ then $\llbracket P \rrbracket_\mathcal{M} \in \Prop(\llbracket \Gamma \rrbracket_\mathcal{M})$.
\end{enumerate}
\end{theorem}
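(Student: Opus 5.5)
The plan is a simultaneous induction on derivations, proving all five clauses together, and also the companion statements for contexts and definitional equality. These are: if $\vdash\Gamma\;\mathsf{ctx}$ then $\llbracket\Gamma\rrbracket_\mathcal{M}$ is an object of the context category of the computational CwF; and if $\Gamma\vdash t\equiv t':A$ then $\llbracket t\rrbracket_\mathcal{M}=\llbracket t'\rrbracket_\mathcal{M}$. The equality clause cannot be separated out, because the conversion steps implicit in application, and in $J$ producing $C(a,b,p)$, need equal types to have equal interpretations.

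First, I will define $\llbracket-\rrbracket_\mathcal{M}$ as a partial function on raw syntax, in the style of Streicher. Contexts are interpreted by iterated comprehension, with $\llbracket\cdot\rrbracket$ the terminal object and $\llbracket\Gamma,x:A\rrbracket=\llbracket\Gamma\rrbracket.\llbracket A\rrbracket$. Computational types and terms use the CwF structure: $\Pi$, $\lambda$ and application via $\Pi$ in $\mathcal{C}$; $\Id$, $\refl$ and $J$ via the optional identity structure; $\FinTrace$ via the initial algebra for $X\mapsto\State+X\times\Event\times\State$; $\InfTrace$ via the final coalgebra $\nu F$. Constructive types are interpreted in $\Ty_\ell(\Gamma)=[\mathcal{T}_f^{op},\mathcal{C}/\Gamma]$, with $K_f(\tau)$ sent to the fibre of the given presheaf at the trace object named by $\llbracket\tau\rrbracket$. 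For restriction I use $\llbracket\restrict(\epsilon,k)\rrbracket=\llbracket K_f\rrbracket(m_\epsilon)(\llbracket k\rrbracket)$ as in Section~\ref{sec:semantics-nonmono}, with $\mathrm{id}_\tau$ and $\epsilon_2\circ\epsilon_1$ interpreted as identity and composition in $\mathcal{T}_f$. $\Prop$ goes to $\Sub_{\mathcal{C}/\Gamma}(1_\Gamma)$, with connectives and quantifiers given by the Heyting structure and $\exists/\forall$ along projections. $\mu$ and $\nu$ are interpreted as least and greatest fixed points, which exist by Knaster--Tarski provided $\Sub(1_\Gamma)$ is a complete lattice and the positivity side condition makes $\varphi$ monotone.

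Second, I will prove a semantic substitution lemma mirroring Lemma~\ref{lem:substitution}: $\llbracket A[\sigma]\rrbracket=\llbracket A\rrbracket\{\llbracket\sigma\rrbracket\}$, and likewise for terms, $\TypeL$-families (reindexing by pullback in each fibre) and propositions (pullback of subobjects). This lemma is needed for the variable, application and $J$ rules, and for the $\beta$-equation $(\lambda x.t)\,a\equiv t[a/x]$. I will also prove a restriction lemma: restriction commutes with substitution, because $\Ty_\ell$ reindexing is pointwise pullback and so is natural in $\mathcal{T}_f$.

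The inductive step then checks each rule in turn. The computational rules are standard CwF soundness. Stratification guarantees that the $\Uc$ clauses never call on the $\TypeL$ or $\Prop$ clauses, so the induction is well-founded layer by layer. The restriction rule lands in $\Tm_\ell(\llbracket\Gamma\rrbracket,\llbracket K_f(\tau)\rrbracket)$ because $K_f(m_\epsilon)$ is a morphism $K_f(\tau')\to K_f(\tau)$, and the restriction functor laws hold by functoriality (Theorem~\ref{thm:restriction-iff-presheaf}). The fold/unfold terms are interpreted by the fixed-point equality $\varphi(\mu X.\varphi)=\mu X.\varphi$, and proof irrelevance makes their inverse laws automatic.

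The main obstacle I expect is the coherence of the partial interpretation. A judgement may have several derivations, and the interpretation must not depend on which one is used. This is why I define $\llbracket-\rrbracket$ on raw terms rather than on derivations and prove by induction that it is defined on derivable judgements. A second delicate point is making $m_\epsilon$ well-defined from a term $\epsilon:\Extf(\tau,\tau')$. I will require, as part of the model data, that $\Extf$ be interpreted by a computational type whose elements over $\llbracket\Gamma\rrbracket$ correspond to $\mathcal{T}_f$-morphisms between the interpreted traces, naturally in $\Gamma$. I expect this naturality to be exactly what the restriction lemma needs.
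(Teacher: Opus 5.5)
Your overall route is the paper's own: rule induction, with a semantic substitution lemma and a restriction-coherence lemma doing the work. Defining $\llbracket-\rrbracket_\mathcal{M}$ as a partial function on raw syntax, with the equality clause proved simultaneously, is a genuine improvement on the paper's sketch.

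\textbf{The gap.} The step that fails as written is interpreting $K_f(\tau)$ as the fibre of an external presheaf ``at the trace object named by $\llbracket\tau\rrbracket$'', and with it $m_\epsilon$.
\begin{itemize}
\item The judgement $x:\FinTrace\vdash K_f(x):\TypeLi{0}$ is derivable, by context extension, the variable rule and the trajectory-indexed family rule. There $\llbracket x\rrbracket$ is the generic element, essentially the identity on $\llbracket\FinTrace\rrbracket$, and it names no object of $\mathcal{T}_f$. Likewise a variable $e:\Extf(x,y)$ names no single morphism $m_e$.
\item Your proposed condition, that elements of $\llbracket\Extf(\tau,\tau')\rrbracket$ over $\llbracket\Gamma\rrbracket$ correspond to $\mathcal{T}_f$-morphisms ``naturally in $\Gamma$'', does not repair this.
\item If the object named by a trace term is invariant under substitution, substitute any closed trace $t$ for $x$. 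Then every $t$ names the same object as $x$, so $K_f$ degenerates to a constant family.
\item If it is not invariant, the substitution lemma you rely on fails in general for $K_f(\tau)$ and for $\restrict$.
\end{itemize}
So clause~3, and your substitution lemma, are not established once trace terms contain variables.

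\textbf{The repair.} The standard fix is to internalise the trajectory structure.
\begin{itemize}
\item Require $\llbracket\FinTrace\rrbracket$ and the total object of $\llbracket\Extf(x,y)\rrbracket$ over $\llbracket\FinTrace\rrbracket\times\llbracket\FinTrace\rrbracket$ to form an internal category in $\mathcal{C}$. Interpret $\mathrm{id}_\tau$ and $\epsilon_2\circ\epsilon_1$ by its identity and composition.
\item Require $\llbracket K_f\rrbracket$ to be an internal presheaf on it. That is, an object $E\to\llbracket\FinTrace\rrbracket$ with an action sending an element over $\tau'$ and an arrow $\tau\to\tau'$ to an element over $\tau$, satisfying unit and associativity.
\item Then set $\llbracket K_f(\tau)\rrbracket:=\llbracket\tau\rrbracket^{*}E$, and let $\llbracket\restrict(\epsilon,k)\rrbracket$ be the action applied to $(\llbracket k\rrbracket,\llbracket\epsilon\rrbracket)$.
\item Stability under substitution is then just precomposition with $\llbracket\sigma\rrbracket$, and the two functor laws are exactly the action laws.
\item The paper's external presheaf is recovered by evaluating at global elements.
\end{itemize}
Note that this, like your fibre reading, makes $\llbracket K_f(\tau)\rrbracket$ an object of $\mathcal{C}/\llbracket\Gamma\rrbracket$ rather than an element of $[\mathcal{T}_f^{op},\mathcal{C}/\llbracket\Gamma\rrbracket]$. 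You therefore need to say explicitly how clause~3 is to be read. The paper's sketch, which simply writes $m_\epsilon$, has the same hole.

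\textbf{Smaller points.}
\begin{itemize}
\item $\Gamma,X:\Prop$ is not a context under the paper's extension rule. Interpret $\varphi$ in $\mu X.\varphi$ directly as a monotone operator on $\Prop(\llbracket\Gamma\rrbracket)$, natural in $\Gamma$.
\item Use the model's given $\mu/\nu$ structure rather than assuming $\Sub(1_\Gamma)$ is complete.
\item Add fixed points to your substitution lemma. For Knaster--Tarski fixed points, stability under $\sigma^{*}$ needs the adjoints $\sigma^{*}\dashv\forall_\sigma$ (for $\mu$) and $\exists_\sigma\dashv\sigma^{*}$ (for $\nu$), not just naturality of $\varphi$.
\end{itemize}
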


\begin{proof}
By structural induction on the derivation rules: each rule (computational/constructive/propositional) is interpreted by the corresponding CwF, presheaf, and $\Omega$ structure; coherence of substitution and restriction is given by Lemmas~\ref{lem:interp-coherence-subst} and~\ref{lem:interp-coherence-restrict}. The full rule-by-rule proof is in Appendix~\ref{app:soundness-full}.
\end{proof}
\paragraph{Equational completeness (preview)}
\emph{Model coherence} (from Soundness): if $\Gamma\vdash t\equiv t':A$ then in every model $\llbracket t\rrbracket_{\mathcal{M}}=\llbracket t'\rrbracket_{\mathcal{M}}$. \emph{Equational completeness} is the converse: if interpretations are equal in all models then $\equiv$ is derivable in the syntax; the proof follows from initiality (Theorem~\ref{thm:equational-completeness-main}).

\subsection{Model category and syntactic model}
\label{sec:model-category}

We give the definitions of the model category, model morphisms, and syntactic model $\mathcal{S}$; full statements and proofs of initiality and equational completeness are in the following subsections. Standard path: construct $\mathcal{S}$, prove initiality, hence equational completeness; $\mathcal{S}$ is generated by the derivation rules, and every model is uniquely determined by the unique morphism $F_{\mathcal{M}}:\mathcal{S}\to\mathcal{M}$.

\subsubsection{Model category $\mathbf{Mod}_{\mathrm{NMDEKL}}$}

We first define the model category of NM-DEKL$^3_\infty$.

\begin{definition}[Model category]
Objects of $\mathbf{Mod}_{\mathrm{NMDEKL}}$ are three-layer models with: (1) computational layer: a CwF $(\mathcal{C},\Ty_c,\Tm_c)$; (2) constructive layer: presheaf fibre $\Ty_\ell(\Gamma) = [\mathcal{T}_f^{op},\mathcal{C}/\Gamma]$; (3) propositional layer: subobject classifier $\Omega$; (4) Identity structure; (5) restriction compatible with presheaf contravariance; (6) (optional) $\mu/\nu$ fixed-point structure. Morphisms are homomorphisms preserving all this structure.
\end{definition}

\subsubsection{Model morphisms}

\begin{definition}[Model morphism]\label{def:model-morphism}
Let $\mathcal{M},\mathcal{N}$ be two models. A model morphism $F:\mathcal{M}\to\mathcal{N}$ consists of: (1) a functor $F_c:\mathcal{C}\to\mathcal{C}'$ preserving terminal, pullback, comprehension; (2) preservation of substitution for computational types and terms; (3) for the constructive layer, $F_\ell(B)=F_c\circ B\circ F_t^{op}$; (4) preservation of restriction: $F_\ell(B)(F_t(m))=F_c(B(m))$; (5) preservation of the Heyting structure on Prop; (6) preservation of Identity; (7) if $\mu/\nu$ is present, preservation of fold/unfold.
\end{definition}

\paragraph{Morphisms preserve derivation structure}
A morphism $F:\mathcal{M}\to\mathcal{N}$ preserves all derivation rules. (1) Substitution: by the Substitution Lemma and Lemma~\ref{lem:interp-coherence-subst}, $\llbracket A[\sigma]\rrbracket=\llbracket A\rrbracket[\llbracket\sigma\rrbracket]$; $F_c$ preserves CwF reindexing and pullback, so substitution is preserved. (2) $\Pi$-types: $F_c$ preserves comprehension and $\Pi$, so $\Pi$-intro and $\Pi$-elim are preserved. (3) $\Sigma$-types: similarly. (4) Identity: $F_c$ preserves the Identity structure, so $\Id$-intro and $\Id$-elim are preserved. (5) Restriction: by Lemma~\ref{lem:interp-coherence-restrict}, $F_\ell$ preserves restriction; constructive rules depending on $\restrict$ are preserved. Propositional layer: Heyting structure and $\mu/\nu$ fold/unfold are preserved. So every rule is preserved by $F$; the initiality proof can verify rule-by-rule by structural induction. With this, $\mathbf{Mod}_{\mathrm{NMDEKL}}$ is a category.

\subsubsection{Syntactic model $\mathcal{S}$}

\subsubsubsection{Syntactic equality}

Fix definitional equality $\equiv$ and quotient all derivations by it.

\begin{definition}[Syntactic context category $\mathcal{C}_{\mathrm{syn}}$]
Objects: equivalence classes $[\Gamma]$ of derivable contexts $\vdash\Gamma\;\mathsf{ctx}$. Morphisms: equivalence classes $[\sigma]$ of substitutions $\Delta\vdash\sigma:\Gamma$. Composition: syntactic substitution composition. Identity: variable substitution.
\end{definition}

\subsubsubsection{Syntactic computational layer CwF}

\begin{definition}[Syntactic CwF]
$\Ty_c([\Gamma]) = \{[A]\mid \Gamma\vdash A:\Uc\}$; $\Tm_c([\Gamma],[A]) = \{[t]\mid \Gamma\vdash t:A\}$; reindexing $[A][\sigma]=[A[\sigma]]$; comprehension $[\Gamma].[A]=[\Gamma,x:A]$.
\end{definition}

The syntactic substitution lemma ensures the CwF axioms hold.

\subsubsubsection{Syntactic $\TypeL$ as presheaf}

\begin{definition}[Syntactic presheaf structure]
Let $B:\Gamma\vdash\TypeL$.

For each $\Gamma\vdash\tau:\FinTrace$, define the component $B(\tau)$. For each $\epsilon:\Extf(\tau,\tau')$, define
\[
\restrict(\epsilon,-):B(\tau')\to B(\tau).
\]

satisfying
\[
\restrict(\mathrm{id},k)=k,
\qquad
\restrict(\epsilon_1,\restrict(\epsilon_2,k))
=
\restrict(\epsilon_2\circ\epsilon_1,k).
\]

So $B$ forms a syntactic presheaf on $\mathcal{T}_f^{op}$.
\end{definition}

\subsubsubsection{Syntactic Prop layer}

\begin{definition}[Syntactic Prop]
$\Prop([\Gamma])$ is the set of equivalence classes of
$\Gamma\vdash P:\Prop$
; proof terms are quotiented by proof-irrelevance.
\end{definition}

This yields the syntactic model $\mathcal{S}$.

\subsubsection{Initiality theorem}

\begin{theorem}[Initiality of the syntactic model]\label{thm:initiality}
The syntactic model $\mathcal{S}$ is initial in $\mathbf{Mod}_{\mathrm{NMDEKL}}$: for every model $\mathcal{M}$ there is a unique morphism
\[
F_{\mathcal{M}}:\mathcal{S}\to\mathcal{M}.
\]
\end{theorem}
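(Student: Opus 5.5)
The proof follows the standard initiality argument for CwF-style semantics (Streicher, Hofmann): existence is obtained by an interpretation defined on raw syntax, and uniqueness by induction over derivations. Fix a model $\mathcal{M}$.

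\textbf{Existence.} First I define a \emph{partial} interpretation $\llbracket-\rrbracket_\mathcal{M}$ on raw contexts, types, terms, traces, extensions and propositions, by recursion on raw syntax rather than on derivations. This avoids the coherence problem of one judgement having several derivations. The clauses are as follows.
\begin{itemize}
\item Contexts go to objects of $\mathcal{C}$ via comprehension.
\item Computational type and term formers ($\Pi$, $\lambda$, application, $\Id$, $\refl$, $J$, $\nil$, $\step$) go to the corresponding CwF structure of $\mathcal{M}$.
\item $K_f(\tau)$ goes to the value at $\llbracket\tau\rrbracket$ of the presheaf chosen in $\mathcal{M}$.
\item Extensions $\epsilon$ go to morphisms $m_\epsilon$ of $\mathcal{T}_f$, with $\mathrm{id}_\tau\mapsto\mathrm{id}$ and $\epsilon_2\circ\epsilon_1\mapsto m_{\epsilon_2}\circ m_{\epsilon_1}$.
\item $\restrict(\epsilon,k)$ goes to $\llbracket K_f\rrbracket(m_\epsilon)(\llbracket k\rrbracket)$.
\item Prop connectives go to the Heyting operations on $\Sub(1_\Gamma)$, and $\mu/\nu$ with $\mathsf{fold}/\mathsf{unfold}$ go to the chosen fixed-point structure.
\end{itemize}

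Then I prove, by simultaneous induction on derivations, that every derivable judgement is defined and lands in the right sort. This is essentially Theorem~\ref{thm:soundness}, whose substitution case uses Lemma~\ref{lem:interp-coherence-subst}. I also check that every rule of $\equiv$ is validated:
\begin{itemize}
\item $\beta$ and $\eta$ by the CwF $\Pi$-isomorphism;
\item $J$-$\beta$ by the Identity structure;
\item the two restriction laws by functoriality of $\llbracket K_f\rrbracket$ (Theorem~\ref{thm:restriction-iff-presheaf}, Lemma~\ref{lem:interp-coherence-restrict});
\item fold/unfold inverses by the fixed-point axioms;
\item proof-irrelevance because $\Prop$ is interpreted by subobjects.
\end{itemize}
It follows that $\llbracket-\rrbracket_\mathcal{M}$ descends to the $\equiv$-quotients defining $\mathcal{S}$. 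The result is a functor $F_c:\mathcal{C}_{\mathrm{syn}}\to\mathcal{C}$ together with maps on $\Ty_c,\Tm_c,\Ty_\ell,\Prop$.

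Next I verify that this family is a model morphism in the sense of Definition~\ref{def:model-morphism}. The structure of $\mathcal{S}$ is defined syntactically, and the interpretation clauses are literally the corresponding structure of $\mathcal{M}$, so preservation of each piece holds by construction. For substitution, the syntactic reindexing $[A][\sigma]=[A[\sigma]]$ is sent to $\llbracket A\rrbracket[\llbracket\sigma\rrbracket]$ by the substitution coherence lemma. For restriction, clause (4) of Definition~\ref{def:model-morphism} is exactly the defining clause for $\restrict$.

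\textbf{Uniqueness.} Let $G:\mathcal{S}\to\mathcal{M}$ be any model morphism. I show $G=F_\mathcal{M}$ by induction on the canonical representatives of elements of $\mathcal{S}$. Every context, type, term and proposition of $\mathcal{S}$ is built from the empty context, comprehension, and the type and term formers of the signature. Since $G$ must preserve each of these operations, its value on every generator is forced, and it therefore coincides with $F_\mathcal{M}$ at each inductive step. Variables are handled by the comprehension projections, which $G$ preserves. Restriction is handled by clause (4). $\mathsf{fold}/\mathsf{unfold}$ are handled by clause (7).

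\textbf{Main obstacle.} I expect the hardest part to be the existence step: showing that the interpretation is well defined independently of derivations and respects $\equiv$, above all in the interaction of substitution with the constructive presheaf fibre. There $\Ty_\ell(\Gamma)=[\mathcal{T}_f^{op},\mathcal{C}/\Gamma]$ must be reindexed by pullback. Pullback is defined only up to isomorphism, so it must commute strictly with restriction. I would try to resolve this by assuming $\mathcal{M}$ comes with \emph{chosen, split} reindexing, i.e.\ strictly functorial pullbacks, as is standard for CwFs, and by proving a joint lemma. The lemma should state that $\llbracket B[\sigma]\rrbracket(\tau)=\llbracket B\rrbracket(\tau)[\llbracket\sigma\rrbracket]$, and that this equation is natural in extensions $\epsilon$, so that $\llbracket\restrict(\epsilon,k)[\sigma]\rrbracket=\llbracket\restrict(\epsilon[\sigma],k[\sigma])\rrbracket$. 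I would prove it by induction on raw syntax, mutually with Lemma~\ref{lem:interp-coherence-subst}. Once this holds, the remaining rule-by-rule checks are routine.
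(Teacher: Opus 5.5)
Your proposal is correct and follows essentially the same route as the paper's proof (appendix steps (I.1)--(I.6) and (II)): $F_{\mathcal{M}}$ is taken to be the interpretation $\llbracket-\rrbracket_{\mathcal{M}}$, it descends to $\equiv$-classes by soundness, preservation of substitution and restriction comes from the coherence lemmas and presheaf functoriality, and uniqueness is proved by induction over the syntactic generators. Your partial interpretation on raw syntax and your explicit assumption of split (strictly functorial) reindexing for $\Ty_\ell$ refine that same argument rather than change it; the paper leaves both implicit, and its Lemma~\ref{lem:interp-coherence-subst} gives only $\cong$ on types.
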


\begin{proof}[Proof sketch]
(1) Define $F_{\mathcal{M}}$: for each syntactic object, set
\[
F_{\mathcal{M}}([A])=\llbracket A\rrbracket_{\mathcal{M}}.
\]

(2) Substitution is preserved by the semantic substitution lemma:
\[
\llbracket A[\sigma]\rrbracket
=
\llbracket A\rrbracket[\llbracket\sigma\rrbracket].
\]

(3) Restriction is preserved by the presheaf contravariant structure. (4) Uniqueness: by syntactic generativity, any homomorphism preserving the rules must equal this interpretation.
\end{proof}

\paragraph{Equational completeness}
By initiality: if $\llbracket t\rrbracket_{\mathcal{M}}=\llbracket u\rrbracket_{\mathcal{M}}$ in every model $\mathcal{M}$, then in particular in $\mathcal{S}$, so $\Gamma\vdash t\equiv u$. Formal statement and full proof in Theorem~\ref{thm:equational-completeness-main} (Section~\ref{sec:equational-completeness}).

\subsection{Completeness and consistency}

Interpretation coherence is the technical basis for the initiality and equational completeness proofs; the formal statement of equational completeness is in Section~\ref{sec:equational-completeness}.

\subsubsection{Interpretation coherence (model coherence)}

\begin{lemma}[Coherence of substitution]
\label{lem:interp-coherence-subst}
Let $\mathcal{M}$ be any NM-DEKL$^3_\infty$ model. If $\Delta\vdash\sigma:\Gamma$, then for every derivable judgement:
\[
\llbracket A[\sigma]\rrbracket_{\mathcal{M}}
\cong
\llbracket A\rrbracket_{\mathcal{M}}[\llbracket\sigma\rrbracket_{\mathcal{M}}],
\qquad
\llbracket t[\sigma]\rrbracket_{\mathcal{M}}
=
\llbracket t\rrbracket_{\mathcal{M}}[\llbracket\sigma\rrbracket_{\mathcal{M}}],
\]
with pointwise pullback for $\TypeL$ and subobject inverse image for Prop.
\end{lemma}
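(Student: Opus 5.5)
The plan is to argue by simultaneous induction on the derivations of $\Gamma\vdash A:\Uc_i$, $\Gamma\vdash B:\TypeLi{i}$, $\Gamma\vdash P:\Prop$ and $\Gamma\vdash t:A$. The key preliminary step is to prove the result first for the two basic shapes of substitution: weakenings $\mathrm{p}:\Gamma.A\to\Gamma$ and single-term substitutions $\langle\mathrm{id},a\rangle:\Gamma\to\Gamma.A$. The general case then follows, because every syntactic substitution $\Delta\vdash\sigma:\Gamma$ factors, up to $\equiv$, as an iterated extension $\langle\sigma',t\rangle$ built from these. Composition is handled by the CwF laws $A[\sigma\circ\delta]=A[\sigma][\delta]$ in $\mathcal{M}$, together with the syntactic laws of Lemma~\ref{lem:substitution}. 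Throughout, $\llbracket\sigma\rrbracket_{\mathcal{M}}$ is defined by induction on the length of $\Gamma$ as the tuple $\langle\llbracket\sigma'\rrbracket,\llbracket t\rrbracket\rangle$ built via comprehension. Because of this, the variable case $x[\sigma]$ reduces to the projection equation $\mathrm{q}[\langle f,a\rangle]=a$ of the CwF.

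For the computational layer I would check each rule in turn. The universe rule $\Uc_i$ goes to a stable universe code, and stability is part of a CwF universe. For $\Pi$ we use the Beck--Chevalley property of $\Pi$ along $\llbracket\sigma\rrbracket$: by the induction hypothesis applied at $\Gamma,x:A$ with the lifted substitution $\sigma^{+}=\langle\sigma\circ\mathrm{p},x\rangle$, we get $\llbracket\Pi x{:}A.B[\sigma]\rrbracket\cong\Pi(\llbracket A\rrbracket[\llbracket\sigma\rrbracket],\llbracket B\rrbracket[\llbracket\sigma\rrbracket^{+}])$, and this is $\llbracket\Pi x{:}A.B\rrbracket[\llbracket\sigma\rrbracket]$. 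The cases $\lambda$, application (which also needs the single-substitution case for $B[a/x]$), $\Id$, $\refl$ and $J$ follow the same pattern. The trace formers $\nil$, $\step$, $\mathrm{id}_\tau$ and $\epsilon_2\circ\epsilon_1$ are interpreted by global constants or morphisms of $\mathcal{C}$ composed with $\llbracket\sigma\rrbracket$, so their naturality holds by associativity. For types we only claim an isomorphism $\cong$, since pullbacks in $\mathcal{C}$ are defined only up to canonical isomorphism. Following the usual strictification (split fibration or a chosen-pullback CwF), I would make the equality for terms strict by transporting along these canonical isomorphisms. I would also record that the isomorphisms are coherent, meaning identities at $\mathrm{id}$ and composable along $\sigma\circ\delta$.

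For the constructive layer, $\llbracket B\rrbracket$ is an object of $[\mathcal{T}_f^{\mathrm{op}},\mathcal{C}/\llbracket\Gamma\rrbracket]$, and reindexing is defined pointwise: $(\llbracket B\rrbracket[f])(\tau)=f^{*}\llbracket B\rrbracket(\tau)$, with $f^{*}$ applied to each restriction map $\llbracket B\rrbracket(m)$. The pullback functor preserves identities and composites up to coherent isomorphism, so this is again a presheaf. For $K_f(\tau)$ we use the induction hypothesis on $\tau$. The step I expect to be the main obstacle is $\restrict(\epsilon,k)$, where the required equation is
\[
\llbracket\restrict(\epsilon,k)[\sigma]\rrbracket
=\llbracket K_f\rrbracket(m_{\epsilon[\sigma]})\bigl(\llbracket k\rrbracket[\llbracket\sigma\rrbracket]\bigr)
=\bigl(\llbracket K_f\rrbracket(m_\epsilon)(\llbracket k\rrbracket)\bigr)[\llbracket\sigma\rrbracket].
\]
This needs two facts: that $m_{\epsilon[\sigma]}$ is the reindexing of $m_\epsilon$, and that pulling back along $\llbracket\sigma\rrbracket$ commutes with postcomposition by the presheaf's action. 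The second is the naturality of $f^{*}$ as a functor on slices. The first is subtle, because $\mathcal{T}_f$ is fixed while $\epsilon$ depends on $\Gamma$. I would try first to interpret extensions as $\llbracket\Gamma\rrbracket$-indexed families of $\mathcal{T}_f$-morphisms, so that substitution acts by precomposition, and then check compatibility with the functor laws $\restrict(\mathrm{id},k)\equiv k$ and the composition law.

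Finally, for $\Prop$ we interpret $P$ as a subobject of $1_{\llbracket\Gamma\rrbracket}$, and substitution becomes inverse image along $\llbracket\sigma\rrbracket$. Pullback of monos preserves the Heyting operations $\top,\bot,\wedge,\vee,\to$. The quantifiers $\forall$ and $\exists$ commute with pullback by Beck--Chevalley for the dependent products and images in the LCCC. For $\mu X.\varphi$, whose interpretation is the least fixed point in the complete lattice $\Sub(1)$, commutation is not automatic. Here I would either restrict to models whose inverse image preserves the relevant fixed points, which is part of the ``preservation of fold/unfold'' structure in Definition~\ref{def:pshcwf3-model}(6), or argue by the ordinal approximants $\varphi^{\alpha}(\bot)$ when $f^{*}$ preserves the needed joins.
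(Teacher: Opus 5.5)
Your proposal is correct and follows the same route as the paper. The paper's proof simply asserts the three layer-wise facts you set out to establish, presented as the semantic counterpart of Lemma~\ref{lem:substitution}: CwF reindexing for the computational layer, pointwise pullback of presheaves for $\TypeL$, and inverse image of subobjects for $\Prop$. Your version is more careful: the two points you flag---that $m_{\epsilon[\sigma]}$ must be the reindexing of $m_\epsilon$ when $\epsilon$ depends on $\Gamma$, and that inverse image must commute with the interpretation of $\mu/\nu$---are passed over in the paper, and your fixes are the standard ones (interpret extensions as $\llbracket\Gamma\rrbracket$-indexed families; assume stability of the fixed-point structure or use the approximants $\varphi^{\alpha}(\bot)$).
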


\begin{proof}
This is the direct semantic formulation of Lemma~\ref{lem:substitution}: computational layer by CwF pullback reindexing; constructive layer by pointwise presheaf reindexing; Prop by subobject pullback.
\end{proof}

\begin{lemma}[Coherence of restriction]
\label{lem:interp-coherence-restrict}
Let $\Gamma\vdash\epsilon:\Extf(\tau,\tau')$ and $\Gamma\vdash k:K_f(\tau')$. Then
\[
\llbracket \restrict(\epsilon,k)\rrbracket_{\mathcal{M}}
=
\llbracket K_f\rrbracket_{\mathcal{M}}(m_\epsilon)\big(\llbracket k\rrbracket_{\mathcal{M}}\big),
\]
and compatibility with substitution:
\[
\llbracket \restrict(\epsilon,k)[\sigma]\rrbracket_{\mathcal{M}}
=
\llbracket \restrict(\epsilon[\sigma],k[\sigma])\rrbracket_{\mathcal{M}}.
\]
\end{lemma}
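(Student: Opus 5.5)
The plan is to treat the first equation as essentially definitional and derive the second from the first together with Lemma~\ref{lem:interp-coherence-subst}. The interpretation $\llbracket-\rrbracket_{\mathcal{M}}$ is defined by structural induction on derivations, as in the proof sketch of Theorem~\ref{thm:initiality}. So I will fix the clause for the restriction rule to be exactly
\[
\llbracket \restrict(\epsilon,k)\rrbracket_{\mathcal{M}} := \llbracket K_f\rrbracket_{\mathcal{M}}(m_\epsilon)\big(\llbracket k\rrbracket_{\mathcal{M}}\big),
\]
which is the model semantics given in Section~\ref{sec:semantics-nonmono}. Here $m_\epsilon:\llbracket\tau\rrbracket\to\llbracket\tau'\rrbracket$ is the morphism of $\mathcal{T}_f$ determined by $\llbracket\epsilon\rrbracket$. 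Sending $\mathrm{id}_\tau$ to the identity and $\epsilon_2\circ\epsilon_1$ to $m_{\epsilon_2}\circ m_{\epsilon_1}$ makes this assignment functorial. The first equation then holds by construction.

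The substantive step is checking that this clause is \emph{well defined on $\equiv$-classes}. Restriction terms are subject to the functor laws, so the interpretation must respect them. For $\restrict(\mathrm{id},k)\equiv k$, I use $K_f(\mathrm{id})=\mathrm{id}$. For $\restrict(\epsilon_1,\restrict(\epsilon_2,k))\equiv\restrict(\epsilon_2\circ\epsilon_1,k)$, I use $K_f(m_{\epsilon_2}\circ m_{\epsilon_1})=K_f(m_{\epsilon_1})\circ K_f(m_{\epsilon_2})$. Both are exactly the presheaf laws, equivalently the restriction-structure laws of Theorem~\ref{thm:restriction-iff-presheaf}.

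For compatibility with substitution, let $\Delta\vdash\sigma:\Gamma$. Syntactically, $\restrict(\epsilon,k)[\sigma]$ is $\restrict(\epsilon[\sigma],k[\sigma])$, since substitution commutes with term constructors; its typing comes from Lemma~\ref{lem:substitution}. Semantically, I would show both sides equal $\big(\llbracket K_f\rrbracket(m_\epsilon)(\llbracket k\rrbracket)\big)[\llbracket\sigma\rrbracket]$, in three moves:
\begin{enumerate}
\item Lemma~\ref{lem:interp-coherence-subst} gives $\llbracket k[\sigma]\rrbracket=\llbracket k\rrbracket[\llbracket\sigma\rrbracket]$ and $\llbracket\epsilon[\sigma]\rrbracket=\llbracket\epsilon\rrbracket[\llbracket\sigma\rrbracket]$, hence $m_{\epsilon[\sigma]}$ is the reindexing of $m_\epsilon$.
\item Reindexing in $\Ty_\ell(\Gamma)=[\mathcal{T}_f^{op},\mathcal{C}/\Gamma]$ is pointwise pullback along $\llbracket\sigma\rrbracket$, so it is a functor $\sigma^*$ applied componentwise.
\item Consequently the action of $K_f$ on morphisms commutes with $\sigma^*$: $\sigma^*(K_f(m))=(\sigma^*K_f)(m)$. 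This is the Beck--Chevalley-type naturality of the presheaf action with respect to substitution.
\end{enumerate}

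I expect step 3 to be the main obstacle. Pullback is only defined up to isomorphism, which is why Lemma~\ref{lem:interp-coherence-subst} states $\cong$ for types. The equality of terms therefore needs a split (strictly functorial) choice of reindexing, as in a CwF where $A[\sigma][\delta]=A[\sigma\circ\delta]$ holds on the nose. My first approach is to assume the model's reindexing on $\Ty_\ell$ is split, as part of the definition of a $\mathbf{PShCwF}^3$-model (Definition~\ref{def:pshcwf3-model}(3)). With that assumption, naturality of $\sigma^*$ with respect to $K_f(m)$ is just functoriality of postcomposition with $\sigma^*$, and the two sides agree strictly. Failing that, I would state the result up to the canonical isomorphism and transport terms along it.
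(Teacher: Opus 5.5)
Your proposal is correct and follows the paper's own one-line argument. The paper obtains the first equation because the interpretation clause for $\restrict$ is, by definition, the contravariant presheaf action $\llbracket K_f\rrbracket_{\mathcal{M}}(m_\epsilon)$, and obtains compatibility with substitution from reindexing in $[\mathcal{T}_f^{op},\mathcal{C}/\Gamma]$ being pointwise pullback together with functoriality; your additions (checking that the clause respects the $\restrict$ functor laws, which the paper leaves to Lemma~\ref{lem:restriction}, and noting that strict equality of terms needs a split choice of reindexing, since Lemma~\ref{lem:interp-coherence-subst} silently pairs $\cong$ on types with $=$ on terms) are sound refinements of that same route rather than a different proof.
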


\begin{proof}
This is the direct semantic formulation of Lemma~\ref{lem:restriction}; the contravariant presheaf action gives restriction; compatibility with substitution follows from pointwise pullback and functoriality.
\end{proof}

\paragraph{Equational completeness (summary)}
Equational completeness: if an equation holds in all models then it is derivable in the syntax; this follows from initiality. See Section~\ref{sec:equational-completeness} (Definition~\ref{def:valid-eq}, Theorem~\ref{thm:equational-completeness-main}).

\subsubsection{Consistency and model coherence}

Consistency means the system does not derive a closed term of type $\bot$. In the three-layer system, $\bot:\Prop$ (and optionally $\mathbf{0}:\Type_\ell$) can witness inconsistency.

\begin{definition}[Consistency]
\label{def:consistency}
NM-DEKL$^3_\infty$ is \emph{consistent} if there is no derivation $\vdash p:\bot$ (equivalently, if using $\mathbf{0}:\Type_\ell$, no $\vdash z:\mathbf{0}$).
\end{definition}

\begin{theorem}[Consistency from a model]
\label{thm:consistency-from-model}
If there exists an NM-DEKL$^3_\infty$ model $\mathcal{M}$ in which $\llbracket\bot\rrbracket_{\mathcal{M}}$ is uninhabited (no global section), then the system is consistent (Definition~\ref{def:consistency}).
\end{theorem}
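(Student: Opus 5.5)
The plan is a contrapositive argument through Soundness (Theorem~\ref{thm:soundness}). Fix a model $\mathcal{M}$ in which $\llbracket\bot\rrbracket_{\mathcal{M}}$ has no global section. Suppose, towards a contradiction, that the system is inconsistent in the sense of Definition~\ref{def:consistency}. Then there is a closed derivation $\vdash p:\bot$, or, in the variant with $\mathbf{0}:\Type_\ell$, a closed derivation $\vdash z:\mathbf{0}$. I will show that such a derivation yields a global section of $\llbracket\bot\rrbracket_{\mathcal{M}}$, which is impossible.

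\textbf{Step 1: interpret the derivation.} The context is empty, and I will use that $\llbracket\cdot\rrbracket_{\mathcal{M}}$ is the terminal object $1$ of the computational CwF; this is part of the CwF structure, and the empty-context rule is interpreted by it. Since $\bot:\Prop$ and $\Prop:\TypeLi{0}$, the proof term $p$ is a constructive-layer term. Clause 4 of Theorem~\ref{thm:soundness} then gives $\llbracket p\rrbracket_{\mathcal{M}}\in\Tm_\ell(1,\llbracket\bot\rrbracket_{\mathcal{M}})$.

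\textbf{Step 2: convert the term into a global section.} I need that an element of $\Tm_\ell(1,\llbracket\bot\rrbracket)$ is a global section of $\llbracket\bot\rrbracket$, that is, a morphism $1\to\llbracket\bot\rrbracket$ over $1$. This is essentially the CwF correspondence between terms and sections of the comprehension projection. For a proposition, $\llbracket\bot\rrbracket$ is a subobject of $1$, namely the bottom element of $\Sub(1)$. A global section of it is therefore exactly a factorisation of $\mathrm{id}_1$ through the mono $\llbracket\bot\rrbracket\hookrightarrow 1$. The hypothesis that $\llbracket\bot\rrbracket_{\mathcal{M}}$ is uninhabited says no such factorisation exists, so we have a contradiction. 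The $\mathbf{0}$ variant is handled in the same way using clause 4 directly, under the parallel hypothesis on $\llbracket\mathbf{0}\rrbracket$ (or by observing that the two notions of inconsistency are interderivable, as Definition~\ref{def:consistency} asserts).

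\textbf{Main obstacle.} The delicate point is Step 2, the identification of ``$\Tm$ at the empty context'' with ``global section''. Prop terms are quotiented by proof-irrelevance, and the constructive layer is presheaf-valued. So I must check that the interpretation of a Prop-term over $1$ is indeed a map $1\to\llbracket\bot\rrbracket$, and not merely a compatible family over $\mathcal{T}_f$. My first attempt is to use that $\Prop(\Gamma)=\Sub(1_\Gamma)$ lives in the base slice $\mathcal{C}/\Gamma$, independently of $\mathcal{T}_f$, so no trace indexing intervenes. Then the term-as-section reading is just the standard CwF fact. The existence of at least one such model, for instance the $\mathbf{Set}$ presheaf model with $\llbracket\bot\rrbracket=\emptyset$, shows the hypothesis is satisfiable. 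That is not needed for the implication itself.
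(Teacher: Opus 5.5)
Your argument is the same as the paper's: assume a closed derivation $\vdash p:\bot$, apply Soundness to get $\llbracket p\rrbracket_{\mathcal{M}}$ inhabiting $\llbracket\bot\rrbracket_{\mathcal{M}}$, and contradict the hypothesis on $\mathcal{M}$. The paper reads ``uninhabited'' directly as emptiness of $\Tm(1,\llbracket\bot\rrbracket_{\mathcal{M}})$, as in its definition of failure, so it takes for granted your Step~2 identification of terms over the empty context with global sections.
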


\begin{proof}
By contradiction: if the system were inconsistent, there would be $\vdash p:\bot$; by Soundness (Theorem~\ref{thm:soundness}), $\llbracket p\rrbracket_{\mathcal{M}}:\llbracket\bot\rrbracket_{\mathcal{M}}$, so $\llbracket\bot\rrbracket_{\mathcal{M}}$ would be inhabited, contradicting the model. So the system is consistent.
\end{proof}

\begin{corollary}[Syntactic model consistency implies system consistency]
\label{cor:syn-consistency}
If in the syntactic model $\mathcal{S}$ the type $\llbracket\bot\rrbracket_{\mathcal{S}}$ is uninhabited, then the system is consistent.
\end{corollary}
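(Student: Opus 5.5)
The corollary should follow by specialising Theorem~\ref{thm:consistency-from-model} to the syntactic model. That theorem needs an NM-DEKL$^3_\infty$ model $\mathcal{M}$ in which $\llbracket\bot\rrbracket_{\mathcal{M}}$ has no global section. So the plan has two parts: check that $\mathcal{S}$ qualifies as a model, and check that the hypothesis of the corollary is exactly the required uninhabitedness.

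First, I would confirm that $\mathcal{S}$ is an object of $\mathbf{Mod}_{\mathrm{NMDEKL}}$ (Section~\ref{sec:model-category}). The pieces are already in place.
\begin{itemize}
  \item The computational layer is the syntactic CwF. Its axioms hold by the Substitution Lemma (Lemma~\ref{lem:substitution}).
  \item The constructive layer is the syntactic presheaf structure. The judgmental restriction functor laws make it functorial once we quotient by $\equiv$.
  \item The propositional layer consists of syntactic propositions quotiented by proof-irrelevance.
\end{itemize}
This is the same fact already used in stating Theorem~\ref{thm:initiality}, so I would cite it rather than re-verify it.

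Second, I would unfold the hypothesis. In $\mathcal{S}$ the interpretation of a closed type is its equivalence class. A global section of $\llbracket\bot\rrbracket_{\mathcal{S}}$ over the empty context is an element of $\Tm([\cdot],[\bot])$, that is, a class $[p]$ with $\vdash p:\bot$. So ``uninhabited in $\mathcal{S}$'' means literally that no closed derivation $\vdash p:\bot$ exists.

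With these two facts, Theorem~\ref{thm:consistency-from-model} applies with $\mathcal{M}:=\mathcal{S}$ and gives consistency in the sense of Definition~\ref{def:consistency}. One can also skip soundness entirely: by the unfolding in the second step, the hypothesis already states that there is no $\vdash p:\bot$. That is Definition~\ref{def:consistency} verbatim, so the corollary is in effect tautological for $\mathcal{S}$. I would present the direct argument as primary and mention the route through Theorem~\ref{thm:consistency-from-model} as the one matching the surrounding structure.

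The only delicate point is the identification of global sections in $\mathcal{S}$ with closed terms. There are two things to check.
\begin{itemize}
  \item Since $\bot:\Prop$, the relevant hom-set is taken in the propositional layer, where terms are quotiented by proof-irrelevance. Quotienting cannot create or destroy inhabitants, because a class is nonempty exactly when some representative exists.
  \item The terminal object of $\mathcal{C}_{\mathrm{syn}}$ should be the empty context $[\cdot]$.
\end{itemize}
I would note both explicitly. The corollary carries no deeper obstacle; the substantive question of actually exhibiting uninhabitedness is deferred, for instance to the Set model.
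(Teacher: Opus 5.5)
Your proposal is correct. The route you list second, instantiating Theorem~\ref{thm:consistency-from-model} with $\mathcal{M}:=\mathcal{S}$, is exactly the paper's one-line proof. That proof implicitly relies on two things: that $\mathcal{S}$ is an object of $\mathbf{Mod}_{\mathrm{NMDEKL}}$, and that Soundness applies to it.

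Your primary, direct argument is a mild but genuine variation. Since $\llbracket u\rrbracket_{\mathcal{S}}$ is just the class of $u$, and the empty context is terminal in $\mathcal{C}_{\mathrm{syn}}$, a global section of $\llbracket\bot\rrbracket_{\mathcal{S}}$ is a class $[p]$ with $\vdash p:\bot$. The hypothesis therefore coincides literally with Definition~\ref{def:consistency}.

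This buys independence from Soundness and from the claim that $\mathcal{S}$ satisfies the model axioms, both of which the paper only sketches. It also makes explicit what the paper's phrasing obscures: the corollary is a restatement of the definition and has no content of its own. As you correctly note, the real work of a consistency proof is exhibiting a model, such as the $\mathbf{Set}$ presheaf model, in which $\llbracket\bot\rrbracket$ is empty, and applying Soundness there.
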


\begin{proof}
Take $\mathcal{M}=\mathcal{S}$ in Theorem~\ref{thm:consistency-from-model}.
\end{proof}

\subsubsection{Propositional layer (relative) completeness}

\paragraph{Remark}
Strong completeness for Prop (``if valid in all models then syntactically provable'') generally requires interpreting Prop as the syntactic $\Omega$ of a presheaf topos and proving the corresponding logical completeness (or restricting Prop to a fragment without fixed points or extra axioms). In this paper we establish ``equational completeness'' and ``consistency'' in the main line, and treat Prop completeness as optional further work: for the Heyting fragment without extra axioms, stronger completeness can be given in a syntactic topos setting; with $\mu/\nu$, monotonicity and fixed-point existence conditions are needed, and completeness is relative to a class of models.


\subsection{Initiality theorem (full proof)}
\label{sec:initiality-full}

The initiality theorem (Theorem~\ref{thm:initiality}) and its proof outline were given above. The lemma and theorem below restate the syntactic model $\mathcal{S}$ and the formal initiality statement; the \textbf{full inductive proof of existence and uniqueness} is in Appendix~\ref{app:initiality-full}.

\paragraph{Syntactic model $\mathcal{S}$ (recap)}
Let $\mathcal{S}$ be the syntactic model of NM-DEKL$^3_\infty$: contexts and substitutions are equivalence classes of derivable contexts and substitutions; computational types/terms are classes of $\Gamma\vdash A:\Uc$ and $\Gamma\vdash t:A$; constructive types/terms are classes of $\Gamma\vdash B:\TypeL$ and $\Gamma\vdash u:B$; Prop is the quotient of $\Gamma\vdash P:\Prop$ by proof-irrelevance; restriction is given by the syntax rule $\restrict$ and satisfies functoriality (see the lemma below).

\begin{lemma}[Functoriality of Ext/Restriction (syntactic side)]\label{lem:restriction}\label{lem:syn-restrict-functorial}
In the syntactic model, for any $B:\TypeL$ and extension evidence $\epsilon_1:\Extf(\tau,\tau')$, $\epsilon_2:\Extf(\tau',\tau'')$, we have $\restrict(\epsilon_1,\restrict(\epsilon_2,k))\equiv\restrict(\epsilon_2\circ\epsilon_1,k)$ and $\restrict(\mathrm{id}_\tau,k)\equiv k$. So each $B$ satisfies contravariant functoriality on $\mathcal{T}_f^{op}$ (forms a syntactic presheaf).
\end{lemma}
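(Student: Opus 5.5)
The plan is to read the statement as a claim about the quotient by definitional equality $\equiv$ in the syntactic model $\mathcal{S}$. Both identities are then immediate from the restriction functor laws, which the calculus already lists among its judgmental equality rules. The only real work is checking that these laws turn each $B$ into a well-defined functor on the syntactic trajectory category.

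\textbf{Step 1 (the equations).} Fix $\Gamma\vdash k:K_f(\tau'')$, $\Gamma\vdash\epsilon_1:\Extf(\tau,\tau')$ and $\Gamma\vdash\epsilon_2:\Extf(\tau',\tau'')$.
\begin{itemize}
\item By the restriction rule applied twice, $\restrict(\epsilon_1,\restrict(\epsilon_2,k))$ is well-typed at $K_f(\tau)$.
\item By the composition rule for $\Extf$, $\epsilon_2\circ\epsilon_1:\Extf(\tau,\tau'')$, so $\restrict(\epsilon_2\circ\epsilon_1,k)$ is also well-typed at $K_f(\tau)$.
\item The second functor law in the ``Restriction functor laws'' paragraph gives $\restrict(\epsilon_1,\restrict(\epsilon_2,k))\equiv\restrict(\epsilon_2\circ\epsilon_1,k)$ directly.
\item The first functor law gives $\restrict(\mathrm{id}_\tau,k)\equiv k$ for $k:K_f(\tau)$.
\end{itemize}
For a general $B:\TypeL$ indexed by traces, I will take the syntactic presheaf structure of the definition in Section~\ref{sec:model-category} as given. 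That definition stipulates exactly these laws for each $B$, so the same argument applies verbatim.

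\textbf{Step 2 (functoriality in the quotient).} Form the syntactic trajectory category. Its objects are traces $\Gamma\vdash\tau:\FinTrace$ up to $\equiv$, and its morphisms are classes of extension evidence $[\epsilon]$, with composition $\circ$ and identities $\mathrm{id}_\tau$. Define $B$ on objects by $\tau\mapsto\{[k]\mid\Gamma\vdash k:B(\tau)\}$ and on morphisms by $[\epsilon]\mapsto\restrict(\epsilon,-)$.

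Well-definedness needs $\restrict$ to respect $\equiv$ in both arguments. I will obtain this from the congruence closure of definitional equality, which the paper builds in by generating $\equiv$ under context. Given well-definedness, Step 1 gives exactly the identity law $B(\mathrm{id})=\mathrm{id}$ and the contravariant composition law $B(\epsilon_2\circ\epsilon_1)=B(\epsilon_1)\circ B(\epsilon_2)$. By Theorem~\ref{thm:restriction-iff-presheaf}, this restriction structure is equivalent to a presheaf on $\mathcal{T}_f^{op}$, which is the final claim.

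\textbf{Main obstacle.} The hard point is making the syntactic trajectory category an honest category. Associativity and unit laws for $\circ$ on $\Extf$ evidence are not listed among the explicit equality rules. Without them, the composition law only says that $B$ respects whatever composition exists, not that the index is a category.

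I would handle this in two ways. First, I would quotient extension evidence by the equivalence generated by the category axioms, i.e.\ take the free category on the $\Extf$ rules. Second, I would note that the restriction laws themselves force $\restrict((\epsilon_3\circ\epsilon_2)\circ\epsilon_1,k)\equiv\restrict(\epsilon_3\circ(\epsilon_2\circ\epsilon_1),k)$, since both sides are $\equiv$ to the same threefold iterated restriction. Likewise, both $\restrict(\epsilon\circ\mathrm{id},k)$ and $\restrict(\mathrm{id}\circ\epsilon,k)$ are $\equiv$ to $\restrict(\epsilon,k)$. So the action on terms is insensitive to the choice of representative, and functoriality holds on the quotient.
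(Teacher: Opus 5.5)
Your proposal is correct and takes the same route as the paper. The paper gives no separate argument for this lemma: the two equations are exactly the restriction functor laws listed among the judgmental equality rules (and, for a general $B:\TypeL$, stipulated in the syntactic presheaf structure definition), so they hold verbatim in the quotient by $\equiv$. Your extra checks fill in details the paper leaves implicit: congruence of $\equiv$ for well-definedness of $\restrict(\epsilon,-)$ on classes, and associativity and unit for $\circ$ on $\Extf$-evidence holding up to the restriction action, with the kernel of that action closed under $\circ$ by the same composition law.
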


\begin{theorem}[Initiality of the Syntactic Model]\label{thm:initiality-skeleton}
The syntactic model $\mathcal{S}$ is initial in $\mathbf{Mod}_{\mathrm{NMDEKL}}$: for every NM-DEKL$^3_\infty$ model $\mathcal{M}$ there is a unique model morphism $F_{\mathcal{M}}:\mathcal{S}\to\mathcal{M}$.
\end{theorem}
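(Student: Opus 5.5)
The plan is the standard two-stage initiality argument: first build the interpretation on \emph{raw} syntax as a partial function, then show it is total on derivable judgements and respects $\equiv$, and finally prove uniqueness by induction on derivations.

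\textbf{Existence.} Fix a model $\mathcal{M}$. I will define by simultaneous structural recursion on raw expressions a partial interpretation $\llbracket-\rrbracket_{\mathcal{M}}$:
\begin{itemize}
  \item contexts are sent to objects of $\mathcal{C}$;
  \item computational types are sent to elements of $\Ty_c$;
  \item terms are sent to elements of $\Tm_c$ or $\Tm_\ell$;
  \item constructive types are sent to presheaves in $[\mathcal{T}_f^{op},\mathcal{C}/\Gamma]$;
  \item propositions are sent to subobjects.
\end{itemize}
Each case follows the corresponding rule. $\Pi$, $\lambda$ and application use the CwF $\Pi$-structure, and $\Id$, $\refl$ and $J$ use the identity structure. $\FinTrace$, $\nil$ and $\step$ use the trajectory category. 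For $\restrict(\epsilon,k)$ I take $\llbracket K_f\rrbracket(m_\epsilon)(\llbracket k\rrbracket)$, as in Lemma~\ref{lem:interp-coherence-restrict}. For $\mu/\nu$ I use the chosen fixed-point structure.

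Two facts are then proved together by induction on derivations:
\begin{itemize}
  \item \textbf{Totality.} If a judgement is derivable, its interpretation is defined and lands in the right sort. This is essentially Theorem~\ref{thm:soundness}.
  \item \textbf{Respect for $\equiv$.} If $\Gamma\vdash t\equiv t':A$, then $\llbracket t\rrbracket=\llbracket t'\rrbracket$.
\end{itemize}
The equational cases are handled as follows:
\begin{itemize}
  \item $\beta$ and $\eta$ follow from the CwF $\Pi$-equations, once the substitution clause $\llbracket t[a/x]\rrbracket=\llbracket t\rrbracket[\langle\mathrm{id},\llbracket a\rrbracket\rangle]$ is available. That clause comes from Lemma~\ref{lem:interp-coherence-subst}, proved beforehand by a separate induction on raw terms over weakening and substitution.
  \item The $J$-$\beta$ rule follows from the model's identity structure.
  \item The two restriction laws follow from the functor laws of $\llbracket K_f\rrbracket$ (Lemma~\ref{lem:syn-restrict-functorial} together with Theorem~\ref{thm:restriction-iff-presheaf}), provided $\epsilon\mapsto m_\epsilon$ sends $\mathrm{id}_\tau$ to identities and $\epsilon_2\circ\epsilon_1$ to composites. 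I will build this into the interpretation of $\Extf$.
  \item fold/unfold are inverses by the fixed-point axioms.
  \item The equivalence-closure rules are immediate.
\end{itemize}
Because the interpretation respects $\equiv$, it descends to equivalence classes and so gives a well-defined $F_{\mathcal{M}}:\mathcal{S}\to\mathcal{M}$ on each component. I then check the clauses of Definition~\ref{def:model-morphism} one at a time:
\begin{itemize}
  \item Terminal object and comprehension are preserved by construction, since $\llbracket\Gamma,x:A\rrbracket=\llbracket\Gamma\rrbracket.\llbracket A\rrbracket$.
  \item Substitution is preserved by Lemma~\ref{lem:interp-coherence-subst}.
  \item Restriction is preserved by Lemma~\ref{lem:interp-coherence-restrict}.
  \item The Heyting, Id and fixed-point structures are preserved because each syntactic operation in $\mathcal{S}$ is sent to the corresponding semantic one.
\end{itemize}

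\textbf{Uniqueness.} Let $G:\mathcal{S}\to\mathcal{M}$ be any model morphism. I show $G([X])=\llbracket X\rrbracket$ by induction on the derivation of $X$. Every class in $\mathcal{S}$ is generated by the structure operations: empty context, comprehension, $\Pi$, $\lambda$, application, $\nil$, $\step$, $K_f$, $\restrict$, the connectives, and $\mu/\nu$. $G$ preserves each of these, so at every rule $G$ agrees with $\llbracket-\rrbracket$ by the inductive hypothesis. For variables, $G$ preserves comprehension projections, so it agrees there as well.

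\textbf{Main obstacle.} I expect the hardest step to be well-definedness on quotients, and in particular coherence with substitution. Lemma~\ref{lem:interp-coherence-subst} is stated only up to $\cong$ for types, but initiality needs strict equality. I would therefore work with a split CwF presentation, or use a strictification step so that reindexing is strictly functorial. For the presheaf layer, I would check pointwise that pullback along $\llbracket\sigma\rrbracket$ commutes strictly with the action of $m_\epsilon$. A second delicate point is the stratification: the induction must never require interpreting a $\Uc$-type from $\Prop$ data, and that holds because no such rule exists. Finally, Prop is quotiented by proof-irrelevance, so I must check that any two proof terms of the same proposition are sent to the same (unique) map into the subobject. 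This holds because that subobject is monic.
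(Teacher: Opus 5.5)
Your proposal follows essentially the same route as the paper's appendix proof. $F_{\mathcal{M}}$ is defined componentwise as $\llbracket-\rrbracket_{\mathcal{M}}$, is well defined on $\equiv$-classes by soundness, preserves substitution and restriction by the two coherence lemmas, and is unique by induction over the generating rules. Your version is more careful than the paper's sketch, and the one component you leave implicit, the trajectory functor $F_t$, the paper sets to the identity when $\mathcal{T}_f$ is fixed across models and otherwise to the interpretation of trace objects and extension morphisms.
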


\noindent\textbf{Full proof}: Appendix~\ref{app:initiality-full} (existence (I.1)--(I.6) and uniqueness (II) by induction).

\subsection{Equational completeness}

This subsection gives NM-DEKL$^3_\infty$'s \emph{equational completeness} from initiality: it characterises the correspondence between syntactic derivation (including definitional equality) and model semantics.

\paragraph{Definitional equality and its model interpretation}
The \emph{definitional equality} $\equiv$ is interpreted in any model $\mathcal{M}$ as $t$ and $t'$ having the \emph{same} interpretation (same morphism or equivalence class), i.e.\ $\llbracket t\rrbracket_{\mathcal{M}}=\llbracket t'\rrbracket_{\mathcal{M}}$ in $\Tm(\llbracket\Gamma\rrbracket,\llbracket A\rrbracket)$. In the CwF this is ``same proof/construction''. \emph{Model coherence}: if $\Gamma\vdash t\equiv t':A$ then $\llbracket t\rrbracket_{\mathcal{M}}=\llbracket t'\rrbracket_{\mathcal{M}}$ for every $\mathcal{M}$ (by Soundness). \emph{Completeness} is the converse: if interpretations are equal in all models, then $\equiv$ is derivable; the proof uses initiality and the fact that $\mathcal{S}$ is built by quotienting by $\equiv$.

\paragraph{Goal}
We prove: if for \emph{all} models $\mathcal{M}$ the interpretations of $t$ and $t'$ are equal ($\llbracket t\rrbracket_\mathcal{M}=\llbracket t'\rrbracket_\mathcal{M}$), then $t\equiv t'$ is derivable in the syntax. The proof has three steps: (1) recall the definition of syntactic equality $\equiv$; (2) define semantic equality (equality of interpretation in all models); (3) deduce completeness from initiality.

\subsubsection{Syntactic equality (definitional equality)}

Definitional equality $\equiv$ was given in Section~2 (Definition~\ref{def:definitional-equality}): $\Gamma\vdash t\equiv t':A$ is generated by the closure of the computational rules ($\beta$/$\eta$, Identity, $\restrict$ functor laws, etc.) and is closed under context and substitution. So $\equiv$ is a \emph{computational} equality, not propositional provability.

\subsubsection{Semantic equality (interpretation in models)}

\begin{definition}[Semantic equality (validity in all models)]
\label{def:valid-eq}
Given derivations $\Gamma\vdash t:A$ and $\Gamma\vdash t':A$. We say $t$ and $t'$ are \emph{semantically equal in all models}, written $\models_{\mathrm{NMDEKL}}\; t = t' : A\;[\Gamma]$, if for every NM-DEKL$^3_\infty$ model $\mathcal{M}$ we have
\[
\llbracket t\rrbracket_{\mathcal{M}}
=
\llbracket t'\rrbracket_{\mathcal{M}}
\quad
\text{in}\quad
\Tm\!\big(\llbracket\Gamma\rrbracket_{\mathcal{M}},\llbracket A\rrbracket_{\mathcal{M}}\big).
\]
\end{definition}

\paragraph{Remark}
This definition captures equality ``in the sense of model morphisms'': if two terms have the same interpretation in every model, they are semantically indistinguishable.

\subsubsection{Equational completeness theorem and proof}
\label{sec:equational-completeness}

\begin{theorem}[Equational completeness]\label{thm:equational-completeness-main}
If
\[
\models_{\mathrm{NMDEKL}}\; t = t' : A\;[\Gamma],
\]
then $t\equiv t'$ is derivable in the syntax: $\Gamma \vdash t \equiv t' : A$.
\end{theorem}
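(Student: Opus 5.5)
The plan is the standard term-model argument. We specialise the hypothesis $\models_{\mathrm{NMDEKL}}\; t = t' : A\;[\Gamma]$ to the syntactic model $\mathcal{S}$ of Section~\ref{sec:model-category}, and then read off derivable definitional equality from the way $\mathcal{S}$ is built as a quotient by $\equiv$.

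First, we check that $\mathcal{S}$ is a legitimate object of $\mathbf{Mod}_{\mathrm{NMDEKL}}$, so that Definition~\ref{def:valid-eq} applies to it. This consists of the following verifications:
\begin{itemize}
  \item The CwF axioms follow from the Substitution Lemma (Lemma~\ref{lem:substitution}).
  \item The presheaf structure of $\Ty_\ell$ follows from Lemma~\ref{lem:syn-restrict-functorial}.
  \item $\Prop$ is interpreted by the syntactic propositions quotiented by proof-irrelevance.
  \item Identity and fixed-point structure are given directly by the corresponding rules.
\end{itemize}
Initiality (Theorem~\ref{thm:initiality}, with its full proof in the appendix) presupposes exactly this. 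We may therefore assume it, and note only that all quotients are by the same relation $\equiv$ used in the statement.

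Second, we identify the interpretation of syntax in $\mathcal{S}$ with the canonical one. The unique morphism $F_{\mathcal{S}}:\mathcal{S}\to\mathcal{S}$ is, by uniqueness in Theorem~\ref{thm:initiality}, the identity. By construction in step (1) of that proof, $F_{\mathcal{S}}([t]) = \llbracket t\rrbracket_{\mathcal{S}}$. Together these give $\llbracket t\rrbracket_{\mathcal{S}} = [t]$, the $\equiv$-class of $t$ in $\Tm_c([\Gamma],[A])$. The same holds in $\Tm_\ell$ when $A$ is a constructive type.

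A more careful route avoids appealing to uniqueness at all: one proves by induction on derivations that the recursively defined interpretation into $\mathcal{S}$ sends each rule to its own syntactic constructor. For example, $\restrict$ is sent to the syntactic $\restrict$, and $\lambda$ to the syntactic $\lambda$. I would present this induction, because it is the step where one must verify that each clause of $\llbracket-\rrbracket$ and each syntactic operation agree on the nose modulo $\equiv$. In particular, for substitution this uses Lemma~\ref{lem:interp-coherence-subst}, and for restriction it uses Lemma~\ref{lem:interp-coherence-restrict}.

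Third, we conclude. Instantiating the hypothesis at $\mathcal{M}=\mathcal{S}$ gives $[t]=[t']$ in $\Tm([\Gamma],[A])$. By the definition of the quotient, this means $\Gamma\vdash t\equiv t':A$. One subtlety is that $[\Gamma]$ and $[A]$ are classes. Since $\equiv$ is a congruence on types and contexts, and the conversion rule lets terms move along $A\equiv A'$, equality of classes over a representative still yields a derivable judgement at the original $\Gamma$ and $A$.

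The main obstacle is the second step. Showing $\llbracket t\rrbracket_{\mathcal{S}}=[t]$ requires the interpretation to be well defined on derivations rather than on raw terms; this is the usual coherence problem of initiality. It also requires the Lemma~\ref{lem:interp-coherence-subst} isomorphism $\cong$ to be strict in $\mathcal{S}$. I would handle this by working with a split (strict) presentation of $\mathcal{S}$, in which reindexing is literal syntactic substitution, so that the isomorphism is an identity. The same difficulty arises for proof terms in $\Prop$: there the terms collapse under proof-irrelevance, so completeness holds only modulo that identification. I would state it explicitly as part of what $\equiv$ means at $\Prop$ types.
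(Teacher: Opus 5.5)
Your proposal is correct and follows the paper's argument. Instantiate the hypothesis at the syntactic model $\mathcal{S}$, use that $\llbracket u\rrbracket_{\mathcal{S}}$ is the $\equiv$-class $[u]$, and read $[t]=[t']$ as $\Gamma\vdash t\equiv t':A$. The paper simply asserts $\llbracket u\rrbracket_{\mathcal{S}}=[u]_\equiv$ from the quotient construction, so your justification of it (via uniqueness of $F_{\mathcal{S}}$, or by induction) and your caveat that for $\Prop$-typed proof terms $\mathcal{S}$ also quotients by proof-irrelevance, so $\equiv$ must include that identification, make explicit points the paper leaves implicit.
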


\begin{proof}
Let $\mathcal{S}$ be the syntactic model of NM-DEKL$^3_\infty$, which is initial in $\mathbf{Mod}_{\mathrm{NMDEKL}}$ (Theorem~\ref{thm:initiality}). By assumption, semantic equality holds in all models, so in particular in $\mathcal{S}$ we have $\llbracket t\rrbracket_{\mathcal{S}}=\llbracket t'\rrbracket_{\mathcal{S}}$. The syntactic model $\mathcal{S}$ is built by quotienting by definitional equality: for any term $u$, $\llbracket u\rrbracket_{\mathcal{S}}$ is the equivalence class $[u]_\equiv$, and equality in $\mathcal{S}$ is equality of these classes. So $\llbracket t\rrbracket_{\mathcal{S}}=\llbracket t'\rrbracket_{\mathcal{S}}\;\Longleftrightarrow\;[t]_\equiv=[t']_\equiv\;\Longleftrightarrow\;\Gamma\vdash t\equiv t':A$. The result follows.
\end{proof}

\paragraph{Proof outline (role of the lemmas)}
The construction of the interpretation morphism $F_{\mathcal{M}}:\mathcal{S}\to\mathcal{M}$ in the initiality proof uses two coherence lemmas:

\begin{itemize}
  \item \textbf{Substitution Lemma:}
  $\llbracket u[\sigma]\rrbracket=\llbracket u\rrbracket[\llbracket\sigma\rrbracket]$;
  \item \textbf{Restriction Lemma:}
  $\llbracket \restrict(\epsilon,k)\rrbracket
  =
  \llbracket K_f\rrbracket(m_\epsilon)(\llbracket k\rrbracket)$.
\end{itemize}

They ensure the interpretation from the syntactic model to any model is structure-preserving and unique, so ``holds in all models'' reduces to ``holds in the syntactic model'' and hence to syntactic derivability.

\subsubsection{Model coherence from completeness (optional)}

\begin{corollary}[Model coherence (soundness of definitional equality)]
\label{cor:model-coherence}
If $\Gamma\vdash t\equiv t':A$ in the syntax, then $\llbracket t\rrbracket_{\mathcal{M}}=\llbracket t'\rrbracket_{\mathcal{M}}$ for every model $\mathcal{M}$.
\end{corollary}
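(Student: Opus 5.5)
The plan is to argue by induction on the derivation of $\Gamma\vdash t\equiv t':A$, and to show that each generating rule of definitional equality is interpreted as an equality of elements of $\Tm(\llbracket\Gamma\rrbracket_{\mathcal{M}},\llbracket A\rrbracket_{\mathcal{M}})$. The generators are those listed in Definition~\ref{def:definitional-equality} and collected in the ``Definitional equality rules'' subsection. Although the corollary is titled ``from completeness'', I will not invoke Theorem~\ref{thm:equational-completeness-main}, since that gives only the converse direction. The argument instead rests on Soundness (Theorem~\ref{thm:soundness}), which guarantees that both sides have well-defined interpretations in the same term set. It also uses the two coherence lemmas, Lemmas~\ref{lem:interp-coherence-subst} and~\ref{lem:interp-coherence-restrict}.

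First I handle the structural closure rules: reflexivity, symmetry and transitivity, together with congruence under term formers and closure under substitution. The first three are immediate, because equality in $\Tm$ is an equivalence relation. Congruence holds because the interpretation of each term former is a function of the interpretations of its immediate subterms. Closure under substitution follows from Lemma~\ref{lem:interp-coherence-subst}: $\llbracket t[\sigma]\rrbracket=\llbracket t\rrbracket[\llbracket\sigma\rrbracket]$, so equal interpretations stay equal after reindexing.

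Next come the computational generators. For $\beta$, I combine the CwF $\Pi$-structure equation $\mathrm{app}(\lambda(b),a)=b[\langle\mathrm{id},a\rangle]$ with Lemma~\ref{lem:interp-coherence-subst}, which identifies $\llbracket t[a/x]\rrbracket$ with $\llbracket t\rrbracket[\langle\mathrm{id},\llbracket a\rrbracket\rangle]$. For $\eta$, I use the corresponding CwF $\eta$-law, since we work in models that adopt $\eta$. For $J$-$\beta$, I use the Identity structure that is required of objects of $\mathbf{Mod}_{\mathrm{NMDEKL}}$. For $\mathsf{fold}/\mathsf{unfold}$, I use the fixed-point structure; in the Prop layer this is automatic anyway, because $\Prop(\Gamma)$ is a poset of subobjects and any two proofs are equal.

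Finally I treat the restriction functor laws. By Lemma~\ref{lem:interp-coherence-restrict}, $\llbracket\restrict(\mathrm{id},k)\rrbracket=\llbracket K_f\rrbracket(m_{\mathrm{id}})(\llbracket k\rrbracket)$. I need $m_{\mathrm{id}_\tau}=\mathrm{id}_\tau$ and $m_{\epsilon_2\circ\epsilon_1}=m_{\epsilon_2}\circ m_{\epsilon_1}$ in $\mathcal{T}_f$. Given these, the presheaf laws $K(\mathrm{id})=\mathrm{id}$ and $K(m\circ n)=K(n)\circ K(m)$ yield both laws.

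I expect the main obstacle to be this last point: showing that the assignment $\epsilon\mapsto m_\epsilon$ is itself functorial. I would build it into the interpretation of $\Extf$, sending $\mathrm{id}_\tau$ to the identity and composites to composites, as part of what ``restriction compatible with presheaf contravariance'' requires of a model. Once that is fixed, every case is a routine check.
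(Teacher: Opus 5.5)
Your proposal is correct and follows the same route as the paper, which derives the corollary directly from Soundness: each generating rule of $\equiv$ is interpreted as the same morphism in every model, and the closure rules preserve equality. Your case analysis spells out what the paper's one-line proof leaves implicit, including the requirement that $\epsilon\mapsto m_\epsilon$ be functorial, which the paper only covers under ``restriction compatible with presheaf contravariance''.
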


\begin{proof}
Direct consequence of Soundness: each generating rule for $\equiv$ is interpreted as the same morphism in every model, and equality is preserved by the derivation closure.
\end{proof}

\paragraph{Summary}
Equational completeness gives $\Gamma\vdash t\equiv t':A\;\Longleftrightarrow\;\models_{\mathrm{NMDEKL}}\, t=t':A\;[\Gamma]$; $\Leftarrow$ uses initiality and the quotient construction of $\mathcal{S}$, $\Rightarrow$ uses soundness.

\subsection{$\mu$/$\nu$ and non-monotonicity}

We have given syntactic and semantic characterisations of non-monotonicity. We now ask whether adding $\mu/\nu$ to the Prop layer breaks it, and show the layers stay independent.

\subsubsection{Main question}

Does adding $\mu X.\varphi(X)$, $\nu X.\varphi(X)$ to Prop break non-monotonicity? \textbf{No}; see Lemma~\ref{lem:mu-nu-layer} and Theorem~\ref{thm:mu-nu-nonmono}.

\subsubsection{Semantic layer separation}

The three layers are strictly separated: $\Uc$ in a category $C$; $\TypeL$ in the presheaf category $[\mathcal{T}_f^{op}, C]$; $\Prop$ in the subobject classifier $\Omega$. Non-monotonicity comes from the contravariant presheaf structure; $\mu/\nu$ act only on $\Omega$. So their semantic sources differ.


\begin{theorem}[Layer separation theorem]
\label{thm:layer-separation}
Suppose NM-DEKL$^3_\infty$ has: (1) $\Uc$ strongly normalising; (2) $\TypeL$ without $\mu/\nu$; (3) no elimination from $\Prop$ to $\Uc$. Then adding $\mu/\nu$ to Prop does not affect consistency or strong normalisation of $\Uc$.
\end{theorem}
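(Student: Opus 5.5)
The plan is to prove the two claims, strong normalisation of $\Uc$ and consistency, separately. Both rest on one observation: under hypotheses (2) and (3), no $\mu/\nu$-constructor, nor any term whose type mentions $\mu/\nu$, can occur inside a derivation whose conclusion is a $\Uc$-judgement. The first step is therefore a \emph{conservativity/separation lemma}. We would show by induction on derivations that if $\Gamma\vdash t:A$ with $\Gamma\vdash A:\Uc_i$ is derivable in the system extended with $\mu/\nu$, then the same judgement is already derivable in the system without $\mu/\nu$, and every subderivation lies in the $\Uc$-fragment.

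The induction runs rule by rule over the rules of the Computational Layer and Trace subsections. The premises of the $\Uc$-rules ($\Pi$, $\lambda$, application, $\Id$, $\mathsf{nil}$, $\mathsf{step}$, variables, context extension) are themselves $\Uc$-judgements. Contexts are also $\Uc$-only, since the extension rule requires $\Gamma\vdash A:\Uc_i$. By the stratification clauses of Section~\ref{sec:stratification} and hypothesis (3), no rule has a $\Prop$-premise and a $\Uc$-conclusion. The only remaining risk is a $\Uc$-term containing a $\Prop$- or $\TypeL$-subterm through application or $J$ with a motive landing in $\Ux_j$. We handle this by noting that the motive and all arguments are typed in $\Uc$, so by the induction hypothesis they are $\mu$-free. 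Hypothesis (2) is needed because a $\TypeL$ type could otherwise smuggle fixed points into $K_f$ and, through $\restrict$, into evidence. Since $\restrict$ is only a $\TypeL$-rule, though, it never produces $\Uc$-terms.

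With separation in hand, strong normalisation follows. The reduction relation is closed under contexts, so a reduct of a $\Uc$-term is again a $\Uc$-term, by Subject Reduction. The $\mu$-related definitional equalities ($\mathsf{fold}\circ\mathsf{unfold}$) only rewrite subterms of $\Prop$-type, and those do not occur in $\Uc$-terms. Hence every reduction sequence from a well-typed $\Uc$-term is a reduction sequence in the $\mu$-free system, and Theorem~\ref{thm:Uc-normalization} applies unchanged.

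For consistency we use Theorem~\ref{thm:consistency-from-model}. Take the $\mathbf{Set}$-based presheaf model from the consistency paragraph and extend it by interpreting $\mu X.\varphi$ and $\nu X.\varphi$ as the least and greatest fixed points of the monotone map $\llbracket\varphi\rrbracket$ on the complete lattice $\Sub(1_\Gamma)$. These exist by Knaster--Tarski, because the formation rule requires positivity, which gives monotonicity. Fold and unfold are interpreted as the identity on the fixed point, so soundness (Theorem~\ref{thm:soundness}) extends to the new rules. $\llbracket\bot\rrbracket$ is still the empty subobject, so no closed $\vdash p:\bot$ exists.

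The main obstacle is the separation lemma's handling of the definitional-equality and conversion rules. If a conversion rule lets a $\Uc$-typed term be retyped through an equality that passes through a $\Prop$-expression, $\mu$-terms could leak in. We would first try to show that $\equiv$ between $\Uc$-types is generated only by $\Uc$-level equations. In the monotonicity check, we would also verify that positivity of $X$ yields a semantically monotone map in the presence of $\to$ under $\Omega$.
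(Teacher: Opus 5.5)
Your strong-normalisation argument follows the paper's approach, with the details filled in. The paper's proof is a one-line observation: $\mu/\nu$ only build $\Prop$-terms, $\Uc$ formation and reduction never mention $\Prop$, and there is no $\Prop\to\Uc$ elimination. Hence the $\Uc$-fragment and its reduction closure are unchanged. Your rule-by-rule separation lemma is the induction that remark presupposes.

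One small correction there. The $\mu$-formation rule itself extends contexts by $X:\Prop$, so contexts are not $\Uc$-only in general. Your lemma therefore needs a strengthening clause allowing $\Prop$-variables to be dropped from the context of a $\Uc$-judgement.

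For consistency you take a genuinely different route. The paper reads the claim as consistency of the computational layer and gets it from the same conservativity argument; that is where hypotheses (1)--(3) do their work. You instead prove global consistency (no closed $\vdash p:\bot$ with $\bot:\Prop$). You do this by extending the $\mathbf{Set}$ model with Knaster--Tarski fixed points in $\Sub(1_{\llbracket\Gamma\rrbracket})\cong\mathcal{P}(\llbracket\Gamma\rrbracket)$ and invoking soundness.

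This buys something the syntactic route cannot give. $\bot$ lives in exactly the layer where $\mu/\nu$ are added, so conservativity over $\Uc$ says nothing about whether the fixed points make $\Prop$ itself inconsistent; your model rules this out. It formalises the paper's earlier remark that monotone fixed points preserve consistency via standard $\mu$-calculus semantics.

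The price is twofold. This half of your proof uses none of hypotheses (1)--(3). It also depends on checking that positivity yields semantic monotonicity (including under $\to$), and on the existence of the $\mathbf{Set}$ model, which the paper asserts rather than constructs.
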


\begin{proof}
$\mu/\nu$ only build $\Prop$ terms; there is no $\Prop\to\Uc$ elimination, so fixed points in Prop cannot affect the computational layer. The computational reduction and type-formation rules do not depend on Prop. So $\Uc$'s reduction closure, strong normalisation, and consistency are unchanged.
\end{proof}

\begin{lemma}[$\mu/\nu$ has no effect on computational or constructive layer]
\label{lem:mu-nu-layer}
In NM-DEKL$^3_\infty$: (1) The type-formation rules for $\Uc$ and $\TypeL$ do not have ``$\Gamma\vdash P:\Prop$'' or $\mu/\nu$ terms as premises; (2) $\mu/\nu$ intro/elim rules only produce $\Prop$ terms; (3) so the $\mu/\nu$ extension does not change derivability in the computational or constructive layer, nor $K_f$, $\restrict$, etc.
\end{lemma}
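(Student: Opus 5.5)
The plan is a syntactic inspection of the rule set in the ``Judgement Forms and Calculus Rules'' subsection, followed by an induction on derivations. The three parts are proved in order, with (3) following from (1) and (2).

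\textbf{Step 1 (part (1)).} Go through every formation rule whose conclusion has the form $\Gamma\vdash A:\Uc_i$ or $\Gamma\vdash B:\TypeLi{i}$. These are:
\begin{itemize}
\item universe formation;
\item $\Pi$-formation;
\item $\Id$-formation;
\item the trace types $\FinTrace$, $\InfTrace$, $\Extf$;
\item $\TypeL$-universe formation;
\item the trajectory-indexed family rule for $K_f(\tau)$.
\end{itemize}
For each rule, check that every premise is a judgement of the form $\Gamma\vdash A:\Uc_i$, $\Gamma\vdash a:A$ with $A:\Uc_i$, or $\Gamma\vdash \tau:\FinTrace$. None has a premise $\Gamma\vdash P:\Prop$ or a term built with $\mu$, $\nu$, $\mathsf{fold}$, $\mathsf{unfold}$, $\mathsf{in}$ or $\mathsf{out}$.

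The one apparent exception is $\Gamma\vdash\Prop:\TypeLi{0}$. It has no premises, so it is unaffected. The stratification clauses of Section~\ref{sec:stratification} forbid any rule from ``$\Gamma\vdash P:\Prop$'' to ``$\Gamma\vdash A:\Uc$'', which covers the $\Uc$ side directly.

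\textbf{Step 2 (part (2)).} Inspect the $\mu$-formation rule and the fold/unfold rules (and their $\nu$ analogues). Their conclusions are either $\Gamma\vdash\mu X.\varphi:\Prop$ or proof terms whose types are propositions in $\Prop$, namely $\varphi(\mu X.\varphi)\to\mu X.\varphi$ and its converse. So the new term formers only ever inhabit $\Prop$-level types.

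\textbf{Step 3 (part (3)).} This is a conservativity statement. Let $\mathcal{R}$ be the base rule set and $\mathcal{R}^{\mu}$ the extension by the $\mu/\nu$ rules. Take a derivation in $\mathcal{R}^{\mu}$ whose conclusion is a $\Uc$- or $\TypeL$-judgement (type formation, term, or definitional equality). I show by induction on the derivation that it can be transformed into an $\mathcal{R}$-derivation of the same judgement.

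The key invariant is that a $\mu/\nu$ rule can occur only in a subderivation whose conclusion lives at the $\Prop$ layer. Such a subderivation can feed into a $\Uc$/$\TypeL$ conclusion in only two ways: through a $\Prop$-premise (excluded by Step 1 and by the absence of $\Prop\to\Uc$ elimination), or through a term premise whose type is a $\Prop$.

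The second case is the main obstacle. Since $\Prop:\TypeLi{0}$, a proof term $w:\mu X.\varphi$ is formally a $\TypeL$-layer term. The claim must therefore be read as ``for judgements whose type is not (a reduct of) a $\Prop$''. I would handle this with a layer-assignment function on types, and prove two facts:
\begin{itemize}
\item every rule producing a term of a non-$\Prop$ type from term premises uses only premises of non-$\Prop$ type, or premises of type $\FinTrace$/$\Extf$/$K_f(\tau)$;
\item definitional equality never identifies a $\Prop$ type with a non-$\Prop$ type. This follows because the only $\equiv$-generators involving $\mu$ are the fold/unfold inverse laws, which relate $\Prop$-typed terms.
\end{itemize}

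For $K_f$ and $\restrict$ specifically, their rules take premises $\epsilon:\Extf(\tau,\tau')$ and $k:K_f(\tau')$, neither of which is $\Prop$-typed. Hence derivability of $\restrict$-terms and the functor laws (Lemma~\ref{lem:syn-restrict-functorial}) are untouched. Combining this with Theorem~\ref{thm:layer-separation} for the $\Uc$ side closes the argument.
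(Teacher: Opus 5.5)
Your decomposition is the paper's. Parts (1) and (2) come from inspecting the formation rules (stratification covers the $\Uc$ side) and the $\mu$/fold/unfold rules. Part (3) rests on the observation that $\mu/\nu$ rules cannot occur in $\Uc$/$\TypeL$ derivation trees, which your induction on derivations makes explicit. You also correctly notice something the paper's proof ignores: $\Prop:\TypeLi{0}$ blurs the boundary between the layers. But your repair stops one level too low, and taken literally it does not give a true statement. By the reasoning you apply to proof terms, the conclusion of the $\mu$-formation rule, $\Gamma\vdash\mu X.\varphi:\Prop$, is a term judgement at the $\TypeLi{0}$-type $\Prop$. Its type $\Prop$ is not a proposition, so your restricted reading (``judgements whose type is not a $\Prop$'') still covers it. Yet it is derivable only with the new rule; it is not even expressible without $\mu$. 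Moreover, your remark that $w:\mu X.\varphi$ is formally a $\TypeL$-layer term presupposes that $\mu X.\varphi$ is itself a $\TypeL$ type. That is, it reads the paper's $\Prop\subseteq\TypeLi{0}$ as a rule from $\Gamma\vdash P:\Prop$ to $\Gamma\vdash P:\TypeLi{0}$. Such a rule is a $\TypeL$ formation rule with a $\Prop$ premise, so your Step~1 check (and part (1) verbatim) fails on it, and $\Gamma\vdash\mu X.\varphi:\TypeLi{0}$ becomes a new $\TypeL$ type-formation judgement.

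What is missing is an explicit, disjoint layer assignment fixed before the induction. Put into the $\Prop$ layer every judgement whose type is $\Prop$ or a proposition, and, if the inclusion rule is kept, every $\Gamma\vdash P:\TypeLi{i}$ with $\Gamma\vdash P:\Prop$. State (1) and (3) for the complement. Then use as the inductive invariant that no rule with a non-$\Prop$-layer conclusion has a $\Prop$-layer premise. Your first bullet is too weak for this: ``premises of non-$\Prop$ type'' still admits a premise $\Gamma\vdash P:\Prop$, through which $\mu$-content could flow. With the corrected invariant, your induction goes through by inspecting the $\Pi$, $\lambda$, application, $\Id$/$J$, $K_f$ and $\restrict$ rules, as you intend.

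Two smaller points remain. First, your argument that $\equiv$ never identifies a proposition with a non-propositional type only looks at the new fold/unfold generators. Closure under congruence, $\beta$ and transitivity needs an invariant of the whole equational theory, typically uniqueness of types up to $\equiv$ together with non-convertibility of distinct sorts; neither you nor the paper establishes this. Second, if contexts may contain hypotheses of propositional type, then $\Gamma,y:\mu X.\varphi\vdash\Nat:\Uc_0$ is already a new $\Uc$-judgement. So conservativity must be stated for contexts that are well formed without $\mu/\nu$.
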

\begin{proof}
By stratification (\ref{sec:stratification}), $\Uc$ formation depends only on $\Uc$ and trace types; $\TypeL$ formation depends on $\Extf$, $K_f$, $\restrict$, substitution, not on $\Prop$ or $\mu/\nu$. By the syntax of $\mu/\nu$, $\mu X.\varphi(X)$ and $\nu X.\varphi(X)$ have type $\Prop$ and fold/unfold act only on Prop. So the new $\mu/\nu$ rules cannot appear in $\Uc$/$\TypeL$ derivation trees; non-monotonicity is unchanged.
\end{proof}

\begin{theorem}[Effect of $\mu/\nu$ extension on non-monotonicity]
\label{thm:mu-nu-nonmono}
In NM-DEKL$^3_\infty$, the $\mu/\nu$ extension affects only Prop-layer reasoning; it does not change non-monotonicity in $\Uc$ or $\TypeL$, and has no backward effect from Prop to those layers.
\end{theorem}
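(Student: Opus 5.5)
The plan is to derive Theorem~\ref{thm:mu-nu-nonmono} from Lemma~\ref{lem:mu-nu-layer} together with the layer separation theorem (Theorem~\ref{thm:layer-separation}), arguing separately on the syntactic side and the semantic side.

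\textbf{Syntactic side.} I would fix the base system $\mathcal{B}$ (all rules except $\mu/\nu$ formation and fold/unfold) and the extended system $\mathcal{B}_\mu$. The claim to establish is conservativity: for every judgement $\Gamma\vdash J$ whose subject lives in $\Uc$ or $\TypeL$, if $J$ is derivable in $\mathcal{B}_\mu$ then it is derivable in $\mathcal{B}$. This covers judgements of the form $\Gamma\vdash A:\Uc_i$, $\Gamma\vdash t:A$ with $A:\Uc_i$, $\Gamma\vdash B:\TypeLi{i}$, $\Gamma\vdash k:K_f(\tau)$, and the corresponding $\equiv$-judgements.

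The proof would go by induction on derivations in $\mathcal{B}_\mu$. At each rule whose conclusion lies in $\Uc$ or $\TypeL$, I invoke Lemma~\ref{lem:mu-nu-layer}(1): the premises are again in $\Uc$ or $\TypeL$, or concern trace and $\Extf$ data. Stratification (Section~\ref{sec:stratification}) ensures that no premise is a $\Prop$ judgement feeding a $\Uc$ conclusion. Lemma~\ref{lem:mu-nu-layer}(2) then shows that the new rules never occur along such a branch.

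Non-monotonicity is expressed syntactically as the presence of $\restrict:K_f(\tau')\to K_f(\tau)$ for $\epsilon:\Extf(\tau,\tau')$ together with the absence of a map $K_f(\tau)\to K_f(\tau')$. By conservativity, any term of type $K_f(\tau)\to K_f(\tau')$ in $\mathcal{B}_\mu$ would already exist in $\mathcal{B}$. Hence non-monotonicity, and likewise failure in the sense of Definition~\ref{def:failure}, is unchanged. The definitional equalities on $\restrict$ (the functor laws) are preserved for the same reason.

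\textbf{Semantic side.} Given a model $\mathcal{M}$ of $\mathcal{B}$ in which $\Omega$ carries fixed points for monotone operators, I would extend $\mathcal{M}$ to a model of $\mathcal{B}_\mu$. This extension changes only the interpretation of $\Prop$-terms and leaves $\llbracket K_f\rrbracket$ and its restriction maps $\llbracket K_f\rrbracket(m_\epsilon)$ untouched. Theorem~\ref{thm:restriction-iff-presheaf} then shows that the presheaf, and therefore the contravariant non-monotonic structure, is identical in both models.

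\textbf{Backward effect.} The absence of any backward effect from $\Prop$ is exactly clause (3) of Theorem~\ref{thm:layer-separation}, the absence of $\Prop\to\Uc$ elimination. I would combine it with the observation that $\Prop:\TypeLi{0}$ enters $\TypeL$ only as a \emph{type}, not as an eliminator producing $K_f$-evidence.

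\textbf{Main obstacle.} The hardest step is the induction for conservativity. Because $\Prop:\TypeLi{0}$, a $\TypeL$ derivation could mention a $\Prop$-typed subterm, for example a variable of type $\mu X.\varphi$ in the context, or $\Pi$-types over propositions. Such a term might then be eliminated into $K_f(\tau)$.

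To handle this, I would strengthen the induction hypothesis. The argument would first erase $\mu/\nu$ propositions occurring in $\TypeL$ types, replacing them by fresh opaque $\Prop$-constants. It would then show that every $\mathcal{B}_\mu$ derivation maps to a $\mathcal{B}$ derivation over such a constant-extended signature, using that fold/unfold act only inside $\Prop$. If no $\TypeL$ rule eliminates out of $\Prop$ into $K_f$, the opaque constants cannot create new maps $K_f(\tau)\to K_f(\tau')$, and conservativity follows.
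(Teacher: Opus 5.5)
Your proposal is correct and follows the same route as the paper's proof. The syntactic side uses Lemma~\ref{lem:mu-nu-layer} with stratification, the semantic side uses the fact that $\mu/\nu$ are interpreted in $\Omega$ while $\llbracket K_f\rrbracket$ and $\llbracket\restrict\rrbracket$ come from the presheaf, and the absence of backward effect comes from Theorem~\ref{thm:layer-separation}.

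Your conservativity induction is a more careful version of the paper's Step~1 and Step~3. This includes erasing $\mu/\nu$ to opaque $\Prop$-constants to handle $\Prop:\TypeLi{0}$, and your extra observation covering $\TypeL$, which Theorem~\ref{thm:layer-separation} does not address since it speaks only of $\Uc$. The argument goes through for the rules actually listed, because none of them turns a $\Prop$-proof into a $\Uc$ or $K_f$ conclusion.
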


\begin{proof}
Step 1 (syntax): By Lemma~\ref{lem:mu-nu-layer}, $\mu/\nu$ only produce $\Prop$ terms and $\Uc$/$\TypeL$ formation does not depend on $\Prop$. Step 2 (models): $\mu/\nu$ are interpreted in $\Omega$; $\llbracket K_f\rrbracket$ and $\llbracket\restrict\rrbracket$ come from the presheaf, independent of $\Omega$. Step 3: Theorem~\ref{thm:layer-separation} gives no $\Prop\to\Uc$ reflection; with steps 1--2, the $\mu/\nu$ extension does not conflict with non-monotonicity.
\end{proof}

\paragraph{Layer independence and reasoning consistency (summary)}
In the type theory, $\mu/\nu$ only produce $\Prop$ terms and $\Uc$/$\TypeL$ formation does not depend on $\Prop$ (Lemma~\ref{lem:mu-nu-layer}). In models, $\mu/\nu$ are interpreted in $\Omega$ and $K_f$, $\restrict$ in the presheaf, so the semantics are separate. For full isolation we require: $\mu/\nu$ act only on $\Omega$; no $\Prop\to\Uc$ elimination; $\TypeL$ has no $\mu/\nu$. So the $\mu/\nu$ extension only enriches the propositional layer; Theorem~\ref{thm:layer-separation} is the syntactic formulation.

\paragraph{Derivation rules and restriction on constructed objects}
Categorically: $\mu/\nu$ intro/elim rules only produce \emph{propositional-layer} objects (elements of $\Omega$); $K_f$, $\restrict$ produce \emph{presheaf-layer} objects. Type formation for $\Uc$ and $\TypeL$ does not depend on ``$\Gamma\vdash P:\Prop$'' or $\mu/\nu$, so the two kinds of rules act on different layers in the semantics with no cross-layer dependency. Thus $\mu/\nu$ and the non-monotonic structure stay independent and consistent.


\subsection{Morphisms of NM-DEKL$^3_\infty$ models}

\begin{definition}[Morphism of NM-DEKL$^3_\infty$ Models]
Let
\[
\mathcal{M}=(\mathcal{C},\mathcal{T}_f,\Ty_\ell,\Prop,\dots)
\quad\text{and}\quad
\mathcal{N}=(\mathcal{C}',\mathcal{T}'_f,\Ty'_\ell,\Prop',\dots)
\]
be two NM-DEKL$^3_\infty$ models.

A model morphism $F:\mathcal{M}\to\mathcal{N}$ consists of:

\medskip

\noindent
\textbf{(1) Computational layer functor}

A functor $F_c:\mathcal{C}\to\mathcal{C}'$ satisfying

\begin{itemize}
  \item Preservation of terminal object and finite limits;
  \item For each $\Gamma$, a natural transformation
        \[
        F_{\Ty_c}:
        \Ty_c(\Gamma)
        \to
        \Ty'_c(F_c(\Gamma)),
        \]
        and
        \[
        F_c(A[\sigma])
        =
        F_c(A)[F_c(\sigma)].
        \]
  \item Preservation on terms:
        \[
        F_c(t[\sigma])
        =
        F_c(t)[F_c(\sigma)].
        \]
  \item Preservation of comprehension:
        \[
        F_c(\Gamma.A)
        \cong
        F_c(\Gamma).F_c(A).
        \]
\end{itemize}

\medskip

\noindent
\textbf{(2) Trajectory functor}

A functor
\[
F_t:\mathcal{T}_f \to \mathcal{T}'_f.
\]

\medskip

\noindent
\textbf{(3) Constructive layer preservation}

For each $\Gamma$, a natural transformation

\[
F_\ell:
\Ty_\ell(\Gamma)
\to
\Ty'_\ell(F_c(\Gamma))
\]

satisfying

\begin{itemize}
  \item For each $B:\mathcal{T}_f^{op}\to\mathcal{C}/\Gamma$,
        \[
        F_\ell(B)
        =
        F_c \circ B \circ F_t^{op}.
        \]
  \item Preservation of restriction: for each $m:\tau\to\tau'$,
        \[
        F_\ell(B)(F_t(m))
        =
        F_c(B(m)).
        \]
  \item Compatibility with substitution:
        \[
        F_\ell(B[\sigma])
        =
        F_\ell(B)[F_c(\sigma)].
        \]
\end{itemize}

\medskip

\noindent
\textbf{(4) Propositional layer preservation}

For each $\Gamma$,
\[
F_\Prop:
\Prop(\Gamma)
\to
\Prop'(F_c(\Gamma))
\]
satisfying

\begin{itemize}
  \item Preservation of $\top,\bot,\wedge,\vee,\to$;
  \item Preservation of $\forall,\exists$;
  \item Preservation of proof-irrelevant structure.
\end{itemize}

\medskip

\noindent
\textbf{(5) Identity structure preservation}

\[
F_c(\refl)
=
\refl,
\qquad
F_c(J)
=
J.
\]

\medskip

If the above conditions hold, $F$ is called a morphism of NM-DEKL$^3_\infty$ models.
\end{definition}

\section{Embedding theorems: CTL/LTL/$\mu$-calculus}\label{sec:mu-embedding}

\subsection{NM-DEKL$^3_\infty$ and LTL/CTL}

We give a structured comparison: first a refined embedding and preservation theorem, then a point-by-point comparison of expressiveness and reasoning power between NM-DEKL$^3_\infty$ and LTL/CTL.

\subsubsection{Semantic base alignment}

Assume: $\mathcal{T}_f$ is the finite-trace category; $\InfTrace=\nu F$ is the infinite trace (final coalgebra); $\Paths_\infty(\sigma)$ is the set of infinite paths from state $\sigma$; the Prop layer of NM-DEKL$^3_\infty$ is interpreted in a fibred LCCC. We interpret classical temporal formulas as predicates on traces/paths.

\subsubsection{Semantics of LTL and CTL}

\paragraph{LTL (path semantics)}

The semantics of an LTL formula $\phi$ is an assertion on a single path $\pi$:

\[
\mathbf{G}\phi(\pi)
\;\equiv\;
\forall n:\Nat.\ \phi(\prefix(n,\pi)).
\]

Here $\prefix(n,\pi)$ is the length-$n$ prefix of the trace $\pi$.

\medskip

\paragraph{CTL (branching semantics)}

CTL expresses branching properties via path quantification:

\[
\mathbf{A}\psi(\sigma)
\;\equiv\;
\forall \pi:\Paths_\infty(\sigma).\ \psi(\pi).
\]

\subsubsection{Embedding preservation (refined)}

\begin{theorem}[Embedding preservation: LTL/CTL into NM-DEKL$^3_\infty$]
\label{thm:ltl-ctl-embedding}
There are translation maps
\[
\llbracket - \rrbracket_{\mathrm{LTL}},
\quad
\llbracket - \rrbracket_{\mathrm{CTL}}
\]
maps each LTL/CTL formula to a Prop-layer formula of NM-DEKL$^3_\infty$, such that: (1) \textbf{Semantic preservation}: for any Kripke structure $\mathcal{K}$ and corresponding model $\mathcal{M}$, for any state $\sigma$ or path $\pi$, $\mathcal{K},\pi\models\phi\iff\mathcal{M}\models\llbracket\phi\rrbracket(\pi)$. (2) \textbf{Path-quantification preservation}: CTL path quantification corresponds to dependent $\Pi$/$\Sigma$ over $\Paths_\infty(\sigma)$ in NM-DEKL$^3_\infty$, so branching semantics is preserved.
\end{theorem}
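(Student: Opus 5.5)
The plan is to define both translations by structural induction on formulas and then prove semantic preservation by induction on the same syntax. First I fix the correspondence between a Kripke structure $\mathcal{K}=(S,R,L)$ and a model $\mathcal{M}$. The computational layer is taken to be $\mathbf{Set}$, with $\State:=S$, $\Step(s,e,s')$ inhabited iff $(s,s')\in R$, and $\InfTrace=\nu F$ as in the coalgebra definition. $\Paths_\infty(\sigma)$ is the $\Uc$-subtype of $\InfTrace$ whose head is $\sigma$ and whose consecutive states are $\Step$-related. Each atom $p$ is interpreted as a predicate $P_p:\State\to\Prop$, whose characteristic map into $\Omega$ is the indicator of $\{s\mid p\in L(s)\}$.

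The translations are defined clause by clause. For LTL (path formulas, with a free $\pi:\InfTrace$):
\[
\llbracket p\rrbracket_{\mathrm{LTL}}(\pi)=P_p(\head\,\pi),
\]
the Boolean connectives go componentwise to $\top,\bot,\wedge,\vee,\to$ of the Prop layer, and
\[
\llbracket \mathbf{X}\phi\rrbracket(\pi)=\llbracket\phi\rrbracket(\mathrm{snd}(\tail\,\pi)),\qquad
\llbracket \mathbf{G}\phi\rrbracket(\pi)=\forall n:\Nat.\,\llbracket\phi\rrbracket(\mathrm{drop}(n,\pi)),
\]
\[
\llbracket \phi\,\mathbf{U}\,\psi\rrbracket(\pi)=\exists n:\Nat.\,\llbracket\psi\rrbracket(\mathrm{drop}(n,\pi))\wedge\forall m:\Nat.\,(m<n\to\llbracket\phi\rrbracket(\mathrm{drop}(m,\pi))).
\]
Here $\mathrm{drop}$ is defined in $\Uc$ by recursion on $\Nat$ using $\tail$. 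This matches the paper's $\mathbf{G}$ clause, read as quantification over suffixes rather than $\prefix$. For CTL, state formulas get a free $\sigma:\State$, and the path quantifiers become $\mathbf{A}\psi\mapsto\forall_{\pi:\Paths_\infty(\sigma)}\llbracket\psi\rrbracket(\pi)$ and $\mathbf{E}\psi\mapsto\exists_{\pi:\Paths_\infty(\sigma)}\llbracket\psi\rrbracket(\pi)$. Stratification permits these, because $\Prop$ may quantify over $\Uc$-types, and $\State$, $\Nat$ and $\Paths_\infty(\sigma)$ are all $\Uc$-types.

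Part (1) is then an induction on $\phi$. The atomic and Boolean cases hold because $\Omega$ in $\mathbf{Set}$ is $\{\mathsf{true},\mathsf{false}\}$ with the classical operations. The case $\mathbf{X}$ uses the concrete description of $\mathrm{out}$ as head/tail on infinite sequences. The cases $\mathbf{G}$ and $\mathbf{U}$ use that $\forall,\exists$ over $\Nat$ are interpreted as $\Pi$/$\Sigma$ in the LCCC, which in $\mathbf{Set}$ are intersection and union of subobjects. Part (2) is the CTL case: $\forall$ and $\exists$ over $\Paths_\infty(\sigma)$ are interpreted as right and left adjoints to reindexing along the projection $\Gamma.\Paths_\infty\to\Gamma$, i.e. the dependent $\Pi$/$\Sigma$ of the LCCC restricted to subobjects. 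The inductive hypothesis applied pathwise then gives $\mathcal{K},\sigma\models\mathbf{A}\psi$ iff $\mathcal{M}\models\llbracket\mathbf{A}\psi\rrbracket(\sigma)$. Soundness (Theorem~\ref{thm:soundness}) guarantees the translated formulas are well-formed Prop terms with these interpretations.

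The main obstacle is that $\mathcal{K}$ and $\mathcal{M}$ are not yet related precisely enough. The model must make $\Paths_\infty(\sigma)$ denote exactly the $R$-paths from $\sigma$, including totality of $R$ if the semantics requires it. If $\mathcal{C}$ is a general LCCC instead of $\mathbf{Set}$, "$\mathcal{M}\models$" must mean the global section $1\to\Omega$ is $\mathsf{true}$, and the classical clauses for $\to$ and negation then need a Boolean $\Omega$. I would therefore first prove the theorem for the $\mathbf{Set}$-based model induced by $\mathcal{K}$, making "corresponding model" precise as that model. I would then remark that it extends to any model whose computational layer is Boolean and whose trace objects are interpreted standardly.
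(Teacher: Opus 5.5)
Your proposal is correct and shares the paper's skeleton: translate by structural induction, send $\mathbf{A}$/$\mathbf{E}$ to $\forall$/$\exists$ (dependent $\Pi$/$\Sigma$) over $\Paths_\infty(\sigma)$, and get preservation by induction on the formula.

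The genuine difference is in the LTL clauses and in how precisely the setting is fixed. The paper sends $\mathbf{G}\varphi$ to $\forall\sigma.\varphi(\sigma)$, $\mathbf{F}\varphi$ to $\exists\sigma.\varphi(\sigma)$ and $\mathbf{X}\varphi$ to $\exists e.(\sigma\xrightarrow{e}\sigma'\wedge\varphi(\sigma'))$. It describes $\mathbf{U}$ only as ``a dependent product/sum'', and then simply asserts preservation. Read literally, these clauses quantify over states rather than positions of the given path $\pi$, so $\mathbf{X}$ becomes the branching $\mathbf{EX}$. They also evaluate subformulas at states rather than at suffixes, so nested path formulas such as $\mathbf{G}\mathbf{F}p$ are not handled correctly.

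Your $\head$/$\tail$/$\mathrm{drop}$ clauses evaluate subformulas on suffixes of $\pi$, which is what makes the inductive cases for $\mathbf{X},\mathbf{G},\mathbf{U}$ checkable. Your explicit $\mathbf{Set}$ model induced by $\mathcal{K}$ pins down the ``corresponding model'' that the paper never defines. So do the hypotheses you isolate: a Boolean $\Omega$ and totality of $R$. The paper's version buys brevity and alignment with its state-based $\mu$-calculus encoding. Yours buys an argument that actually goes through, at the cost of first restricting to $\mathbf{Set}$ (or Boolean) models.

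A few points to tidy when you write it out:
\begin{itemize}
\item Take $\Event$ nonempty (e.g.\ a singleton). Otherwise $\nu F=\varnothing$ in $\mathbf{Set}$ and $\mathbf{A}$-formulas hold vacuously.
\item Make $m<n$ a proposition. You can quantify over $\mathrm{Fin}(n)$, or use the same implicit $\Uc$-to-$\Prop$ coercion the paper itself uses for $R(s,s')$.
\item Interpret $\exists$ as the image of $\Sigma$, not $\Sigma$ itself.
\item Coerce CTL state subformulas to path formulas via $\head$.
\end{itemize}
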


\paragraph{Proof idea}
LTL temporal operators $\mathbf{X},\mathbf{G},\mathbf{F},\mathbf{U}$ are expressed by recursion on prefixes or $\InfTrace$; CTL $\mathbf{A},\mathbf{E}$ by $\Pi$/$\Sigma$ over $\Paths_\infty(\sigma)$; the Prop layer lives in an LCCC with dependent quantification, so the embedding preserves semantics.

\begin{proof}
\textbf{1. LTL into Prop.} LTL temporal operators $\mathbf{X}$ (next), $\mathbf{G}$ (globally), $\mathbf{F}$ (eventually), $\mathbf{U}$ (until) are represented in the Prop layer: $\mathbf{X}\varphi$ by recursion on $\InfTrace$ (e.g.\ $\exists e.(\sigma\xrightarrow{e}\sigma'\land\varphi(\sigma'))$); $\mathbf{G}\varphi$ by $\forall\sigma.\varphi(\sigma)$; $\mathbf{F}\varphi$ by $\exists\sigma.\varphi(\sigma)$; $\mathbf{U}(\varphi,\psi)$ by a dependent product/sum. \textbf{2. CTL into Prop.} Path quantifiers $\mathbf{A}$ (all paths) and $\mathbf{E}$ (some path) are represented by dependent $\Pi$ and $\Sigma$ over $\Paths_\infty(\sigma)$: $\mathbf{A}\varphi=\Pi_{p\in\Paths_\infty(\sigma)}\varphi(p)$, $\mathbf{E}\varphi=\Sigma_{p\in\Paths_\infty(\sigma)}\varphi(p)$. \textbf{3. Semantic preservation.} The translation satisfies $\mathcal{K},\pi\models\phi\iff\mathcal{M}\models\llbracket\phi\rrbracket(\pi)$ for corresponding $\mathcal{K}$ and $\mathcal{M}$; path transitions in Kripke structures correspond to trace indexing and $\Pi$/$\Sigma$ in NM-DEKL$^3_\infty$. \textbf{4. Path quantification.} $\mathbf{A}$ and $\mathbf{E}$ correspond to dependent product and sum, so branching semantics is preserved. So LTL/CTL embed into NM-DEKL$^3_\infty$ with semantics and path quantification preserved.
\end{proof}
\subsubsection{Expressiveness comparison}

\subsubsubsection{Objects of expression}

\begin{center}
\begin{tabular}{|c|c|}
\hline
System & Objects \\
\hline
LTL & Single path \\
CTL & Branching structure \\
NM-DEKL$^3_\infty$ & Traces + inter-trace dependency + evidence objects \\
\hline
\end{tabular}
\end{center}

NM-DEKL$^3_\infty$ can express: path properties ($\supseteq$ LTL), branching quantification ($\supseteq$ CTL), types depending on trace prefixes, evidence existence, constructibility and failure, non-monotonic update. So we have strict inclusion:

\[
\text{LTL} \subsetneq \text{NM-DEKL}^3_\infty,
\qquad
\text{CTL} \subsetneq \text{NM-DEKL}^3_\infty.
\]

\subsubsubsection{Quantification structure}

LTL allows:

\[
\forall n:\Nat.
\]

CTL allows:

\[
\forall \pi \in \Paths_\infty(\sigma).
\]

NM-DEKL$^3_\infty$ allows:

\[
\Pi_{\pi:\Paths_\infty(\sigma)} A(\pi),
\]

and $A(\pi)$ can depend on the prefix structure of $\pi$ and carry evidence; so NM-DEKL$^3_\infty$ supports higher-order dependent quantification, while LTL/CTL only support first-order path quantification.

\subsubsubsection{Non-monotonic expressiveness}

LTL/CTL are monotonic: once a formula holds, it holds under model extension. NM-DEKL$^3_\infty$ can express via presheaf contravariance: evidence failure; constructibility at one trace and failure after extension; non-monotonic knowledge update. For example: ``there is evidence at trace $\tau$, but evidence fails at $\tau\cdot e$.'' This cannot be expressed in LTL/CTL.

\subsubsection{Reasoning power comparison}

\subsubsubsection{Decidability}

\begin{center}
\begin{tabular}{|c|c|}
\hline
System & Decidability \\
\hline
LTL & PSPACE-complete \\
CTL & EXPTIME-complete \\
NM-DEKL$^3_\infty$ & Depends on base type theory (usually undecidable) \\
\hline
\end{tabular}
\end{center}

NM-DEKL$^3_\infty$ is typically undecidable due to dependent types and higher-order structure, but has greater expressiveness.

\subsubsubsection{Proof objects}

LTL/CTL: only truth values; no proof objects. NM-DEKL$^3_\infty$: propositions carry proof objects; proofs are first-class. So Trace = Proof.

\subsubsubsection{Constructivity}

LTL/CTL: semantics is model satisfaction; no construction of evidence. NM-DEKL$^3_\infty$: propositions are types; constructibility is central; failure is precisely ``no global elements''.

\subsubsection{Strict expressiveness hierarchy}

The hierarchy is: Propositional Temporal Logic $\subset$ Branching Temporal Logic $\subset$ Dependent Temporal Type Theory, i.e.\ LTL $\subsetneq$ CTL $\subsetneq$ NM-DEKL$^3_\infty$. Strict inclusion holds because CTL cannot express dependent propositions, while NM-DEKL$^3_\infty$ can express path-wise dependency and evidence objects.

\subsubsection{Key theorem on expressiveness gap}

\begin{theorem}[Strict expressiveness gap]
There exists an NM-DEKL$^3_\infty$ formula $\Phi$ whose semantics depends on evidence or failure on traces, and no equivalent LTL or CTL formula exists.
\end{theorem}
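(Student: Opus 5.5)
I would prove the statement by exhibiting a concrete $\Phi$ and showing that two structures agree on every LTL/CTL formula yet differ on $\Phi$. The natural choice is the ``evidence failure'' example from the discussion of non-monotonic expressiveness. Fix a finite trace $\tau$ and an event $e$, and let $\tau\cdot e$ denote $\step(\tau,e,s)$ for some state $s$. Take
\[
\Phi(\tau,e)\;:=\;\text{``}K_f(\tau)\text{ is inhabited and }K_f(\tau\cdot e)\text{ fails''},
\]
with failure understood as in Definition~\ref{def:failure}: there are no global elements of $\llbracket K_f(\tau\cdot e)\rrbracket$. Its semantics depends on the presheaf $\llbracket K_f\rrbracket$, not only on the underlying Kripke structure.

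\textbf{Step 1 (two models over one Kripke structure).} Fix one Kripke structure $\mathcal{K}$ and build two models $\mathcal{M}_1,\mathcal{M}_2$ with $C=\mathbf{Set}$, the same $\mathcal{T}_f$, and the same labelling of states. They differ only in $K_f$. In $\mathcal{M}_1$, set $K_f$ to the constant presheaf $\{0\}$. In $\mathcal{M}_2$, set $K_f(\tau)=\{0\}$ and $K_f(\tau')=\varnothing$ for every $\tau'$ properly extending $\tau$ along $e$, with restriction maps given by the empty functions out of $\varnothing$. Checking the functor laws is a routine verification, in the style of the explicit construction in Theorem~\ref{thm:presheaf-no-monotone-but-enrich-detailed}. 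Contravariance is what makes this legal: maps \emph{from} the empty set always exist, so an empty fibre on a long trace is compatible with a nonempty fibre on a short one. This also fits ``failure propagates along restriction''. Then $\mathcal{M}_1\not\models\Phi$ and $\mathcal{M}_2\models\Phi$.

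\textbf{Step 2 (LTL/CTL cannot separate the models).} The semantics of an LTL or CTL formula $\phi$, given in the preceding subsection and transported through Theorem~\ref{thm:ltl-ctl-embedding}, refers only to states, labels, paths and $\Paths_\infty(\sigma)$. None of this data mentions $K_f$. An induction on $\phi$ therefore shows $\mathcal{M}_1,\pi\models\llbracket\phi\rrbracket\iff\mathcal{K},\pi\models\phi\iff\mathcal{M}_2,\pi\models\llbracket\phi\rrbracket$. Suppose some LTL or CTL formula $\phi$ were equivalent to $\Phi$. Then it would take the same value in $\mathcal{M}_1$ and $\mathcal{M}_2$, while $\Phi$ does not. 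This is a contradiction.

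\textbf{Main obstacle.} The delicate point is making precise what ``equivalent formula'' means. LTL/CTL formulas are evaluated on Kripke structures, while $\Phi$ is evaluated on NM-DEKL$^3_\infty$ models. I would fix the comparison as follows: $\phi$ is equivalent to $\Phi$ iff, for every model $\mathcal{M}$ with underlying Kripke structure $\mathcal{K}$, we have $\mathcal{K}\models\phi\iff\mathcal{M}\models\Phi$. Once this is fixed, Step 2 is exactly the observation that the satisfaction relation factors through the forgetful map $\mathcal{M}\mapsto\mathcal{K}$. I also need to make sure the translation in Theorem~\ref{thm:ltl-ctl-embedding} introduces no dependence on $K_f$, which holds because it uses only Prop-layer quantifiers over states and paths. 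If one prefers a stronger, bisimulation-style argument, the same pair of models can be used, since they are bisimilar, indeed identical, as Kripke structures.
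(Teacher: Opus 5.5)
Your proposal is correct and follows essentially the paper's route. The paper's proof idea uses the same property (``$A$ constructible at a prefix but not at an extension''), and its strict-inclusion sketch (Theorem~\ref{thm:strict-inclusion-sketch}) gives your argument: two $\mathbf{Set}$-models with the same underlying Kripke structure, differing only in $K_f$, cannot be separated by LTL/CTL satisfaction because it factors through the forgetful map. Your version is slightly more careful than that sketch, which empties the single fibre in $\Phi(\tau_0)=\exists k:K_f(\tau_0).\top$, since making the empty fibres an extension-closed set of traces is exactly what guarantees the restriction maps exist.
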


\paragraph{Proof idea}

Consider the property: ``there is a trajectory such that type $A$ is constructible at some prefix but not constructible at an extension.'' This involves fibre inhabitability, presheaf contravariant restriction, and type-level construction.

LTL/CTL cannot express the ``constructibility'' level.

\begin{center}
\begin{tabular}{|c|c|c|c|}
\hline
Dimension & LTL & CTL & NM-DEKL$^3_\infty$ \\
\hline
Path semantics & \checkmark & \checkmark & \checkmark \\
Branching quantification & $\times$ & \checkmark & \checkmark \\
Dependent types & $\times$ & $\times$ & \checkmark \\
Evidence objects & $\times$ & $\times$ & \checkmark \\
Non-monotonic failure & $\times$ & $\times$ & \checkmark \\
Constructivity & $\times$ & $\times$ & \checkmark \\
Decidability & decidable & decidable & usually undecidable \\
\hline
\end{tabular}
\end{center}

\begin{quote}
NM-DEKL$^3_\infty$ strictly contains the expressiveness of LTL and CTL and adds dependent types and constructive semantics; the cost is loss of general decidability.
\end{quote}

\begin{theorem}[Strict inclusion: NM-DEKL$^3_\infty$ strictly stronger than LTL/CTL (proof sketch)]
\label{thm:strict-inclusion-sketch}
Suppose the semantic model of NM-DEKL$^3_\infty$ is $\mathcal{M}=(\mathsf{States},\mathsf{Step},\mathsf{Lab},K_f,\dots)$,
where $(\mathsf{States},\mathsf{Step},\mathsf{Lab})$ gives the underlying reachability and atomic labelling, and $K_f:\mathcal{T}_f^{op}\to C$ is the constructive-layer presheaf (evidence/knowledge fibre). Let $\mathsf{Kr}$ be the category of Kripke structures. Then we have the following strict inclusions:
\[
\mathrm{LTL} \;\subsetneq\; \mathrm{NM\mbox{-}DEKL}^3_\infty,
\qquad
\mathrm{CTL} \;\subsetneq\; \mathrm{NM\mbox{-}DEKL}^3_\infty.
\]
\end{theorem}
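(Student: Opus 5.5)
The proof splits into two parts: inclusion and strictness. For the inclusion $\mathrm{LTL},\mathrm{CTL}\subseteq\mathrm{NM\mbox{-}DEKL}^3_\infty$ we invoke Theorem~\ref{thm:ltl-ctl-embedding}. Given a Kripke structure $\mathcal{K}=(\mathsf{States},\mathsf{Step},\mathsf{Lab})$, we form the model $\mathcal{M}$ with that same underlying reachability and labelling. We take $K_f$ to be any presheaf, for instance the terminal one, since the translated formulas never mention the constructive layer. The translations $\llbracket-\rrbracket_{\mathrm{LTL}}$ and $\llbracket-\rrbracket_{\mathrm{CTL}}$ then land in the Prop layer and satisfy $\mathcal{K},\pi\models\phi\iff\mathcal{M}\models\llbracket\phi\rrbracket(\pi)$.

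To make this a genuine inclusion of expressive power, we will check that the truth of $\llbracket\phi\rrbracket$ in $\mathcal{M}$ depends only on the Kripke reduct $(\mathsf{States},\mathsf{Step},\mathsf{Lab})$. This is an induction on $\phi$. Atoms read $\mathsf{Lab}$. $\mathbf{X},\mathbf{G},\mathbf{F},\mathbf{U}$ quantify over prefixes and $\mathsf{Step}$. $\mathbf{A},\mathbf{E}$ quantify over $\Paths_\infty(\sigma)$. Lemma~\ref{lem:mu-nu-layer} and layer separation (Theorem~\ref{thm:layer-separation}) guarantee that no $K_f$-data enters Prop-layer formulas built this way.

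For strictness we exhibit a formula $\Phi$ together with two models $\mathcal{M}_1,\mathcal{M}_2$ that share the same Kripke reduct but disagree on $\Phi$. Take a two-state structure with a single transition $s\xrightarrow{e}s'$, so that the finite traces are $\tau=\nil(s)$ and $\tau'=\step(\tau,e,s')$. We define $\Phi$ as
\[
\Phi\;:=\;\text{``}K_f(\tau)\text{ is inhabited and }K_f(\tau')\text{ is not''},
\]
which is expressible via evidence quantification over $\TypeL$, as in the strict-extension remark after Theorem~\ref{thm:prop-mu-characterization}. In $\mathcal{M}_1$ we set $K_f(\tau)=\{0\}$ and $K_f(\tau')=\varnothing$; the restriction $K_f(\tau')\to K_f(\tau)$ is then the empty map, and the functor laws hold trivially, as in the proof of Theorem~\ref{thm:presheaf-no-monotone-but-enrich-detailed}. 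In $\mathcal{M}_2$ we take $K_f$ terminal. Then $\Phi$ holds in $\mathcal{M}_1$ and fails in $\mathcal{M}_2$. Every LTL or CTL formula, however, has equal truth value in both, because by the reduct-invariance established above its truth depends only on the common Kripke structure. Hence no LTL or CTL formula is equivalent to $\Phi$, which gives both strict inclusions.

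The main obstacle is making precise that $\Phi$ really is a formula of the system whose semantics reads fibre inhabitation. Proposition-level "failure" in Definition~\ref{def:failure} is a meta-level non-constructibility, not an internal negation. My first attempt will interpret $\Phi$ as the Prop-layer truncation $\exists_{k:K_f(\tau)}\top\wedge\neg\exists_{k':K_f(\tau')}\top$, using $\Prop(\Gamma)=\Sub(1_\Gamma)$, so that in $\mathbf{Set}$ each existential is the image of $K_f(-)\to 1$. This requires allowing quantification over $\TypeL$ evidence, which is exactly the extension beyond $\Prop_\mu$ that the paper already admits. If that is not acceptable, the fallback is to phrase strictness semantically, as the model class separating two structures that all LTL and CTL formulas identify.
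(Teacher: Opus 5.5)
Your proposal is correct and is essentially the paper's argument. Inclusion comes from Theorem~\ref{thm:ltl-ctl-embedding}; the paper gets reduct-invariance for free by stating the embedding for every NM model via the forgetful functor $U$, so your induction on $\phi$ is an explicit version of the same fact, though it really rests on the translations never mentioning $K_f$ rather than on Lemma~\ref{lem:mu-nu-layer}, which concerns the opposite direction. Strictness likewise comes from two $\mathbf{Set}$-models with the same Kripke reduct that differ only in $K_f$, separated by an evidence-inhabitation formula.

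The paper uses the simpler $\Phi(\tau_0):=\exists k:K_f(\tau_0).\top$ with $K_f^{(1)}(\tau_0)=\varnothing$, $K_f^{(2)}(\tau_0)=\{*\}$, and it relies on the same admission of quantification over $\TypeL$ evidence that you flag. Your fully explicit two-object presheaf is a harmless variant, and it also sidesteps the paper's unstated constraint that every extension of $\tau_0$ must then have an empty fibre in $\mathcal{M}_1$.
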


\begin{proof}[Proof sketch]
The proof has two parts: embedding and separation.

\paragraph{(I) Forgetful functor and embedding preservation}
Define the forgetful functor $U:\mathsf{Mod}(\mathrm{NM\mbox{-}DEKL}^3_\infty)\to\mathsf{Kr}$: for each NM model $\mathcal{M}=(\mathsf{States},\mathsf{Step},\mathsf{Lab},K_f,\dots)$, forget the structure except $(\mathsf{States},\mathsf{Step},\mathsf{Lab})$ to obtain the Kripke structure $U(\mathcal{M})=(\mathsf{States},\to,\mathsf{Lab})$, where $\sigma\to\sigma'$ iff there is an event $e$ with $\mathsf{Step}(\sigma,e)=\sigma'$ (or the equivalent reachability relation).

By the embedding preservation theorem, there are translations $\iota_{\mathrm{LTL}}:\mathrm{Form}_{\mathrm{LTL}}\to\Prop_{\mathrm{NM}}$ and $\iota_{\mathrm{CTL}}:\mathrm{Form}_{\mathrm{CTL}}\to\Prop_{\mathrm{NM}}$ such that for any LTL/CTL formula and any NM model $\mathcal{M}$, Kripke semantics agrees with NM Prop semantics: $U(\mathcal{M}),\pi\models\varphi\iff\mathcal{M}\models\iota_{\mathrm{LTL}}(\varphi)(\pi)$ and $U(\mathcal{M}),\sigma\models\psi\iff\mathcal{M}\models\iota_{\mathrm{CTL}}(\psi)(\sigma)$. So we have set inclusion:
\[
\mathrm{LTL}\subseteq \mathrm{NM\mbox{-}DEKL}^3_\infty,
\qquad
\mathrm{CTL}\subseteq \mathrm{NM\mbox{-}DEKL}^3_\infty.
\]

\paragraph{(II) Separation: an NM formula not expressible in LTL/CTL}
We show there is an NM Prop formula $\Phi$ whose truth depends on inhabitance/evidence of $K_f$, which is invisible after forgetting, so no LTL/CTL formula can capture it.

\subparagraph{Step 1: Two models that agree after forgetting}
Take one underlying Kripke structure $\mathcal{K}=(S,\to,\mathsf{Lab})$ and build two NM models $\mathcal{M}_1,\mathcal{M}_2$ with $U(\mathcal{M}_1)=\mathcal{K}=U(\mathcal{M}_2)$ (same states, reachability, and atomic labelling); the only difference is the constructive presheaf: $K_f^{(1)}\neq K_f^{(2)}$. For example in $C=\mathbf{Set}$, fix a trace point $\tau_0$ and set $K_f^{(1)}(\tau_0)=\varnothing$, $K_f^{(2)}(\tau_0)=\{*\}$, and make the two agree on other $\tau$ (with restriction maps filled by functoriality). So $\mathcal{M}_1,\mathcal{M}_2$ agree on the underlying Kripke behaviour but differ on ``whether evidence exists''.

\subparagraph{Step 2: An NM-expressible, LTL/CTL-invisible property}
Take the NM Prop formula $\Phi(\tau_0):=\exists k:K_f(\tau_0).\top$ (``$K_f(\tau_0)$ is inhabited / has a global element''). By construction, $\mathcal{M}_1\not\models\Phi(\tau_0)$ and $\mathcal{M}_2\models\Phi(\tau_0)$.

\subparagraph{Step 3: No LTL/CTL formula distinguishes $\mathcal{M}_1$ and $\mathcal{M}_2$}
LTL/CTL semantics is determined entirely by the forgotten Kripke structure, and $U(\mathcal{M}_1)=U(\mathcal{M}_2)=\mathcal{K}$. So for any LTL formula $\varphi$ and path $\pi$, $U(\mathcal{M}_1),\pi\models\varphi\iff U(\mathcal{M}_2),\pi\models\varphi$; for any CTL formula $\psi$ and state $\sigma$, $U(\mathcal{M}_1),\sigma\models\psi\iff U(\mathcal{M}_2),\sigma\models\psi$.

So no LTL/CTL formula can be equivalent to $\Phi$ in all models (it would distinguish $\mathcal{M}_1$ and $\mathcal{M}_2$). Hence there is an NM formula $\Phi$ not expressible in LTL or CTL, giving strict inclusion:
\[
\mathrm{LTL} \subsetneq \mathrm{NM\mbox{-}DEKL}^3_\infty,
\qquad
\mathrm{CTL} \subsetneq \mathrm{NM\mbox{-}DEKL}^3_\infty.
\]
\end{proof}

\subsection{$\mu$-calculus embedding theorem}

\begin{theorem}[$\mu$-calculus embeds]
\label{thm:mu-embedding}
For every $\mu$-calculus formula $\varphi$ there is a Prop-layer formula $\widehat{\varphi}$ such that for every Kripke structure $M$ and state $s$:
\[
M, s \models \varphi \quad \Longleftrightarrow \quad \Gamma \vdash \widehat{\varphi}(s).
\]
\end{theorem}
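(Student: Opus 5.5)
The plan is to define the translation $\varphi\mapsto\widehat{\varphi}$ by structural induction on $\varphi$, and then prove the equivalence by induction on $\varphi$. The semantic side is handled by the Knaster--Tarski characterisation of fixed points; the syntactic side uses fold/unfold and the soundness/completeness results already available.

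\textbf{Setting.} Fix a Kripke structure $M=(S,\to,V)$. We regard it as an NM-DEKL$^3_\infty$ context $\Gamma_M$ containing three things:
\begin{itemize}
  \item the carrier $\State:=S$;
  \item the relation $\Step$ encoding $\to$, with $\Event$ a singleton, or the actual labels when $M$ is multimodal;
  \item for each atom $p$ a predicate $\Gamma_M,s:\State\vdash P_p(s):\Prop$ together with axioms fixing its extension to $V(p)$.
\end{itemize}
Since the statement is only meaningful with $\widehat\varphi$ read as a predicate on states, I read it as $\widehat\varphi:\State\to\Prop$, i.e.\ a family of Prop-layer formulas indexed by a state variable.

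\textbf{Translation.} This is the map $\mathsf{Enc}$ from Theorem~\ref{thm:prop-mu-characterization}, made explicit:
\begin{itemize}
  \item $\widehat{p}=P_p$;
  \item Boolean connectives go to $\wedge,\vee,\to$, with negation handled by first putting $\varphi$ in positive normal form;
  \item $\widehat{\Diamond\varphi}(s)=\exists_{e:\Event}\exists_{s':\State}\,(\Step(s,e,s')\wedge\widehat\varphi(s'))$, and dually $\Box$ via $\forall$ and $\to$;
  \item $\widehat{\mu X.\varphi}=\mu X.\widehat\varphi$ and $\widehat{\nu X.\varphi}=\nu X.\widehat\varphi$, with $X$ now a predicate variable of type $\State\to\Prop$.
\end{itemize}
Positivity of $X$ in $\varphi$ is preserved syntactically by every clause, so the monotonicity side condition of the $\mu$-formation rule holds.

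\textbf{Proof of the equivalence.} I strengthen the claim to formulas with free variables: for every valuation $\rho$ of the fixpoint variables by subsets of $S$, and the matching substitution of $\Prop$-predicates,
\[
\llbracket\varphi\rrbracket^M_\rho=\{s\mid \Gamma_M\vdash\widehat\varphi[\rho](s)\}.
\]
\begin{itemize}
  \item \emph{Atomic, Boolean and modal cases.} These are direct. Each step uses the introduction and elimination behaviour of the Prop connectives together with the axioms of $\Gamma_M$ describing $\to$ and $V$.
  \item \emph{Forward inclusion (semantic $\Rightarrow$ derivable).} This follows from soundness (Theorem~\ref{thm:soundness}) applied in the standard Kripke model of $\Gamma_M$. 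In that model $\Prop$ is interpreted in $\Omega$ of $\mathbf{Set}$, so $\mu/\nu$ are interpreted as least and greatest fixed points in $\mathcal P(S)$, and these coincide with the $\mu$-calculus semantics by Knaster--Tarski.
  \item \emph{Converse for $\mu$.} I use the ordinal approximation $\mu X.\varphi=\bigcup_\alpha\varphi^\alpha(\emptyset)$ and prove by transfinite induction on $\alpha$ that states in $\varphi^\alpha(\emptyset)$ satisfy $\widehat{\mu X.\varphi}$ derivably. Each step applies $\mathsf{fold}$ to the inductive hypothesis for $\varphi$.
  \item \emph{Converse for $\nu$.} This is symmetric, via $\mathsf{out}$/$\mathsf{in}$ and a coinduction principle.
\end{itemize}

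\textbf{Main obstacle.} The hard step is the direction ``$\Gamma_M\vdash\widehat\varphi(s)\Rightarrow M,s\models\varphi$'' for fixed points, together with $\nu$-introduction. The syntax as given provides only fold/unfold, which characterises an arbitrary fixed point rather than the least one. So:
\begin{itemize}
  \item soundness with respect to the \emph{least} fixed point requires an induction (Park) rule for $\mu$;
  \item deriving $\widehat{\nu X.\varphi}(s)$ for states in the greatest fixed point requires a coinduction rule.
\end{itemize}
I would first try to justify both via equational completeness (Theorem~\ref{thm:equational-completeness-main}) applied in the Kripke model. If that fails, I would make explicit, as part of the $\mu/\nu$ structure ``(6) with monotonicity and fold/unfold'' in Definition~\ref{def:pshcwf3-model}, that the fixed points are extremal, and add the Park induction and coinduction rules accordingly. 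The same issue arises for infinite models: there the ordinal approximants exceed $\omega$, so the transfinite argument must stay inside the Set model rather than being carried out as a syntactic derivation.
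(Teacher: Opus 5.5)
You read $\Gamma\vdash\widehat\varphi(s)$ as literal derivability from a context $\Gamma_M$ that encodes $M$. Under that reading the direction $M,s\models\varphi\Rightarrow\Gamma_M\vdash\widehat\varphi(s)$ is false for the system as given, so your argument cannot close. A one-state structure already shows this. $\nu X.X$ holds at every state, but interpreting $\nu$ as the \emph{least} fixed point $\varnothing$ still validates $\mathsf{in}/\mathsf{out}$. So by Soundness (Theorem~\ref{thm:soundness}), $\widehat{\nu X.X}(s)$ is not derivable.

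Adding a coinduction rule does not rescue infinite structures. Derivability from a finite context in an effective system is recursively enumerable. Now take a chain $s_0\to s_1\to\cdots$ with $p$ true exactly at the $s_n$ with $n\in A$, where $A$ is not r.e. Then $\{n\mid M,s_0\models\langle R\rangle^n p\}=A$ is not r.e. either. In addition, $\State$ is a fixed primitive that cannot be set to $S$, and the context rule only admits $\Uc_i$-typed declarations. So neither $P_p$ nor the axioms fixing $V(p)$, its complement, and the successor sets (needed for $\neg p$ and $\Box$) can be placed in $\Gamma_M$.

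Your fallbacks do not help. Theorem~\ref{thm:equational-completeness-main} only says when two given terms are $\equiv$; it never produces an inhabitant of a proposition. Adding Park induction and coinduction proves a statement about a different system. You also put the difficulty on the wrong side. The soundness argument gives derivable $\Rightarrow$ semantic, so your ``forward'' label is reversed. That direction needs no Park rule, since a least fixed point validates $\mathsf{fold}/\mathsf{unfold}$; the hard direction is the converse.

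The paper's proof in effect reads the right-hand side semantically: as truth of $\widehat\varphi(s)$ in the model induced by $M$. There $\State$ is interpreted as $S$, $\widehat p$ as $V(p)$, $\Prop$ as $\Omega$ in $\mathbf{Set}$ (so predicates on $\State$ are subsets of $S$), and $\mu/\nu$ as least/greatest fixed points. This is the same reading as in Theorem~\ref{thm:mu-to-prop-sound} and in the $\mathcal{M}\models\llbracket\phi\rrbracket(\pi)$ of Theorem~\ref{thm:ltl-ctl-embedding}. Under that reading, the whole proof is your Knaster--Tarski remark turned into an equality. By induction on $\varphi$, for every valuation $\rho$ of the free fixpoint variables by subsets of $S$, one shows $\llbracket\widehat\varphi\rrbracket_\rho=\llbracket\varphi\rrbracket^M_\rho$. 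The atomic, Boolean and modal clauses match by definition. In the fixed-point case, the induction hypothesis says the monotone operators $U\mapsto\llbracket\widehat\varphi\rrbracket_{\rho[X:=U]}$ and $U\mapsto\llbracket\varphi\rrbracket^M_{\rho[X:=U]}$ on $\mathcal{P}(S)$ coincide, hence so do their least and greatest fixed points. No encoding of $M$ as a context, no ordinal approximants and no extra fixed-point rules are needed.
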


\begin{proof}
Define the embedding $\varphi\mapsto\widehat{\varphi}$ by structural induction: atoms $p\mapsto P_p$; Boolean connectives $\neg,\wedge,\vee,\to$ componentwise; $\Box\varphi$ and $\Diamond\varphi$ as path quantification over $\mathcal{T}_f$; $\mu X.\varphi(X)\mapsto\mu X.\widehat{\varphi(X)}$, $\nu X.\varphi(X)\mapsto\nu X.\widehat{\varphi(X)}$. Semantic equivalence: $M,s\models\varphi\iff\Gamma\vdash\widehat{\varphi}(s)$ follows from Tarski/Kleene fixed-point semantics and the subobject interpretation of Prop. So the embedding preserves semantics.
\end{proof}


\subsection{Strict expressiveness extension theorem}

\begin{theorem}[Strict expressiveness extension]
\label{thm:strict-expressivity}
There is a property $\Psi$ not expressible in the $\mu$-calculus but expressible in NM-DEKL$^3_\infty$.
\end{theorem}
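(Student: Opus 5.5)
The plan is to mirror the separation argument of Theorem~\ref{thm:strict-inclusion-sketch}, with ``invisible after forgetting'' replaced by ``invisible up to bisimulation''. The key fact I will use is standard: every $\mu$-calculus formula is bisimulation-invariant. If $(M_1,s_1)$ and $(M_2,s_2)$ are bisimilar pointed Kripke structures, then $M_1,s_1\models\varphi \iff M_2,s_2\models\varphi$ for every closed $\varphi$. This is proved by induction on $\varphi$, with the fixed-point cases handled by approximant induction. So it suffices to exhibit an NM-DEKL$^3_\infty$ property $\Psi$ and two bisimilar pointed structures on which $\Psi$ takes different truth values.

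\textbf{First choice of $\Psi$.} This choice reuses Step~1 and Step~2 of Theorem~\ref{thm:strict-inclusion-sketch}. Take one Kripke structure $\mathcal{K}$ and two NM models $\mathcal{M}_1,\mathcal{M}_2$ with $U(\mathcal{M}_1)=U(\mathcal{M}_2)=\mathcal{K}$. They differ only in the presheaf: set $K_f^{(1)}(\tau_0)=\varnothing$ and $K_f^{(2)}(\tau_0)=\{*\}$. Let $\Psi:=\exists k:K_f(\tau_0).\top$. Since $\mathcal{M}_1$ and $\mathcal{M}_2$ have identical, hence bisimilar, underlying Kripke structures, no $\mu$-calculus formula distinguishes them, while $\Psi$ does. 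One point must be checked: the two presheaves are genuine functors. For $K_f^{(2)}$ this is straightforward. For $K_f^{(1)}$, any $m:\tau_0\to\tau'$ needs a map $K_f^{(1)}(\tau')\to\varnothing$, which forces $K_f^{(1)}(\tau')=\varnothing$ for all extensions $\tau'$ of $\tau_0$. I will therefore define $K_f^{(1)}$ to be empty on the up-closure of $\tau_0$ and equal to $K_f^{(2)}$ elsewhere, and let $K_f^{(2)}$ be constantly $\{*\}$. This is exactly ``failure propagates along restriction'' read contrapositively.

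\textbf{Second, more intrinsic choice of $\Psi$.} The abstract advertises a property that is genuinely non-bisimulation-invariant within a single structure. Here I take $\Psi(s):=$ ``$s$ has at least two distinct $\Step$-successors via distinct events reaching distinct states''. This is expressed in Prop using the computational identity types: $\exists e_1,e_2:\Event.\exists s_1,s_2:\State.\ \Step(s,e_1,s_1)\wedge\Step(s,e_2,s_2)\wedge\neg\Id_\State(s_1,s_2)$. For separation, compare a state with a single successor $t$ to a state with two successors $t_1,t_2$, where $t,t_1,t_2$ are identically labelled deadlocks. The relation $\{(s,s'),(t,t_1),(t,t_2)\}$ is a bisimulation, yet $\Psi$ holds at $s'$ and fails at $s$. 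To make this a genuine separation inside the semantics, I rely on the interpretation of $\Id_\State$ in the $\mathbf{Set}$ model, where it is equality of states.

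\textbf{Main obstacle.} The hardest step is making precise that the Prop formula $\Psi$ is interpreted relative to the NM model, and that the $\mu$-calculus embedding of Theorem~\ref{thm:mu-embedding} and Theorem~\ref{thm:prop-mu-characterization} does not contradict this. The resolution is that $\Psi$ lies outside the fragment $\Prop_\mu$ of Definition~\ref{def:prop-mu-fragment}. The first $\Psi$ quantifies over $\TypeL$ evidence, and the second uses identity on states. This is consistent with the ``strict extension'' remark following Theorem~\ref{thm:prop-mu-characterization}. I would present the first $\Psi$ as the main proof, since it needs only the presheaf functoriality check, and give the second as a remark.
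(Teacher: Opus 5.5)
Your main argument (the first $\Psi$) follows the paper's idea: a property that quantifies over $K_f$-evidence, which a truth-valued logic cannot see. The paper's proof stops at the slogan ``$\mu$-calculus semantics is truth-based, NM-DEKL$^3_\infty$ semantics is construction-based'' and never exhibits a separating pair. You supply one by importing the two-model construction of Theorem~\ref{thm:strict-inclusion-sketch} and using invariance of $\mu$-formulas under identical Kripke reducts.

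Your functoriality check is a genuine improvement. Taken literally, the paper's instruction to make $K_f^{(1)}$ and $K_f^{(2)}$ ``agree on other $\tau$'' breaks whenever $K_f^{(2)}$ is inhabited at a proper extension of $\tau_0$, since there is no map from a nonempty set into $\varnothing$. Emptying the whole up-closure of $\tau_0$ repairs this.

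The price of this route is that ``not expressible'' holds only in the weak sense that the $\mu$-calculus has no access to the $K_f$ component at all. Your second $\Psi$ is a genuinely different and stronger route. It is a property of the underlying transition structure itself, refuted by an honest bisimulation. This is what Theorem~\ref{thm:non-bisimulation-invariant} advertises but never substantiates. For that example, remember that $s$ and $s'$ must also carry the same label.

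Neither $\Psi$ is literally formable under the stated Prop rules, because $\exists_{x:A}$ is only given for $A:\Uc_i$ and contexts only extend by $\Uc_i$-types.
\begin{itemize}
\item The first $\Psi$ needs the ``quantification over $\TypeL$ evidence'' extension mentioned after Theorem~\ref{thm:prop-mu-characterization}.
\item The second $\Psi$ quantifies only over $\Uc_0$-types, but it treats the $\Uc_0$-types $\Step(\ldots)$ and $\Id_\State(\ldots)$ as propositions. That requires the same implicit $\Uc$-to-$\Prop$ coercion the paper uses to define $R(s,s')$; in a topos model this is image factorisation.
\end{itemize}
The paper's own proof takes the same liberties, so this is not a gap relative to it, but you should state the assumption explicitly.
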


\begin{proof}[Proof idea]
Consider the property $\Psi:=\text{``there exists a concrete constructive causal proof chain''}$. In NM-DEKL$^3_\infty$ this is $\exists k:K_f(\tau).\mathrm{CausalProof}(k)$, where $k$ is a constructive evidence object. The $\mu$-calculus can only express truth properties $M,s\models\varphi$, and cannot quantify or construct concrete evidence. $\mu$-calculus semantics is truth-based; NM-DEKL$^3_\infty$ semantics is construction-based. So there are properties expressible in NM-DEKL$^3_\infty$ but not in the $\mu$-calculus.
\end{proof}

\subsection{Comparison with $\mu$-calculus: bidirectional translation}

The Prop-$\mu$ fragment and the $\mu$-calculus are intertranslatable; the full system is strictly stronger (evidence objects and causal construction are not expressible in the $\mu$-calculus).

\subsubsection{Kripke structures and semantic setting}

\begin{definition}[Kripke structure]
A Kripke structure is $M=(S,R,V)$ where $S$ is the set of states, $R\subseteq S\times S$ is the reachability relation, and $V:\mathsf{AP}\to\mathcal{P}(S)$ is the atomic proposition valuation.
\end{definition}

\begin{definition}[$\mu$-calculus syntax (standard)]
Formulas $\varphi$ are given by: $\varphi::=p\mid\neg p\mid\varphi\wedge\varphi\mid\varphi\vee\varphi\mid\langle R\rangle\varphi\mid[R]\varphi\mid X\mid\mu X.\varphi\mid\nu X.\varphi$, where $p\in\mathsf{AP}$, $X$ is a variable, and $\mu/\nu$ allow only \emph{positive} occurrence of $X$ (monotonicity).
\end{definition}

\begin{definition}[Prop-$\mu$ fragment of NM-DEKL$^3_\infty$]
Let $\Prop_\mu$ be the Prop fragment of NM-DEKL$^3_\infty$ generated by: $P,Q::=\widehat{p}(s)\mid\neg\widehat{p}(s)\mid P\wedge Q\mid P\vee Q\mid P\to Q\mid\Diamond P\mid\Box P\mid X(s)\mid\mu X.\,P\mid\nu X.\,P$, where $s:\State$, $X:\State\to\Prop$, and $\Diamond,\Box$ are modal operators for one-step reachability (below). We require $X$ to occur positively in $\mu/\nu$.
\end{definition}

\paragraph{Internalising the modal operators}
Assume the computational layer gives a one-step reachability relation (induced by $\Step$ or trace observation):
\[
R(s,s') \;\;:\!\iff\;\; \exists e:\Event.\;\Step(s,e,s').
\]
Define $\Diamond P(s):\Leftrightarrow\exists s'.R(s,s')\wedge P(s')$ and $\Box P(s):\Leftrightarrow\forall s'.R(s,s')\to P(s')$.

\subsubsection{Direction 1: $\mu$-calculus $\to$ Prop-$\mu$ (embedding)}

\begin{definition}[Translation $\mathsf{Enc}(-)$: $\mu$-calculus to Prop-$\mu$]
Given a $\mu$-calculus formula $\varphi$, define $\mathsf{Enc}(\varphi):\State\to\Prop$ by recursion as follows (when $s$ is omitted, it is the default argument):

\[
\begin{array}{rcl}
\mathsf{Enc}(p) &:=& \widehat{p}(s) \\
\mathsf{Enc}(\neg p) &:=& \neg\widehat{p}(s) \\
\mathsf{Enc}(\varphi\wedge\psi) &:=& \mathsf{Enc}(\varphi)\wedge \mathsf{Enc}(\psi)\\
\mathsf{Enc}(\varphi\vee\psi) &:=& \mathsf{Enc}(\varphi)\vee \mathsf{Enc}(\psi)\\
\mathsf{Enc}(\langle R\rangle\varphi) &:=& \Diamond\,\mathsf{Enc}(\varphi)\\
\mathsf{Enc}([R]\varphi) &:=& \Box\,\mathsf{Enc}(\varphi)\\
\mathsf{Enc}(X) &:=& X(s)\\
\mathsf{Enc}(\mu X.\varphi) &:=& \mu X.\,\mathsf{Enc}(\varphi)\\
\mathsf{Enc}(\nu X.\varphi) &:=& \nu X.\,\mathsf{Enc}(\varphi).\\
\end{array}
\]
\end{definition}

\begin{theorem}[Sound embedding of $\mu$-calculus]
\label{thm:mu-to-prop-sound}
For any Kripke structure $M=(S,R,V)$ and state $s\in S$, interpret $\State$ as $S$ and $\widehat{p}$ as $V(p)$. Then for any $\mu$-calculus formula $\varphi$, $M,s\models\varphi\;\Longleftrightarrow\;\vdash\mathsf{Enc}(\varphi)(s)$.
\end{theorem}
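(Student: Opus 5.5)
The plan is to prove a stronger statement by structural induction on $\varphi$, generalised to open formulas. Fix $M=(S,R,V)$ and interpret it in the Set model from the consistency section: $\State:=S$, $\widehat{p}$ as the characteristic map of $V(p)$ into $\Omega=\{\mathsf{true},\mathsf{false}\}$, and $R(s,s')$ via $\Step$ so that $\exists e.\,\Step(s,e,s')$ holds exactly when $(s,s')\in R$. Since the statement relates satisfaction to the judgement $\vdash\mathsf{Enc}(\varphi)(s)$, I would first reduce it to a semantic statement about the interpretation of $\mathsf{Enc}(\varphi)$ as a subobject. Concretely, I read $\vdash\mathsf{Enc}(\varphi)(s)$ as ``$\llbracket\mathsf{Enc}(\varphi)\rrbracket(s)=\mathsf{true}$ in this model''. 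Soundness (Theorem~\ref{thm:soundness}) and the subobject reading of $\Prop$ justify the left-to-right use. The remaining gap, from semantic truth back to derivability, I expect is exactly what the paper intends by $\vdash$ here, relative to the theory of $M$.

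\textbf{Induction claim.} For every valuation $\rho$ assigning to each free variable $X$ a subset $\rho(X)\subseteq S$, with $X:\State\to\Prop$ interpreted as its characteristic map, we have $\llbracket\varphi\rrbracket^M_\rho=\{s\mid \llbracket\mathsf{Enc}(\varphi)\rrbracket_\rho(s)=\mathsf{true}\}$.

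The base cases $p$, $\neg p$ and $X$ hold by the choice of interpretation. For $\wedge$ and $\vee$, I would use that the Heyting operations on $\Sub(1)$ in $\mathbf{Set}$ are intersection and union. For $\langle R\rangle$ and $[R]$, I would unfold $\Diamond$ and $\Box$ via $\exists$ and $\forall$ over $\State$: the quantifiers are interpreted as image along, and the right adjoint to pullback along, the projection. This gives exactly $\{s\mid\exists s'.\,(s,s')\in R,\ s'\in\llbracket\varphi\rrbracket\}$ and its dual.

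\textbf{Fixed-point case (main obstacle).} This is the case I expect to be hardest. Positivity of $X$ makes $\mathsf{Enc}(\varphi)$ monotone as an operator on $\mathcal{P}(S)\cong\Omega^S$. By the induction hypothesis, applied uniformly in $\rho[X\mapsto U]$, this operator coincides with the $\mu$-calculus operator $U\mapsto\llbracket\varphi\rrbracket_{\rho[X\mapsto U]}$. It then remains to show that the Prop-layer $\mu X$ denotes the \emph{least} fixed point and not just some fixed point. The fold/unfold rules alone guarantee only a fixed point, so I would fix the intended interpretation of $\mu/\nu$ in $\Omega^S$ as the Knaster--Tarski least and greatest fixed points of monotone maps, which exist because $\Omega^S$ is a complete lattice. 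Equal operators then have equal least and greatest fixed points, which closes the $\mu$ and $\nu$ cases. If the Prop layer is meant to carry an induction principle, I would note that this rule is what forces leastness in every model. That is the point where the argument depends on the intended semantics of $\mu$, not on the listed rules alone.

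Finally, I would instantiate the claim at closed $\varphi$ and the given $s$ to obtain $M,s\models\varphi\iff\vdash\mathsf{Enc}(\varphi)(s)$.
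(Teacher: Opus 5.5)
Your proposal is correct and follows the same route as the paper's proof idea. Both argue by structural induction on $\varphi$: the Boolean and modal cases are read off from the definitions of $\Diamond$ and $\Box$, and the $\mu/\nu$ cases are settled by Tarski fixed-point semantics, with positivity of $X$ giving monotonicity. Your version is more careful than the paper's sketch in three ways: the valuation-indexed induction hypothesis is what actually makes the fixed-point step go through; Knaster--Tarski avoids relying on $\omega$-step Kleene iteration for infinite $S$; and you rightly flag that $\vdash\mathsf{Enc}(\varphi)(s)$ must be read as truth in the $\mathbf{Set}$ model with $\mu/\nu$ fixed as least/greatest fixed points, since fold/unfold alone do not force leastness and the listed rules give no route from semantic truth back to derivability.
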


\begin{proof}[Proof idea]
By structural induction on formulas. Boolean connectives and modalities agree by the definition of $\Diamond/\Box$; fixed points use standard Tarski/Kleene semantics ($\mu$ least, $\nu$ greatest), with ``$X$ positive'' ensuring monotonicity.
\end{proof}

\subsubsection{Direction 2: Prop-$\mu$ $\to$ $\mu$-calculus (back-translation)}

Back-translation is restricted to the truth-value fragment Prop-$\mu$ (no quantification over evidence); otherwise NM-DEKL$^3_\infty$ is strictly stronger.

\begin{definition}[Translation $\mathsf{Dec}(-)$: Prop-$\mu$ to $\mu$-calculus]
For a Prop-$\mu$ formula $P:\State\to\Prop$, define the $\mu$-calculus formula $\mathsf{Dec}(P)$ (with $s$ as the current point):

\[
\begin{array}{rcl}
\mathsf{Dec}(\widehat{p}(s)) &:=& p \\
\mathsf{Dec}(\neg\widehat{p}(s)) &:=& \neg p \\
\mathsf{Dec}(P\wedge Q) &:=& \mathsf{Dec}(P)\wedge \mathsf{Dec}(Q) \\
\mathsf{Dec}(P\vee Q) &:=& \mathsf{Dec}(P)\vee \mathsf{Dec}(Q) \\
\mathsf{Dec}(\Diamond P) &:=& \langle R\rangle\,\mathsf{Dec}(P) \\
\mathsf{Dec}(\Box P) &:=& [R]\,\mathsf{Dec}(P) \\
\mathsf{Dec}(X(s)) &:=& X \\
\mathsf{Dec}(\mu X.\,P) &:=& \mu X.\,\mathsf{Dec}(P) \\
\mathsf{Dec}(\nu X.\,P) &:=& \nu X.\,\mathsf{Dec}(P). \\
\end{array}
\]
\end{definition}

\begin{theorem}[Back-translation correctness for Prop-$\mu$]
\label{thm:prop-to-mu-sound}
Under the same interpretation as in Theorem~\ref{thm:mu-to-prop-sound}, for any Prop-$\mu$ formula $P$ and state $s$: $\vdash P(s)\;\Longleftrightarrow\;M,s\models\mathsf{Dec}(P)$.
\end{theorem}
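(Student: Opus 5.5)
The plan is a structural induction on the Prop-$\mu$ formula $P$, with the same semantic setup as Theorem~\ref{thm:mu-to-prop-sound}: $\State$ is interpreted as $S$, $\widehat{p}$ as $V(p)$, and $R(s,s')$ as $\exists e.\,\Step(s,e,s')$, chosen to agree with the Kripke relation. Read literally, $\vdash P(s)$ refers to derivability. To make the statement provable I would first fix its meaning semantically: $\vdash P(s)$ means $\llbracket P\rrbracket(s)=\top$ in $\Omega$ of the Kripke-induced model. The passage between this and syntactic derivability goes through Soundness (Theorem~\ref{thm:soundness}) and the relative completeness remark for Prop.

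Because $P$ may contain free fixed-point variables $X:\State\to\Prop$, the induction must be strengthened. I would prove, for every valuation $\rho$ sending each $X$ to a subset $\rho(X)\subseteq S$,
\[
\llbracket P\rrbracket_\rho(s)=\top \iff M,s\models_{\rho}\mathsf{Dec}(P).
\]
Equivalently, the subset of $S$ classified by $\llbracket P\rrbracket_\rho$ equals $\|\mathsf{Dec}(P)\|^M_\rho$. The cases then go as follows.
\begin{itemize}
\item \textbf{Atoms and negated atoms.} These hold by the choice $\widehat{p}\mapsto V(p)$.
\item \textbf{$\wedge$ and $\vee$.} These follow because the Heyting operations on $\Sub(1)$ in $\mathbf{Set}$ are intersection and union.
\item \textbf{$\Diamond$ and $\Box$.} Unfold the internal definitions $\Diamond P(s)\Leftrightarrow\exists s'.R(s,s')\wedge P(s')$ and $\Box P(s)\Leftrightarrow\forall s'.R(s,s')\to P(s')$. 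These match the Kripke clauses for $\langle R\rangle$ and $[R]$ exactly.
\item \textbf{Variables $X(s)$.} This holds by the valuation.
\item \textbf{The $\to$ case.} $\mathsf{Dec}$ gives no clause for $\to$. I would either treat $\to$ as $\neg A\vee B$ with $A$ restricted to $X$-free formulas, which is classical in $\mathbf{Set}$, or exclude it, as $\mathsf{Dec}$ implicitly does. I will note this restriction explicitly.
\end{itemize}

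The main step is the fixed-point case $\mu X.P$, and $\nu X.P$ is dual. Define the operator $\Phi_P(U)=\{s\mid \llbracket P\rrbracket_{\rho[X\mapsto U]}(s)=\top\}$ on $\mathcal{P}(S)$, and the $\mu$-calculus operator $\Psi(U)=\|\mathsf{Dec}(P)\|_{\rho[X\mapsto U]}$. The induction hypothesis, applied uniformly over all valuations, gives $\Phi_P=\Psi$ as functions. Positivity of $X$ makes both monotone. Knaster--Tarski then gives
\[
\mathrm{lfp}\,\Phi_P=\bigcap\{U\mid \Psi(U)\subseteq U\}=\|\mu X.\mathsf{Dec}(P)\|_\rho,
\]
and dually for the greatest fixed point.

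The main obstacle is justifying that the Prop-layer $\llbracket\mu X.P\rrbracket$ really is the least fixed point, rather than merely some fixed point satisfying the fold/unfold isomorphism $\varphi(\mu X.\varphi)\leftrightarrow\mu X.\varphi$. Fold/unfold alone only gives a fixed point. I would therefore take as part of the model's optional fixed-point structure, clause (6) of Definition~\ref{def:pshcwf3-model}, that $\mu$ and $\nu$ are interpreted as the Tarski extremal fixed points in $\Sub(1_\Gamma)$, which is a complete lattice in $\mathbf{Set}$. If needed, I would add an induction/coinduction principle to pin this down.

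Finally, instantiate at the empty valuation for closed $P$ to conclude $\vdash P(s)\iff M,s\models\mathsf{Dec}(P)$.
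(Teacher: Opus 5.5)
Your proposal is correct and follows the paper's own route: structural induction on $P$, with atoms, connectives and $\Diamond/\Box$ matched clause by clause, and the fixed-point case settled by positivity plus Tarski's extremal fixed points. It also makes explicit what the paper's sketch leaves implicit: the induction over valuations of free $X$, the missing $\to$ clause of $\mathsf{Dec}$, and the need for $\mu/\nu$ to denote least/greatest fixed points rather than arbitrary fold/unfold ones. Drop only the claim that your semantic reading of $\vdash P(s)$ transfers back to syntactic derivability via Soundness and the Prop completeness remark: derivability cannot depend on the particular $M$, and completeness would in any case require validity in all models, so the semantic reading is what you (and the paper) actually establish.
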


\begin{proof}[Proof idea]
By structural induction on $P$. Boolean and modal cases by the correspondence of $\widehat{p}$ and $\Diamond/\Box$; fixed points use positive $X$ in Prop-$\mu$ for monotonicity, so Kleene iteration and Tarski semantics match the internal Prop fixed-point construction.
\end{proof}

\subsubsection{Equivalence of the two translations and expressiveness}

\begin{theorem}[Bidirectional translation (equivalence)]
\label{thm:bidirectional-translation}
For any $\mu$-calculus formula $\varphi$, $\mathsf{Dec}(\mathsf{Enc}(\varphi))\equiv\varphi$ (up to $\alpha$-equivalence and syntactic equality). For any Prop-$\mu$ formula $P$, $\vdash\forall s:\State.\big(P(s)\leftrightarrow\mathsf{Enc}(\mathsf{Dec}(P))(s)\big)$.
\end{theorem}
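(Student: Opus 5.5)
The plan is to prove the two claims separately. In both cases the key observation is that $\mathsf{Enc}$ and $\mathsf{Dec}$, as defined by the two tables, are clause-by-clause syntactic inverses. So the round-trips hold \emph{on the nose, up to $\alpha$-renaming}, and no semantic argument is needed.

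\textbf{First claim.} I will show $\mathsf{Dec}(\mathsf{Enc}(\varphi))\equiv\varphi$ by structural induction on the $\mu$-calculus formula $\varphi$, following its grammar.
\begin{itemize}
  \item \emph{Atoms and negated atoms:} $\mathsf{Dec}(\mathsf{Enc}(p))=\mathsf{Dec}(\widehat{p}(s))=p$, and likewise $\neg p\mapsto\neg\widehat{p}(s)\mapsto\neg p$.
  \item \emph{Variables:} $X\mapsto X(s)\mapsto X$.
  \item \emph{Connectives and modalities:} for $\wedge,\vee,\langle R\rangle,[R]$, $\mathsf{Enc}$ produces exactly the head symbol ($\wedge,\vee,\Diamond,\Box$) that $\mathsf{Dec}$ sends back. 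The induction hypothesis then closes the case.
  \item \emph{Fixed points:} $\mu X.\varphi\mapsto\mu X.\mathsf{Enc}(\varphi)\mapsto\mu X.\mathsf{Dec}(\mathsf{Enc}(\varphi))$, and similarly for $\nu$. The bound variable is preserved, so the result agrees with the original up to $\alpha$-equivalence. Positivity of $X$ is preserved in both directions because neither translation introduces negations above variables.
\end{itemize}

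\textbf{Second claim, main step.} I will first prove a syntactic lemma: for every Prop-$\mu$ formula $P$ lying in the domain of the $\mathsf{Dec}$ table, $\mathsf{Enc}(\mathsf{Dec}(P))$ is $\alpha$-equal to $P$ as a term $\State\to\Prop$. The proof is the same induction run on $P$, with the same clause-by-clause cancellation; for the fixed-point clauses, $X:\State\to\Prop$ is bound in both. Given the lemma, the required \emph{internal} derivation is immediate. For the binder case this uses that $\alpha$-equal terms are definitionally equal, so that they are identified in the syntactic model $\mathcal{S}$. The derivation goes:
\begin{itemize}
  \item Under $s:\State$, the identity proof $\lambda x.x$ inhabits $P(s)\to P(s)$ in both directions, hence gives $P(s)\leftrightarrow\mathsf{Enc}(\mathsf{Dec}(P))(s)$ after rewriting along $\equiv$.
  \item $\forall$-introduction over $s:\State$, which is allowed since $\State:\Uc_0$, then yields $\vdash\forall s:\State.\big(P(s)\leftrightarrow\mathsf{Enc}(\mathsf{Dec}(P))(s)\big)$.
\end{itemize}
This route deliberately avoids any congruence principle for $\leftrightarrow$ under $\mu/\nu$. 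That matters, because the Prop layer only provides $\mathsf{fold}/\mathsf{unfold}$ and no induction rule for fixed points, so such a congruence would be hard to derive.

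\textbf{Main obstacle: the $P\to Q$ constructor.} Prop-$\mu$ includes $P\to Q$, but the $\mathsf{Dec}$ table has no clause for it. My plan is to extend $\mathsf{Dec}$ by $\mathsf{Dec}(P\to Q):=\overline{\mathsf{Dec}(P)}\vee\mathsf{Dec}(Q)$, where $\overline{(\cdot)}$ is the usual De Morgan/modal/fixed-point dual. Positivity of any bound $X$ ensures that $X$ does not occur in $P$ here, so the dual is a closed formula and stays within the $\mu$-calculus grammar. In this case $\mathsf{Enc}(\mathsf{Dec}(P\to Q))$ is no longer syntactically $P\to Q$. I would instead derive $(P\to Q)\leftrightarrow(\neg P\vee Q)$ internally, together with $\neg P\leftrightarrow\mathsf{Enc}(\overline{\mathsf{Dec}(P)})$, by a nested induction. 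That nested induction needs excluded middle in $\Prop$ and the fixed-point duality $\neg\mu X.F\leftrightarrow\nu X.\neg F(\neg X)$. The first holds when $\Omega$ is Boolean, as in the $\mathbf{Set}$ model. The second is exactly where the lack of a $\mu$-induction rule bites. If it cannot be obtained from $\mathsf{fold}/\mathsf{unfold}$, I would add the Kozen-style induction rule for $\mu$ (and coinduction for $\nu$) as Prop-layer axioms, and state the $\to$ case relative to them.
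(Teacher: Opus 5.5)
For the first claim you follow the paper exactly, by clause-by-clause induction on $\varphi$. For the second claim your route is different, and more rigorous.

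The paper argues semantically. It chains Theorems~\ref{thm:prop-to-mu-sound} and~\ref{thm:mu-to-prop-sound} into $\vdash P(s)\iff M,s\models\mathsf{Dec}(P)\iff\vdash\mathsf{Enc}(\mathsf{Dec}(P))(s)$. That only gives pointwise, meta-level agreement of truth in each Kripke-induced model. Turning it into the internal judgement $\vdash\forall s:\State.(P(s)\leftrightarrow\mathsf{Enc}(\mathsf{Dec}(P))(s))$ would need completeness of $\Prop$ for those models, which the paper explicitly leaves open.

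You observe instead that $\mathsf{Enc}\circ\mathsf{Dec}$ is the identity up to $\alpha$ on the domain of the $\mathsf{Dec}$ table. This yields the internal statement directly, by reflexivity of $\leftrightarrow$ and $\forall$-introduction. As you note, it also avoids any congruence of $\leftrightarrow$ under $\mu/\nu$, which fold/unfold alone do not supply.

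Your patch for $\to$ should be dropped rather than pursued. The paper's $\mathsf{Dec}$ has no $\to$ clause, and its proof passes over this silently. Your repair has two problems.

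First, positivity does not keep a bound $X$ out of the antecedent. In $\mu X.\,((X(s)\to\widehat{p}(s))\to\widehat{q}(s))$ the occurrence of $X$ is positive, yet $\mathsf{Dec}$ of the antecedent would need $\neg X$, which is outside the $\mu$-calculus grammar. You would need polarity-indexed translations.

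Second, and decisively, excluded middle holding in the $\mathbf{Set}$ model does not make it derivable, since $\Prop=\Sub(1_\Gamma)$ is in general only Heyting. In fact no choice of $\mathsf{Dec}(\widehat{p}(s)\to\widehat{q}(s))$ can work in the system as given:
\begin{itemize}
\item Every $\mathsf{Enc}(\varphi)$ is built from $\widehat{p},\neg\widehat{p},\ldots$ by monotone operations.
\item Take a one-state frame, with $\Prop$ the chain $0<a<1$ (subterminals of presheaves on $\{0<1\}$), $\mu/\nu$ read as least/greatest fixed points, and $\widehat{q}(s_0)=a$.
\item Raising $\widehat{p}(s_0)$ from $a$ to $1$ leaves $\neg\widehat{p}(s_0)=0$ unchanged, so $\mathsf{Enc}(\varphi)(s_0)$ cannot decrease.
\item Yet $\widehat{p}(s_0)\to\widehat{q}(s_0)$ drops from $1$ to $a$, so a derivable biconditional would force $\mathsf{Enc}(\varphi)(s_0)$ to drop too.
\end{itemize}

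So the second claim is derivable only for the $\to$-free fragment, which is exactly where your syntactic proof is already complete. The right fix is to state the claim for that fragment. The alternative is to add excluded middle and Kozen-style induction explicitly as new axioms, but that changes the system.
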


\begin{proof}[Proof idea]
The first part follows by induction from the definition of the translations; the second from the semantic equivalence of Theorems~\ref{thm:mu-to-prop-sound} and~\ref{thm:prop-to-mu-sound}.
\end{proof}

\begin{corollary}[Expressiveness relation]
\label{cor:expressive-power}
The Prop-$\mu$ fragment and the $\mu$-calculus have the same expressive power under truth-value semantics: $\mu\text{-calculus}\equiv\Prop_\mu$. The full NM-DEKL$^3_\infty$ system is strictly stronger: there are expressible properties (involving evidence objects / causal construction chains) that the $\mu$-calculus cannot express.
\end{corollary}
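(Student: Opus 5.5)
The corollary has two parts, and I would prove them separately.

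\textbf{Part (a): equal expressive power.} This should follow from Theorems~\ref{thm:mu-to-prop-sound}, \ref{thm:prop-to-mu-sound} and~\ref{thm:bidirectional-translation}. I would first fix what ``same expressive power under truth-value semantics'' means: every formula of one side defines the same state-set as some formula of the other side, in every Kripke structure $M=(S,R,V)$ and at every state $s$, under the interpretation $\State:=S$ and $\widehat p:=V(p)$.

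\emph{$\mu$-calculus to $\Prop_\mu$.} Given a $\mu$-calculus formula $\varphi$, take $\mathsf{Enc}(\varphi)$. Theorem~\ref{thm:mu-to-prop-sound} gives $M,s\models\varphi\iff\vdash\mathsf{Enc}(\varphi)(s)$.

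\emph{$\Prop_\mu$ to $\mu$-calculus.} Given $P\in\Prop_\mu$, take $\mathsf{Dec}(P)$. Theorem~\ref{thm:prop-to-mu-sound} gives the converse equivalence.

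Theorem~\ref{thm:bidirectional-translation} then adds that the two translations are mutually inverse, up to $\alpha$-equivalence on the $\mu$-calculus side and provable equivalence on the $\Prop_\mu$ side. This yields $\mu\text{-calculus}\equiv\Prop_\mu$ as an equivalence of definable properties, not merely a pair of inclusions.

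One point needs checking: $\Prop_\mu$ allows $P\to Q$, while $\mathsf{Dec}$ lists no clause for it. I would restrict attention to formulas in negation normal form. Alternatively, I would note that in the truth-value setting $P\to Q$ with $X$-free $P$ can be rewritten using $\neg$ on atoms, pushed inward via the Boolean/modal dualities. This is routine but has to be stated.

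\textbf{Part (b): strictness.} Here I would reuse the separation argument of Theorem~\ref{thm:strict-inclusion-sketch}, adapted from LTL/CTL to the $\mu$-calculus.

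First, build two NM models $\mathcal M_1,\mathcal M_2$ with the same underlying Kripke structure $U(\mathcal M_1)=U(\mathcal M_2)$. They differ only in the constructive presheaf: $K_f^{(1)}(\tau_0)=\varnothing$ and $K_f^{(2)}(\tau_0)=\{*\}$.

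Second, I must check that this is a legitimate presheaf. Every restriction map must land somewhere, so no inhabited fibre may restrict into $\varnothing$. I would make $K_f^{(1)}$ empty on $\tau_0$ and on all its extensions, and let the two presheaves agree elsewhere. In $\mathbf{Set}$ a map out of $\varnothing$ always exists, so functoriality holds.

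Third, consider $\Phi:=\exists k:K_f(\tau_0).\top$, or the causal-chain variant of Theorem~\ref{thm:strict-expressivity}. By construction it holds in $\mathcal M_2$ and fails in $\mathcal M_1$.

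Fourth, $\mu$-calculus satisfaction depends only on $(S,R,V)$. Hence every $\mu$-calculus formula has the same truth value in both models, and no $\mu$-calculus formula is equivalent to $\Phi$.

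\textbf{Main obstacle.} The hardest step is making the strictness claim meaningful. ``Expressible'' must be read over the class of NM models compared through the forgetful functor $U$. Otherwise $\Phi$ is not a property of Kripke structures at all, and the comparison is vacuous.

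The first thing I would try is to state explicitly that expressiveness is measured over NM models. A $\mu$-calculus formula $\varphi$ is then read in $\mathcal M$ via $U(\mathcal M)$, and Part (a) shows that $\Prop_\mu$ formulas are likewise read via $U(\mathcal M)$. With that convention, the indistinguishability argument of Theorem~\ref{thm:strict-inclusion-sketch} transfers verbatim.
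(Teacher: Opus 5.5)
Part (a) follows the paper's own route: the equivalence is read off from Theorems~\ref{thm:mu-to-prop-sound}, \ref{thm:prop-to-mu-sound} and~\ref{thm:bidirectional-translation}. Your observation that $\mathsf{Dec}$ has no clause for $P\to Q$ is a real omission in the paper. Your classical rewrite repairs it: positivity forces the antecedent of $\to$ to contain no enclosing fixpoint variables, so $\neg P$ can be pushed down to atoms by the Boolean, modal and $\mu/\nu$ dualities.

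Part (b) takes a genuinely different route. The paper gets strictness from Theorems~\ref{thm:strict-expressivity} and~\ref{thm:non-bisimulation-invariant}. There the $\mu$-calculus is called ``truth-based'' and bisimulation-invariant, while $\exists k:K_f(\tau).\mathrm{CausalProof}(k)$ is called ``construction-based''. No separating pair of structures is exhibited, and the property is labelled non-bisimulation-invariant even though it is not a property of Kripke structures at all.

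You instead transplant the forgetful-functor argument of Theorem~\ref{thm:strict-inclusion-sketch}. This buys three things:
\begin{itemize}
\item It needs only that $\mu$-calculus truth is a function of $(S,R,V)$, not bisimulation invariance.
\item It makes the notion of expressiveness explicit: properties of NM models, with both $\mu$-calculus and $\Prop_\mu$ formulas read through $U$.
\item Your presheaf check makes the paper's ``agree on other $\tau$'' precise. Because restriction maps run $K_f(\tau')\to K_f(\tau_0)$, emptiness of $K_f^{(1)}$ at $\tau_0$ must propagate to every extension of $\tau_0$, so the construction only yields a presheaf if the common values there are empty.
\end{itemize}
The paper's framing offers only the conceptual slogan about bisimulation; your version actually proves non-expressibility.

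Two small additions would tighten it. First, make $\Phi$ a state property, e.g.\ $\Phi(s):=\exists k:K_f(\nil(s)).\top$, so it is compared with $\mu$-calculus formulas pointwise. Second, note that the listed Prop rules allow $\exists_{x:A}$ only for $A:\Uc_i$. So $\Phi$, like the paper's causal-chain formula, presupposes quantification over $\TypeL$ evidence, which the paper assumes informally rather than derives.
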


\subsection{Expressibility of non-bisimulation-invariant properties}

This subsection states that NM-DEKL$^3_\infty$ can express \textbf{non-bisimulation-invariant} properties and compares with $\mu$-calculus expressiveness.

\subsubsection{Non-bisimulation-invariant properties}

In Kripke semantics, a property is \textbf{bisimulation-invariant} if it depends only on reachability between states, not on the concrete construction of states; such properties are usually expressed by truth judgements (e.g.\ $\mu$-calculus formulas), such as whether there is a path from the current state or whether a proposition holds at all reachable states. \textbf{Non-bisimulation-invariant} properties may depend on internal structure or constructive evidence (e.g.\ ``a concrete causal proof chain from one state to another''). NM-DEKL$^3_\infty$ can express such properties via dependent types and the constructive layer.

\subsubsection{Theorem: non-bisimulation-invariant properties are expressible}

\begin{theorem}[Non-bisimulation-invariant properties are expressible]
\label{thm:non-bisimulation-invariant}
NM-DEKL$^3_\infty$ can express \textbf{non-bisimulation-invariant} properties that the $\mu$-calculus cannot. In particular, properties such as ``there exists a concrete constructive causal proof chain'' are not bisimulation-invariant, and NM-DEKL$^3_\infty$ can express them.
\end{theorem}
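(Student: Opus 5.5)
The plan is to prove the statement in two halves. First, every $\mu$-calculus formula defines a bisimulation-invariant class of pointed Kripke structures. Second, some NM-DEKL$^3_\infty$ proposition defines a class that is \emph{not} closed under bisimulation. Together these show that the property cannot be any $\mu$-calculus formula. For the first half I will invoke the standard bisimulation-invariance theorem for the modal $\mu$-calculus \cite{bradfield07}. It is proved by induction on formulas with an environment for fixed-point variables; the $\mu/\nu$ cases go through ordinal approximants, each of which is invariant. I will also note, via Corollary~\ref{cor:expressive-power} and Theorem~\ref{thm:prop-to-mu-sound}, that the truth-value fragment $\Prop_\mu$ inherits this invariance. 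So the witness must use something outside $\Prop_\mu$, namely quantification over $\TypeL$ evidence or over concrete $\Step$-witnesses.

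For the second half I will construct the witness concretely, in the style of the separation argument in Theorem~\ref{thm:strict-inclusion-sketch}. Take $C=\mathbf{Set}$ and two Kripke structures. $M_1$ has a state $s_0$ with one transition, via event $e_1$, to a state $s_1$. $M_2$ has a state $t_0$ with two distinct events $e_1,e_2$ both leading to a state $t_1$, with the same labelling. The relation $\{(s_0,t_0),(s_1,t_1)\}$ is a bisimulation, since the underlying reachability relations $R$ agree. Now consider the constructive property
\[
\Psi(s,s') := \text{``there are two distinct causal proof chains from } s \text{ to } s'\text{''},
\]
that is, $\exists c_1,c_2:\mathrm{CausalProof}.\,\neg\Id(c_1,c_2)$, where chains are the $\Step$-sequences of Definition~\ref{def:causal-reasoning}. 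Equivalently, take $K_f$ to be the presheaf whose fibre at a trace collects its causal chains, and count its global elements. $\Psi(t_0,t_1)$ holds because $e_1\neq e_2$ give distinct chains, while $\Psi(s_0,s_1)$ fails. So $\Psi$ distinguishes bisimilar points and is not bisimulation-invariant. A simpler variant, closer to the paper's own proof idea, is to keep the Kripke structure fixed and vary only the fibre $K_f(\tau_0)$ between $\varnothing$ and $\{*\}$, as in Theorem~\ref{thm:strict-inclusion-sketch}. Identical Kripke reducts are trivially bisimilar, and $\Phi(\tau_0)=\exists k:K_f(\tau_0).\top$ separates the two models.

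Finally I combine the two halves. Suppose some $\mu$-calculus $\varphi$ were equivalent to $\Psi$ (or $\Phi$), meaning $M,s\models\varphi$ iff the NM-DEKL$^3_\infty$ formula holds at $s$. Then $\varphi$ would separate bisimilar points, contradicting invariance. The main obstacle is making ``expressible'' precise on both sides. The NM side must be a genuine Prop-layer or type-layer formula, and I must check it is well-formed under the stratification rules of Section~\ref{sec:stratification}: it quantifies over $\Event$ and $\Uc$-level chains, which is allowed. I must also fix how a $\mu$-calculus formula is compared with a property of richer NM models, namely through the forgetful functor $U$ of Theorem~\ref{thm:strict-inclusion-sketch}. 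I would first try the $\Id$-based counting formula, since it stays inside derivable syntax. If distinctness of events is not internally decidable, I will fall back on the $K_f$-inhabitation variant, where the separation is purely semantic.
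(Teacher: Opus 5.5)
Your proposal follows the same two-step strategy as the paper's proof idea: $\mu$-calculus formulas are bisimulation-invariant, while the constructive layer can state properties that are not. The difference is that the paper only asserts that ``evidence objects such as causal chains'' are non-invariant, whereas you exhibit points with bisimilar (indeed isomorphic) Kripke reducts that an NM-DEKL$^3_\infty$ formula separates. This matters. If a causal proof chain is a $\Step$-sequence as in Definition~\ref{def:causal-reasoning}, then the bare \emph{existence} of a chain from $s$ to a $p$-state is reachability. That is expressed by $\mu X.(p\vee\langle R\rangle X)$ and is therefore bisimulation-invariant. Only proof-relevant information escapes the Kripke reduct: how many chains there are (your $\Psi$), or whether an extra fibre $K_f(\tau_0)$ is inhabited (your variant from Theorem~\ref{thm:strict-inclusion-sketch}). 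So your argument establishes the first sentence of the theorem, but the ``in particular'' clause only in this proof-relevant reading. Read literally as a truth-valued property, that clause fails, since bisimulations preserve the existence of matching paths. You were right to substitute a counting property.

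\textbf{Smaller points.}

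\emph{Isomorphic reducts.} Since $U(M_1)\cong U(M_2)$ in both of your constructions, you only need invariance of $\mu$-calculus semantics under isomorphism, not the approximant argument. It also means $\Psi$ is not a function of the Kripke structure at all.

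\emph{A stronger witness.} If you want a property that genuinely fails bisimulation invariance on Kripke structures themselves, compare a point with one $R$-successor to a point with two equally labelled successors, and count distinct one-step chains. This also removes your reliance on the paper's $\mu$-calculus having a single unlabelled $R$: with event-indexed modalities, $\langle e_2\rangle\top$ would separate your $s_0$ and $t_0$.

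\emph{Arity.} Make $\Psi$ unary, e.g.\ $\exists s'.\Psi(s,s')$, before comparing it with $\mu$-formulas.

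\emph{Well-formedness.} The obstacle you flag is not about decidability. By the stated rules, $\exists$ ranges only over $\Uc$-types with $\Prop$ bodies, and contexts extend only by $\Uc$-types. Nothing turns $\Id_A(a,b):\Uc_i$ into a $\Prop$. So neither $\neg\Id(c_1,c_2)$ nor $\exists k:K_f(\tau_0).\top$ is literally well-formed. Your fallback therefore has the same status as your primary formula: the separation is semantic, exactly as in the paper's own Theorem~\ref{thm:strict-inclusion-sketch}.
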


\begin{proof}[Proof idea]
The $\mu$-calculus defines truth via reachability in Kripke structures and can only express bisimulation-invariant properties. NM-DEKL$^3_\infty$'s constructive layer ($\TypeL$) can define evidence objects (e.g.\ causal chains, path constructions), which are non-bisimulation-invariant. So NM-DEKL$^3_\infty$ can express such properties; the $\mu$-calculus cannot.
\end{proof}

\subsubsection{Comparison with $\mu$-calculus}

\begin{theorem}[Expressiveness: $\mu$-calculus vs.\ NM-DEKL$^3_\infty$]
\label{thm:mu-vs-nmdekl}
Under Kripke semantics: the $\mu$-calculus expresses only bisimulation-invariant properties (reachability); NM-DEKL$^3_\infty$ can express non-bisimulation-invariant properties (e.g.\ ``there exists a concrete causal proof chain''), which depend on constructive structure and go beyond the $\mu$-calculus.
\end{theorem}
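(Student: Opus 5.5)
The plan has two halves, matching the two claims of Theorem~\ref{thm:mu-vs-nmdekl}: an upper bound on what the $\mu$-calculus can express, and a separation witness for NM-DEKL$^3_\infty$.

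\textbf{Step 1: the $\mu$-calculus is bisimulation-invariant.} I would use the standard argument. Fix a bisimulation $Z\subseteq S_1\times S_2$ between $M_1=(S_1,R_1,V_1)$ and $M_2=(S_2,R_2,V_2)$. Prove by induction on $\varphi$, uniformly in the valuation of free fixpoint variables, that
\[
sZt \;\Longrightarrow\; \bigl(M_1,s\models\varphi \iff M_2,t\models\varphi\bigr),
\]
provided the valuations of free variables are $Z$-closed.
\begin{itemize}
  \item Atoms and Boolean connectives are immediate from the labelling clause of bisimulation.
  \item $\langle R\rangle$ and $[R]$ follow from the forth and back clauses.
  \item For $\mu X.\varphi$, use Kleene approximants $\varphi^\alpha$ indexed by ordinals. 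Show by transfinite induction that each approximant is $Z$-invariant; limits are unions, so invariance passes to the least fixed point. The case $\nu$ is dual.
\end{itemize}
Positivity of $X$ is needed here only to guarantee that the approximants exist.

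\textbf{Step 2: a separating construction.} I would reuse the technique of Theorem~\ref{thm:strict-inclusion-sketch}. Take a Kripke structure $\mathcal{K}$ and two NM models $\mathcal{M}_1,\mathcal{M}_2$ whose Kripke reducts are bisimilar. Choosing identical reducts, as in Step~1 of that proof, is simplest, but the result is stronger with a genuinely bisimilar but non-isomorphic pair: for instance, one state with a self-loop versus two states forming a 2-cycle. The two models differ only in the presheaf $K_f$ in $\mathbf{Set}$. Set $K_f^{(1)}(\tau_0)=\varnothing$ and $K_f^{(2)}(\tau_0)=\{*\}$ at a chosen trace $\tau_0$, and extend both by functoriality. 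The case analysis of Theorem~\ref{thm:presheaf-no-monotone-but-enrich-detailed} shows that presheaf axioms permit such choices, provided the traces extending $\tau_0$ are also empty in $\mathcal{M}_1$, since otherwise restriction would force an inhabitant at $\tau_0$. Then the formula
\[
\Phi:=\exists k{:}K_f(\tau_0).\,\mathrm{CausalProof}(\tau_0)
\]
holds in $\mathcal{M}_2$ and fails in $\mathcal{M}_1$. So the property $\Phi$ is not bisimulation-invariant, and by Step~1 no $\mu$-calculus formula expresses it. This is consistent with Theorems~\ref{thm:strict-expressivity} and~\ref{thm:non-bisimulation-invariant}.

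\textbf{Main obstacle.} The difficulty is making ``property'' and ``expresses'' precise across the two kinds of models. A $\mu$-calculus formula is evaluated on $U(\mathcal{M})$, whereas $\Phi$ is evaluated on $\mathcal{M}$ itself. Before the argument is valid, I must fix that ``$\varphi$ expresses $\Phi$'' means $U(\mathcal{M}),s\models\varphi\iff\mathcal{M}\models\Phi(s)$ for every model $\mathcal{M}$, and that $\tau_0$ is a trace starting at $s$ in both models. With this definition fixed, Step~1 and Step~2 combine directly.
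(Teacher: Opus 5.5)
Your proposal has the same two-part structure as the paper, but it is much more substantive. The paper gives no argument under this theorem. It relies on the proof idea of Theorem~\ref{thm:non-bisimulation-invariant}, which cites bisimulation invariance of the $\mu$-calculus and simply asserts that evidence objects are not bisimulation-invariant. You actually prove invariance, by approximant induction over $Z$-compatible valuations. One small correction there: positivity is what makes the approximant chain increasing and converge to the least fixed point, not what makes the approximants exist. You also transplant the explicit two-model separation that the paper carries out only for LTL/CTL in Theorem~\ref{thm:strict-inclusion-sketch}. Your remark that every extension of $\tau_0$ must also be empty in $\mathcal{M}_1$ is a real correction to that proof's instruction to make the models ``agree on other $\tau$''. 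Your explicit definition of ``$\varphi$ expresses $\Phi$'' supplies what the paper leaves implicit.

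Three things need tightening.

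(i) The load-bearing step is that $\exists k{:}K_f(\tau_0).(\cdot)$ is a proposition true exactly when $K_f(\tau_0)$ has a global element. The paper's formation rule for $\exists_{x:A}$ requires $\Gamma\vdash A:\Uc_i$, and contexts extend only by $\Uc_i$-types. Definition~\ref{def:prop-mu-fragment} also treats quantification over $\TypeL$ evidence as an extra feature. The paper makes the same move, but you should state this syntactic extension and its semantic clause as an explicit assumption.

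(ii) By Definition~\ref{def:causal-reasoning}, $\mathrm{CausalProof}(\tau_0)$ depends only on the Kripke reduct, so it contributes nothing to the separation. You must choose $\tau_0$ so that it is true, or $\Phi$ fails in $\mathcal{M}_2$ as well. Taking $\tau_0=\nil(s)$ works, and it also makes $\Phi(s)$ uniform in $s$ across your two non-isomorphic models.

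(iii) Under your definition of ``expresses'', the identical-reduct pair already rules out every logic evaluated on Kripke structures, bisimulation-invariant or not. So Step~1 is used only for the first clause of the theorem, and the self-loop versus 2-cycle pair is not a stronger result about $\Phi$. If you want a witness that fails bisimulation invariance as a property of pointed Kripke structures themselves, take $\exists e{:}\Event.\,\Step(s,e,s)$. This is on the same footing as the paper's own definition of $R$. It separates your self-loop and 2-cycle models, so Step~1 then does real work, though this witness involves no constructive evidence.
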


\paragraph{Comparison with $\mu$-calculus and HoTT (summary)}
Relative to the $\mu$-calculus, NM-DEKL$^3_\infty$ keeps the Prop-layer embedding (Theorem~\ref{thm:mu-embedding}) and adds non-bisimulation-invariant expressiveness via $K_f$, $\restrict$, and causal chains (Theorem~\ref{thm:non-bisimulation-invariant}). Relative to HoTT/MLTT, the latter model static types and (homotopical) equality; NM-DEKL$^3_\infty$ models have a dynamic constructive layer (presheaf and contravariance) for knowledge evolution and failure. So NM-DEKL$^3_\infty$ strictly extends the $\mu$-calculus in expressiveness and has a clear position for dynamic/non-monotonic reasoning versus HoTT/MLTT.

\begin{corollary}[Strict expressiveness inclusion]
\label{cor:strict-embedding}
NM-DEKL$^3_\infty$ can express non-bisimulation-invariant properties that the $\mu$-calculus cannot. So $\mu\text{-calculus}\subsetneq\text{NM-DEKL}^3_\infty$.
\end{corollary}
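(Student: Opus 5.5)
The corollary has two halves, and the plan is to prove them separately: first the inclusion $\mu\text{-calculus}\subseteq\text{NM-DEKL}^3_\infty$, then a separating property. Together they give strictness.

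\emph{Inclusion.} Invoke Theorem~\ref{thm:mu-embedding}, or its refined form Theorem~\ref{thm:mu-to-prop-sound}. For every $\mu$-calculus formula $\varphi$, the $\Prop$-layer formula $\mathsf{Enc}(\varphi)$ satisfies $M,s\models\varphi\iff\vdash\mathsf{Enc}(\varphi)(s)$, once $\State$ is interpreted as $S$ and $\widehat{p}$ as $V(p)$. Every $\mu$-definable class of pointed Kripke structures is therefore definable in NM-DEKL$^3_\infty$. Since the NM models used here are built over a Kripke skeleton through the forgetful functor $U$ of Theorem~\ref{thm:strict-inclusion-sketch}, I will phrase expressiveness uniformly as ``definable classes of pointed NM models''. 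A $\mu$-formula then defines the class of those $(\mathcal{M},s)$ with $U(\mathcal{M}),s\models\varphi$.

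\emph{Separation.} I will reuse the two-model technique from part (II) of the proof of Theorem~\ref{thm:strict-inclusion-sketch}. Fix a Kripke structure $\mathcal{K}$ and build $\mathcal{M}_1,\mathcal{M}_2$ in $C=\mathbf{Set}$ with $U(\mathcal{M}_1)=U(\mathcal{M}_2)=\mathcal{K}$. Choose $K_f^{(1)}(\tau_0)=\varnothing$ and $K_f^{(2)}(\tau_0)=\{*\}$, and keep all other fibres equal. Functoriality needs some care: there is no map from a nonempty set into $\varnothing$, so I will take $K_f^{(1)}(\tau')=\varnothing$ for every extension $\tau'$ of $\tau_0$ in \emph{both} models, restricted to the cone above $\tau_0$. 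I will then check the presheaf laws, using Theorem~\ref{thm:restriction-iff-presheaf} to verify restriction structure instead of full functoriality. Take $\Phi(\tau_0):=\exists k:K_f(\tau_0).\top$, the ``causal evidence exists'' instance of Theorem~\ref{thm:non-bisimulation-invariant}. Then $\mathcal{M}_2\models\Phi$ and $\mathcal{M}_1\not\models\Phi$.

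On the other side, $\mu$-calculus semantics factors through $U$. More strongly, it is bisimulation-invariant, which is the standard result that the paper cites as Theorem~\ref{thm:mu-vs-nmdekl}. Because the identity relation on $\mathcal{K}$ is a bisimulation between $U(\mathcal{M}_1)$ and $U(\mathcal{M}_2)$, no $\varphi$ separates the two models, and so $\Phi$ is not $\mu$-expressible. Combining this with the inclusion gives $\mu\text{-calculus}\subsetneq\text{NM-DEKL}^3_\infty$.

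\emph{Main obstacle.} I expect the hardest step to be making ``expressible'' precise on both sides. The two logics have different model classes, so the comparison has to go through $U$. There is also a presheaf subtlety: an empty fibre at $\tau_0$ forces empty fibres on every extension, so the two models must be built so that they still share the Kripke skeleton. I would first try taking $\mathcal{T}_f$ to be the two-object category from the proof of Theorem~\ref{thm:presheaf-no-monotone-but-enrich-detailed}, with $\tau_0$ as the top object and hence no extensions. In that case the functoriality check is trivial.
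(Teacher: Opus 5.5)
Your argument is correct. It has the same overall shape as the paper's: inclusion from the $\Prop$-layer embedding, and strictness from an evidence-dependent property. The separation step, however, is argued differently.

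\textbf{How the two routes differ.} In the paper the corollary is read off from Theorem~\ref{thm:non-bisimulation-invariant}. That argument says the $\mu$-calculus is bisimulation-invariant, and asserts that evidence-carrying properties such as ``there is a constructive causal proof chain'' are not, without exhibiting any pair of bisimilar models on which they differ. You instead transplant the two-model construction that the paper uses only for LTL/CTL (part (II) of Theorem~\ref{thm:strict-inclusion-sketch}). This gives an explicit witness: two NM models with literally the same skeleton $U(\mathcal{M}_1)=U(\mathcal{M}_2)$ that disagree on $\exists k:K_f(\tau_0).\top$.

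\textbf{What each buys.} Your version turns the non-invariance claim into an actual proof, and needs only that $\mu$-semantics factors through $U$. Bisimulation invariance is not even required, since the skeletons coincide on the nose. It also pins the non-invariance on the one component $U$ forgets, the evidence in $K_f$, rather than on causal-chain data built from $\Step$. The paper's route is shorter, but leaves its key step asserted rather than shown.

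\textbf{Your ``main obstacle''.} This is largely self-imposed. Nothing requires $K_f^{(1)}$ and $K_f^{(2)}$ to agree away from $\tau_0$. The presheaf constraints also never interact with the Kripke skeleton, because $U$ forgets $K_f$. Taking $K_f^{(1)}$ to be the constant empty presheaf and $K_f^{(2)}$ the constant singleton presheaf works over any $\mathcal{T}_f$, with trivial functoriality.

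By contrast, your first plan of keeping the other fibres equal can fail. Suppose $\tau_0$ lies on a cycle through another state $\tau'$ of $\mathrm{FreeCat}(\State,\Event)$, so there are morphisms $\tau_0\to\tau'\to\tau_0$. Then $K_f^{(1)}(\tau')$ is forced empty, because it must map into $K_f^{(1)}(\tau_0)=\varnothing$. Meanwhile $K_f^{(2)}(\tau')$ must receive a map from $\{*\}$, so it cannot be empty. The two fibres cannot agree, and your choice of empty fibres on extensions ``in both models'' is inconsistent. Your acyclic two-object fallback avoids this.

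\textbf{Shared caveat.} The stated formation rule for $\exists$ only quantifies over $A:\Uc_i$, whereas $K_f(\tau_0):\TypeLi{0}$. So $\exists k:K_f(\tau_0).\top$ presupposes an extension of that rule to $\TypeL$ domains. The paper tacitly makes the same assumption (cf.\ Definition~\ref{def:prop-mu-fragment} and Theorem~\ref{thm:strict-inclusion-sketch}), so this is not a gap specific to your proof.
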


\paragraph{Interpretation}
NM-DEKL$^3_\infty$ can express all $\mu$-calculus properties plus non-bisimulation-invariant ones (e.g.\ constructive evidence such as causal chains). The constructive layer provides a framework for knowledge evolution and causal reasoning beyond truth-value semantics.


\section{Conclusion}

\paragraph{System and main results}
We presented NM-DEKL$^3_\infty$, a three-layer non-monotone evolving dependent type logic. We defined its syntax and semantics and proved Soundness and Equational Completeness; we constructed the syntactic model and proved it initial in the model category, from which equational completeness follows.

\paragraph{Theoretical depth}
We gave a precise definition of \emph{failure} (Definition~\ref{def:failure}) and its relation to non-monotonicity (Theorem~\ref{thm:nonmono-iff-failure}); we bounded the effect of $\mu/\nu$ on non-monotonicity (Theorem~\ref{thm:mu-nu-nonmono}); we combined causal reasoning with the constructive layer (Definition~\ref{def:causal-reasoning}, Theorem~\ref{thm:constructive-logic}); we proved the syntactic model is a \emph{free object} (Theorem~\ref{thm:free-object}); and we showed non-bisimulation-invariant expressiveness (Theorem~\ref{thm:non-bisimulation-invariant}). The system provides a formal basis for dynamic, non-monotonic knowledge modelling.

\paragraph{Applications and examples}
NM-DEKL$^3_\infty$ can support dynamic knowledge reasoning, formalisation of causal and counterfactual reasoning, and modelling under uncertainty (e.g.\ decision support, medical or legal reasoning). Potential applications include formal verification of evolving and causal reasoning on knowledge graphs, and type-safe modelling of non-monotonic belief update.

\paragraph{Comparison with MLTT/HoTT}
MLTT and HoTT model \emph{static} types and (homotopical) equality; NM-DEKL$^3_\infty$ models have a \emph{dynamic} constructive layer (presheaf $K_f$ and $\restrict$) for knowledge evolution and failure, providing type-theoretic tools for \emph{dynamic} and \emph{non-monotonic} reasoning that were previously less formalised.

\paragraph{Comparison with $\mu$-calculus and future work}
Relative to the $\mu$-calculus, NM-DEKL$^3_\infty$ can express non-bisimulation-invariant properties (Theorem~\ref{thm:non-bisimulation-invariant}) and unifies propositional fixed points with the constructive layer. Initiality and the free-object theorem fix its place in dependent type theory for dynamic reasoning; deeper comparison with homotopy categories in HoTT is left for future work.

\section{Appendix}

\subsection{Full proof (I): Initiality theorem}
\label{app:initiality-full}

We give the full inductive proof of Theorem~\ref{thm:initiality-skeleton} (the syntactic model $\mathcal{S}$ is initial in the model category).

\subparagraph{(I) Existence: define $F_{\mathcal{M}}$}
Let $\mathcal{M}$ be any NM-DEKL$^3_\infty$ model. Define the candidate morphism $F_{\mathcal{M}}=(F_c,F_t,F_\ell,F_\Prop)$ and check it satisfies Definition~\ref{def:model-morphism}. (I.1) On contexts and substitutions: $F_c([\Gamma]):=\llbracket\Gamma\rrbracket_{\mathcal{M}}$, $F_c([\sigma]):=\llbracket\sigma\rrbracket_{\mathcal{M}}$; well-defined modulo $\equiv$ by soundness; identity and composition by Lemma~\ref{lem:substitution}. (I.2) On computational types/terms: $F_{\Ty_c}([A]):=\llbracket A\rrbracket_{\mathcal{M}}$, $F_{\Tm_c}([t]):=\llbracket t\rrbracket_{\mathcal{M}}$; substitution coherence by Lemma~\ref{lem:substitution}.

(I.3) Trajectory part $F_t$: If $\mathcal{T}_f$ is the same fixed small category in all models, set $F_t=\mathrm{Id}_{\mathcal{T}_f}$; otherwise define $F_t$ as the interpretation of trajectory objects/morphisms in $\mathcal{M}$. (I.4) Constructive layer $F_\ell$: For $[B]$ ($\Gamma\vdash B:\TypeL$) set $F_\ell([B]):=\llbracket B\rrbracket_{\mathcal{M}}$; for $[u]$ ($\Gamma\vdash u:B$) set $F_\ell([u]):=\llbracket u\rrbracket_{\mathcal{M}}$. Substitution coherence by Lemma~\ref{lem:substitution}(3)(4); restriction preservation by Lemma~\ref{lem:restriction}. (I.5) Prop $F_\Prop$: For $[P]$ ($\Gamma\vdash P:\Prop$) set $F_\Prop([P]):=\llbracket P\rrbracket_{\mathcal{M}}$; proof terms go to $\llbracket p\rrbracket_{\mathcal{M}}$ (Prop quotiented by proof-irrelevance). Connective and substitution preservation by Theorem~\ref{thm:soundness} and Lemma~\ref{lem:substitution}(5). (I.6) Identity: $F_{\mathcal{M}}$ preserves $\refl$ and $J$ by the model's Id-structure and Soundness. So $F_{\mathcal{M}}$ is a model morphism; existence done.

\subparagraph{(II) Uniqueness.}
Let $G:\mathcal{S}\to\mathcal{M}$ also be a model morphism. Contexts: by induction on context formation (empty and comprehension), $G([\Gamma])=F_{\mathcal{M}}([\Gamma])$. Substitutions: by induction, $G([\sigma])=F_{\mathcal{M}}([\sigma])$. Types and terms: syntax is generated by the rules and both morphisms preserve them (including restriction and Id), so by induction $G=F_{\mathcal{M}}$ on generators, hence everywhere. So $\mathcal{S}$ is initial.\hfill$\square$

\medskip
\subsection{Full proof (II): Soundness theorem (rule induction)}
\label{app:soundness-full}

Full proof outline for Theorem~\ref{thm:soundness}: by structural induction on the derivation rules, verify that if $\Gamma\vdash J$ then $\llbracket J\rrbracket_{\mathcal{M}}$ lies in the corresponding semantic object. \textbf{Computational layer}: (1) If $J$ is $\Gamma\vdash A:\Uc_i$, then $\llbracket A\rrbracket_{\mathcal{M}}\in\Ty_c(\llbracket\Gamma\rrbracket_{\mathcal{M}})$ by the CwF and type-formation rules. (2) If $J$ is $\Gamma\vdash t:A$, then $\llbracket t\rrbracket_{\mathcal{M}}\in\Tm_c(\llbracket\Gamma\rrbracket_{\mathcal{M}},\llbracket A\rrbracket_{\mathcal{M}})$ by the CwF and term rules. Substitution cases by Lemma~\ref{lem:interp-coherence-subst}. \textbf{Constructive layer}: (3)(4) $\Gamma\vdash B:\TypeL$ and $\Gamma\vdash u:B$ give $\llbracket B\rrbracket_{\mathcal{M}}\in\Ty_\ell(\llbracket\Gamma\rrbracket_{\mathcal{M}})$ and $\llbracket u\rrbracket_{\mathcal{M}}\in\Tm_\ell(\ldots)$ by presheaf and Lemma~\ref{lem:interp-coherence-restrict}. \textbf{Propositional layer}: (5) $\Gamma\vdash P:\Prop$ gives $\llbracket P\rrbracket_{\mathcal{M}}\in\Prop(\llbracket\Gamma\rrbracket_{\mathcal{M}})$ (Prop as $\Omega$-fibre). Each rule corresponds to a semantic construction; inductive steps by closure and the coherence lemmas. So Soundness holds.\hfill$\square$

\medskip
\subsection{Syntax of NM-DEKL$^3_\infty$}
\subsubsection*{A minimal substitution example}

(In this example $\Vec(A,n)$ is a dependent vector type, $\Type$ corresponds to $\TypeL$ in the main text; for illustration only.) Let $\Gamma=(x:\Nat,y:\Vec(A,x))$ and $\Delta=(u:\Nat)$. A substitution from $\Delta$ to $\Gamma$ is $\Delta\vdash\sigma:\Gamma$ with $\sigma=(u,\nil)$: variable $x$ is replaced by $u$, and $y$ by $\nil$ with $\Delta\vdash\nil:\Vec(A,u)$. If $\Gamma\vdash A'(x,y):\TypeL$, then $\Delta\vdash A'[\sigma]\equiv A'(u,\nil):\TypeL$. If $\Gamma\vdash t(x,y):A'(x,y)$, then $\Delta\vdash t[\sigma]\equiv t(u,\nil):A'(u,\nil)$. So the judgement $\Delta\vdash\sigma:\Gamma$ is \emph{assigning to each variable in $\Gamma$ a term from $\Delta$, instantiating types/terms in $\Gamma$ to types/terms in $\Delta$}.

\end{document}